\newtheorem{definitionenv}{Definition}
\newtheorem{lemmaenv}[definitionenv]{Lemma}
\newtheorem{theoremenv}[definitionenv]{Theorem}
\newtheorem{corollaryenv}[definitionenv]{Corollary}
\newtheorem{propositionenv}[definitionenv]{Proposition}
\newtheorem{conjectureenv}[definitionenv]{Conjecture}
\newtheorem{remarkenv}[definitionenv]{Remark}
\newenvironment{remark}{\begin{remarkenv}\rm}{\end{remarkenv}}
\newcommand{\er}{\end{remark}}
\newtheorem{exampleenv}{Example}
\newtheorem{app-lemmaenv}[section]{Lemma}
\newenvironment{definition}{\begin{definitionenv}\rm}{\end{definitionenv}}
\newenvironment{lemma}{\begin{lemmaenv}\rm}{\end{lemmaenv}}
\newenvironment{theorem}{\begin{theoremenv}\rm}{\end{theoremenv}}
\newenvironment{corollary}{\begin{corollaryenv}\rm}{\end{corollaryenv}}
\newenvironment{proposition}{\begin{propositionenv}\rm}{\end{propositionenv}}
\newenvironment{app-lemma}{\begin{app-lemmaenv}\rm}{\end{app-lemmaenv}}
\theoremstyle{definition}
\newcommand{\cA}{{\mathcal A}}
\newcommand{\cP}{{\mathcal P}}
\newcommand{\bmo}{{\bm 0}}
\newcommand{\bml}{{\bm 1}}
\newcommand{\bma}{{\bm a}}
\newcommand{\bmb}{{\bm b}}
\newcommand{\bmc}{{\bm c}}
\newcommand{\bmi}{{\bm i}}
\newcommand{\bmj}{{\bm j}}
\newcommand{\bmn}{{\bm n}}
\newcommand{\bmk}{{\bm k}}
\newcommand{\bms}{{\bm s}}
\newcommand{\bmt}{{\bm t}}
\newcommand{\bmp}{{\bm p}}
\begin{document}

\title{Semidefinite programming bounds for binary codes from a split Terwilliger algebra}
\author{Pin-Chieh Tseng, Ching-Yi Lai, and Wei-Hsuan Yu
		
\thanks{\footnotesize This article was presented in part at ISIT 2022 \cite{PLY22}.
Pin-Chieh Tseng and Ching-Yi Lai are with the Institute of Communications Engineering, National Yang Ming Chiao Tung University (NYCU), Hsinchu 30010, Taiwan. Pin-Chieh Tseng is also with the Department of Applied Mathematics, NYCU. (emails: pichtseng@gmail.com and cylai@nycu.edu.tw)
			
Wei-Hsuan Yu is with the Department of Mathematics, National Central University, Taoyuan 32001, Taiwan. (email: u690604@gmail.com)
			
}}
	
\date{\today}

\maketitle
	
\begin{abstract}
We study the upper bounds for $A(n,d)$, the maximum size of codewords with length $n$ and Hamming distance at least $d$. Schrijver studied the Terwilliger algebra of the Hamming scheme and proposed a semidefinite program to bound $A(n, d)$. We derive more sophisticated matrix inequalities based on a split Terwilliger algebra to improve Schrijver's semidefinite programming bounds on $A(n, d)$. In particular, we improve the semidefinite programming bounds on $A(18,4)$ to $6551$.
\end{abstract}


\section{Introduction}
In coding theory, one of the classical problems is to determine $A(n, d)$, the maximum size of a binary $(n,d)$ code with length $n$ and minimum distance at least $d$. The (Hamming) distance distribution of a code is considered since  the MacWilliams identities provide a linear relation between the distance distribution and its transform~\cite{MS77}. With this linear relation, the maximum possible size of a code can be formulated as  {a linear programming problem}. Delsarte showed that the transform of the distance distribution is nonnegative and used linear programming techniques to derive upper bounds on  $A(n, d)$~\cite{Del73}.
 
Schrijver considered the distance relations among triplets of codewords and derived positive-semidefinite relations based on the Terwilliger algebra of Hamming scheme \cite{Sch05}. Thus one can formulate a semidefinite program (SDP) on the maximum possible size of a code and derive a semidefinite programming bound on the size of a binary code. Several linear programming upper bounds on $A(n,d)$ are improved since Schrijver's semidefinite constraints imply Delsarte's linear inequalities~\cite{Sch05} by diagonalizing certain positive semidefinite matrices derived from Schrijver's semidefinite constraints in the Bose-Mesner algebra.

This method was later extended  to nonbinary codes by Gijswijt, Schrijver, and Tanaka \cite{GST06}. Gijswijt, Mittelmann, and Schrijver further studied  the distance relations among quadruples of codewords and generalized Schrijver's SDP \cite{GMS12}, called quadruple SDP, which improved many upper bounds for $A(n,d)$. Although an SDP based on the distance relations of $m$-tuple codewords for any $m$ is studied, an SDP based on quadruple distances has already many variables, leading to high computation complexity.
 
On the other hand, there are several known linear constraints for binary codes, including Delsarte's  inequalities~\cite{Del73},   the ones derived by Best~\cite{Bes80} and Mounits, Etzion and Litsyn \cite{MEL02}, which can be used to strengthen linear programming or semidefinite programming bounds on $A(n,d)$. Moreover, Kim and Toan proved additional linear constraints on the variables of Schrijver's SDP  and improved upper bounds on $A(18, 8)$ and $A(19, 8)$ \cite{KT13}. Then, $A(18, 8) = 64$ has later been settled by {\"O}sterg{\aa}rd \cite{Ost19} by using a computer-aided search.

The $A(n,d)$ problem can be regarded as finding the maximum number of an independent set of a graph as follows. Let $E$ be a graph with $2^n$ vertices corresponding to all the binary vectors of length $n$. There is an edge between two binary vectors if their Hamming distance is less than $d$. Now an $(n,d)$ code corresponds to an independent set of $E$. Consequently, Delsarte's linear programming bound can be viewed an upper bound on the independent number of $E$. Moreover, this bound can be extended to serve as an upper bound on the independent number an arbitrary graph \cite{Sch79}. Based on this connection, Laurent  gave a hierarchy for semidefinite programming bounds on $A(n, d)$  and  proposed strengthened bounds~\cite{Lau07}, which improve bounds on $A(20, 8)$ and $A(25, 6)$. 

Upper bounds for several related coding problems in various spaces can also be derived using semidefinite programming techniques. For instance, Bachoc and Vallentin  studied SDPs for codes in Hamming balls, projective spaces and spherical codes (kissing number problems) \cite{Bac10,BV08}. Barg and Yu also used semidefinite programming techniques to obtain better upper bounds for spherical two-distance sets and equiangular lines~\cite{BY13, BY14}.
 
In this paper, we would like to study Schrijver's SDP and derive additional semidefinite constraints. One can define a split distance distribution of a code, and  derive a split version of Delsarte's inequalities, which provide subtler linear constraints~\cite{Sim95}.
Recently, split Hamming weight distributions and their MacWilliams identities have been studied in various quantum codes~\cite{LHL16,LA18,ALB20}. It has been demonstrated that linear programming bounds on quantum codes can be improved with additional constraints from  split MacWilliams identities~\cite{LA18}. 
 
Inspired by the effects of split distance or weight distributions in linear programming, we would like to study a similar notion in Schrijver's SDP. Consider a partition of the support of a code with two subsets. We define a split Terwilliger algebra with respect to the partition. Similar to the derivation of Schrijver's semidefinite constraints, we show that this split Terwilliger algebra can be {block}-diagonalized to derive finer positive-semidefinite constraints. Moreover, we show that these split Schrijver's semidefinite constraints also imply corresponding split Delsarte's inequalities, and hence they are natural generalizations of Schrijver's  constraints. Together with Schrijver's semidefinite constraints and the known linear constraints in the literature, we have a strengthened SDP on $A(n,d)$. In particular, we improve the semidefinite programming bound on $A(18,4)$ to $6551$, while the previously known upper bound is $6552$, by linear programming with Delsarte's inequalities and Best's inequalities~\cite{Bes80}. The number has not been updated since more than four decades ago~\cite{BBMOS_1987,Brouwer}. The numerical error of this SDP program can be pessimistically estimated from its dual SDP as suggested by Gijswijt~\cite{Gij05}. Using this method, we are able to verify that $A(18,4) \leq 6551$.

One of our semidefinite constraint $R_{s}$ can be derived from the quadruple SDP. It is not clear whether the other semidefinite constraint on $R_{s}'$ obtained by a split Terwilliger algebra is included in the constraints of the quadruple SDP as well. However, in our experiment, we are able to improve the bound $A(18,4)$ over the quadruple SDP in \cite{GMS12}.

All the results and proofs can be generalized to a split Terwilliger algebra on $m$ subsets of a partition with  $m \leq n$ (called \textit{$m$-split Terwilliger algebra}), from which we may derive more additional positive-semidefinite constraints.  
However, this $m$-split Terwilliger induces an SDP with  $O((\frac{n}{m})^{3m})$ variables and may not be practical in implementation with large $m$.

Finally, we mention $m$-split Terwilliger algebras for the Hamming scheme, which might allow us to apply our method to other association schemes. To implement an SDP program, one of the key points is to block diagonalize the algebra in use
and  Gijswijt has developed a general method to handle this problem~\cite{Gij09}. Gijswijt's method was refined and extended to nonbinary codes by Litjens, Polak and Schrijver \cite{LPS17}. Moreover, the method can be further generalized to the groups of the form $(G^{n_{1}} \rtimes S_{n_{1}}) \times \cdots \times (G^{n_{m}} \rtimes S_{n_{m}})$ with $\sum_{i = 1}^{m}n_{m} = n$ and $S_{n_{i}}$ are symmetric groups \cite{Pol19_1}. Together with those approaches, one can calculate the block diagonalization formula of split Terwilliger algebras in various types of codes, such as constant-weight codes and nonbinary codes with Hamming or Lee distances.

The paper is organized as follows. We introduce the Terwilliger algebra  of the Hamming scheme and Schrijver’s SDP. In Section~\ref{sec:split} we define a split Terwilliger algebra and derive semidefinite constraints. Then we provide our SDP together with the linear constraints in the literature in Section~\ref{sec:SDP}. A generalization of the method on $m$-split Terwilliger algebra is given in Section~\ref{sec:gener}. Finally, we conclude our work in Section~\ref{sec:conslus}.

\section{Terwilliger algebra and Schrijver's SDP}

Let ${\cal P}$ be the power set of $\{1, \dots, n\}$. A binary code $C$ is a subset of $\cP$. For $X,Y \in \cP$, denote 
{
\begin{equation*}
    X\Delta Y= \{a\in\{1,\dots,n\}: a\in (X \setminus Y) \cup (Y \setminus X) \}.
\end{equation*}}
Let $\lvert S \rvert$ denote the size of a set $S \in \cP$. Hence the (Hamming) distance of $X$ and $Y \in \cP $ is  $\lvert X \Delta Y \rvert$. The distance distribution of the code $C$ is 
\begin{equation}
	A_{j} = \frac{1}{\lvert C \rvert} \sum_{x \in C}\{ y \in C : \lvert x \Delta y \rvert = j\} \label{eq:dist_distriubtion}
\end{equation}
for $j=0,\dots, n$. The minimum distance $d$ of $C$ is the minimum Hamming distance of two distinct elements in $C$ and hence 
$$d= \min\{j>0: A_j>0\}.$$
Note that for $\lvert C \rvert \leq 1$, its minimum distance   is defined to be $\infty$. $C$ is said to be an $(n,d)$ code if $C$ has minimum distance $d$ and length $n$. See more details about codes in~\cite{MS77}.
	
We review the Terwilliger algebra of the Hamming scheme \cite{Ter92,Ter93,BBIT21}. Let $G$ be the group of all distance-preserving automorphisms of ${\cal P}$. Consider the action of $G$ on ${\cal P} \times {\cal P}$ defined by $g(X, Y) = (g X, g Y)$ for $g \in G$  and  $(X, Y) \in {\cal P} \times {\cal P}$ with orbits ${\cal O}_{1}, \dots, {\cal O}_{m}$ for some $m$. For each ${\cal O}_{u}$, we define a $\lvert {\cal P} \rvert \times \lvert {\cal P} \rvert$ matrix $M_{{\cal O}_{u}}$, indexed by the elements in $\cP$, as
\begin{equation*}
	(M_{{\cal O}_{u}})_{X, Y} = \left\{
	\begin{aligned}
		&1, \text{ if } (X, Y) \in {\cal O}_{u};\\
		&0, \text{ otherwise.}
	\end{aligned} \right.
\end{equation*}
Observe that $(X, Y)$ and $(U, V)\in {\cal P} \times {\cal P}$ belong to the same orbit if and only if there is an automorphism $g\in G$ such that $gX = U$ and $gY = V$, that is, if and only if $\lvert X \rvert = \lvert U \rvert,$ $\lvert Y \rvert = \lvert V \rvert,$ \text{ and } $\lvert X \Delta Y \rvert = \lvert U \Delta V \rvert$. Thus, $M_{{\cal O}_{u}}$ can be rewritten as
\begin{equation*}
	(M_{i, j}^{t})_{X, Y} = \left\{
	\begin{aligned}
		&1, \text{ if } \lvert X \rvert = i, \lvert Y \rvert = j, \lvert X \cap Y \rvert = t;\\
		&0, \text{ otherwise,}
	\end{aligned} \right.
\end{equation*}
where each orbit ${\cal O}_{u}$ is represented by some $(i,j,t)$ for $i, j, t \in \{0, \dots, n\}$ with $i+j-2t \in \{0, \dots, n\}$. Let ${\cal A}_{n}$ be the collection of all linear combinations of $\{M_{i, j}^{t}\}$ over the complex field $\mathbb{C}$.  Then ${\cal A}_{n}$ is closed under matrix multiplication and adjoint. Moreover, ${\cal A}_{n}$ is a $\mathbb{C}^{*}$-algebra, called the \textit{Terwilliger algebra} of the Hamming scheme. ${\cal A}_{n}$ is finitely generated with dimension $\binom{n+3}{3}$. 
	
Schrijver described a block diagonal formula for the Terwilliger algebra of the Hamming scheme. 
\begin{theorem} \cite[Theorem 1]{Sch05} \label{thm:ter_blo}
	There is an isomorphism from ${\cal A}_{n}$ to $\bigoplus_{k = 0}^{\lfloor \frac{n}{2} \rfloor} \mathbb{C}^{(n-2k+1) \times (n-2k+1)}$ that maps $A = \sum_{i, j, t}x_{i, j}^{t}M_{i, j}^{t} \in {\cal A}_{n}$ to $\bigoplus_{k = 0}^{\lfloor \frac{n}{2} \rfloor} B_{k}$, where
	\begin{equation*}
		B_{k} = \left( \sum_{t} \binom{n-2k}{i-k}^{-\frac{1}{2}} \binom{n-2k}{j-k}^{-\frac{1}{2}} \beta_{i, j, k}^{t} x_{i, j}^{t} \right)_{i, j = k}^{n-k}
	\end{equation*}
	with
	\begin{equation*}
		\beta_{i, j, k}^{t} = \sum_{u = 0}^{n} (-1)^{u-t} \binom{u}{t} \binom{n-2k}{u-k}\binom{n-k-u}{i-u}\binom{n-k-u}{j-u}.
	\end{equation*}
\end{theorem}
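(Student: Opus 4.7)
The plan is to combine the double-centralizer theorem with an explicit choice of orthonormal bases for the irreducible $S_{n}$-submodules of $\mathbb{C}^{{\cal P}}$, and then read off the scalar matrix coefficients of $M_{i,j}^{t}$ in the adapted basis. First I would observe that ${\cal A}_{n}$ is precisely the commutant of the coordinate-permutation action of $G = S_{n}$ on $\mathbb{C}^{{\cal P}}$: a matrix commutes with every permutation matrix $P_{g}$ if and only if its entries are constant on the $G$-orbits of ${\cal P} \times {\cal P}$, and these orbits are parametrized by the triples $(i, j, t)$ exactly as in the definition of $M_{i, j}^{t}$. By the double-centralizer theorem, if $\mathbb{C}^{{\cal P}} \cong \bigoplus_{\lambda} V_{\lambda}^{\oplus m_{\lambda}}$ is the decomposition into irreducible $S_{n}$-modules, then ${\cal A}_{n} \cong \bigoplus_{\lambda} \mathbb{C}^{m_{\lambda} \times m_{\lambda}}$.

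Next I would decompose $\mathbb{C}^{{\cal P}} = \bigoplus_{i=0}^{n} \mathbb{C}^{{\cal P}_{i}}$, writing ${\cal P}_{i} = \{X \in {\cal P} : \lvert X \rvert = i\}$. Each summand is the Young permutation module on $i$-subsets, and Young's rule gives the multiplicity-free decomposition $\mathbb{C}^{{\cal P}_{i}} \cong \bigoplus_{k=0}^{\min(i,\, n-i)} S^{(n-k, k)}$ into Specht modules. Summing over $i$, the Specht module $S^{(n-k, k)}$ appears in $\mathbb{C}^{{\cal P}}$ once for each $i \in \{k, k+1, \ldots, n-k\}$, i.e.\ with total multiplicity $m_{k} = n - 2k + 1$. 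This already establishes the abstract isomorphism ${\cal A}_{n} \cong \bigoplus_{k=0}^{\lfloor n/2 \rfloor} \mathbb{C}^{(n-2k+1) \times (n-2k+1)}$ and fixes $i, j \in \{k, \ldots, n-k\}$ as the block indices of $B_{k}$. As a consistency check, $\sum_{k} (n-2k+1)^{2} = \binom{n+3}{3}$, matching the known dimension of ${\cal A}_{n}$.

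Third, to extract the explicit formula for $B_{k}$, I would fix a canonical copy $H_{k} \subset \mathbb{C}^{{\cal P}_{k}}$ of $S^{(n-k, k)}$ (the kernel of the deletion map into $\mathbb{C}^{{\cal P}_{k-1}}$), and for each $i \in \{k, \ldots, n-k\}$ build an $S_{n}$-equivariant isometric embedding $U_{k, i}: H_{k} \to \mathbb{C}^{{\cal P}_{i}}$ by composing the natural ``sum over supersets of size $i$'' lifting with the normalization $\binom{n-2k}{i-k}^{-1/2}$. This normalization is precisely what makes $U_{k, i}$ an isometry, and it is the source of the square-root binomial factors in the theorem. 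Schur's lemma then forces $U_{k, j}^{*} M_{i, j}^{t} U_{k, i}$ to be a scalar multiple of the identity on $H_{k}$; identifying that scalar with $\beta_{i, j, k}^{t}$ (multiplied by the normalization and by $x_{i, j}^{t}$) yields the stated formula for $(B_{k})_{i, j}$, and linearity in the $x_{i, j}^{t}$ promotes the construction to an isomorphism on all of ${\cal A}_{n}$.

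The main obstacle, which I expect to be the most laborious step, is the closed-form evaluation of $\beta_{i, j, k}^{t}$. I would compute it as $\langle \phi, U_{k, j}^{*} M_{i, j}^{t} U_{k, i} \phi \rangle$ for a conveniently chosen harmonic test vector $\phi \in H_{k}$ associated with a fixed reference $k$-set $T$. Expanding the inner product by classifying pairs $(X, Y)$ with $\lvert X \rvert = i$, $\lvert Y \rvert = j$, and $\lvert X \cap Y \rvert = t$ according to an intermediate parameter $u$ (roughly the size of $T \cap (X \cap Y)$, or a closely related intersection statistic) produces the alternating factor $\sum_{u} (-1)^{u-t} \binom{u}{t}$ from the inclusion-exclusion built into $\phi$, while the remaining binomials $\binom{n-2k}{u-k}$, $\binom{n-k-u}{i-u}$, and $\binom{n-k-u}{j-u}$ arise from choosing the intersection pattern on $[n] \setminus T$ and from counting the extensions of that pattern to subsets of sizes $i$ and $j$. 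Repeated use of Vandermonde's convolution packages these counts into the displayed closed form. The delicate bookkeeping is to track all normalizations carefully and to verify that the resulting scalar is independent of the specific harmonic vector $\phi$ chosen, as guaranteed by Schur's lemma.
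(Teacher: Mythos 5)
The present paper does not actually prove this theorem---it imports it verbatim from \cite[Theorem 1]{Sch05}---so the comparison must be against Schrijver's original argument. Your proposal is correct and follows the same skeleton, but recasts it in explicit representation-theoretic language. Your $H_k$ is Schrijver's space $L_k$, the kernel of the down-map $M_{k-1,k}^{k-1}$ inside $\mathbb{C}^{\mathcal{P}_k}$, of dimension $\binom{n}{k}-\binom{n}{k-1}$; your isometries $U_{k,i}$ with the factor $\binom{n-2k}{i-k}^{-1/2}$ are exactly his vectors $u_{k,b,i}=\binom{n-2k}{i-k}^{-1/2}M_{i,k}^{k}b$ for $b$ in an orthogonal basis of $L_k$; and your Schur's-lemma step corresponds to his identity $b^{T}M_{k,i}^{k}M_{i,j}^{t}M_{j,k}^{k}b'=\beta_{i,j,k}^{t}\,b^{T}b'$ for $b,b'\in L_k$. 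The difference is packaging rather than substance: Schrijver never names Young's rule, the double-centralizer theorem, or Schur's lemma, instead verifying the orthonormality of the $u_{k,b,i}$ and the vanishing $M_{i,k}^{i}b=0$ for $i<k$, $b\in L_k$, by direct binomial counting, and deducing the block-diagonal shape from those two facts alone. Your abstract framing explains more transparently where the block size $n-2k+1$ comes from and why cross-block entries vanish, but it does not spare you the evaluation of $\beta_{i,j,k}^{t}$, which is the bulk of Schrijver's work. That evaluation is only sketched in your final paragraph; to finish you would need to pin down exactly what the intermediate parameter $u$ counts, show that the alternating factor $(-1)^{u-t}\binom{u}{t}$ genuinely arises from the harmonic structure of $L_k$ rather than being reverse-engineered from the target formula, and track the normalization of the test vector $\phi$ so the scalar comes out independent of its choice---this is precisely the content of the counting lemmas in \cite{Sch05}, which Schur's lemma guarantees exists but does not compute.
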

The formula says ${\cal A}_{n}$ as the direct sum of matrices. Therefore, we can represent the elements of ${\cal A}_{n}$ on a computer with a minimal memory.
	
Now, Schrijver's semidefinite constraints for a nontrivial code $C$~\cite{Sch05}  can be derived as follows. Consider the action of $G$. Let $\Pi = \{\pi \in G \mid \emptyset \in \pi(C)\}$ and $\Pi' = \{\pi \in G \mid \emptyset \notin \pi(C)\}$. Let $\chi^{\pi(C)}$ be the incidence vector (as a column vector) of $\pi(C)$ indexed by ${\cal P}$. Define $ \lvert {\cal P} \rvert \times \lvert {\cal P} \rvert$ matrices
\begin{align*}
	&R = \frac{1}{\lvert \Pi \rvert} \sum_{\pi \in \Pi} \chi^{\pi(C)} (\chi^{\pi(C)})^{T}, \\
	&R' = \frac{1}{\lvert \Pi' \rvert} \sum_{\pi \in \Pi'} \chi^{\pi(C)} (\chi^{\pi(C)})^{T}.
\end{align*}
It is obvious that $R$ and $R'$ are positive semidefinite. Let
\begin{equation*}
	x_{i, j}^{t} = \frac{1}{\lvert C \rvert \binom{n}{i-t, j-t, t}} \lambda_{i, j}^{t},
\end{equation*}
where

$\binom{n}{a, b, c} = \frac{n!}{a!b!c!}$ for $a, b, c \geq 0$ with $a+b+c \leq n$, and

\begin{align*}
	\lambda_{i, j}^{t} = &\lvert \{ (X, Y, Z) \in C^{3} : \lvert X \Delta 
	Y \rvert = i, \lvert X \Delta Z \rvert = j, \lvert (X \Delta Y) \cap (X \Delta Z)\rvert = t\}\rvert,
\end{align*}
for each $i, j, t \in \{0, \dots, n\}$ with $i-t \geq 0$, $j-t \geq 0$, and $i+j-2t \leq n$. $\lambda_{i, j}^{t}$ counts the number of triple codewords in $C$ satisfying certain distance relations.  The following proposition says that $R,R'\in\cA_n$.
\begin{proposition} \cite[Proposition $1$]{Sch05}
	\begin{align*}
		& R = \sum_{i, j, t} x_{i, j}^{t} M_{i, j}^{t} \text{ and }\\
		& R' = \frac{\lvert C \rvert}{2^{n}-\lvert C \rvert}\sum_{i, j, t}(x_{i+j-2t, 0}^{0}-x_{i, j}^{t})M_{i, j}^{t}.
	\end{align*}
\end{proposition}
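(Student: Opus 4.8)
The plan is to prove the two identities in two moves: a symmetry argument placing $R$ and $R'$ in $\cA_n$, and then a double count over the automorphism group $G$ that reads off the coefficient of each $M_{i,j}^{t}$. Recall that $G$ consists of the distance-preserving maps of $\cP$, i.e.\ the compositions of a coordinate permutation with a translation $W\mapsto W\Delta a$; it has order $2^{n}\,n!$ and acts transitively on $\cP$. For the membership in $\cA_n$: if $\rho$ is a coordinate permutation (so $\rho$ fixes $\emptyset$) with permutation matrix $P_\rho$ on $\cP$, then $P_\rho\,\chi^{\pi(C)}=\chi^{(\rho\pi)(C)}$, and $\pi\mapsto\rho\pi$ permutes $\Pi$ (because $\emptyset\in\pi(C)\iff\emptyset\in(\rho\pi)(C)$) and likewise permutes $\Pi'$; hence $P_\rho R P_\rho^{T}=R$ and $P_\rho R'P_\rho^{T}=R'$. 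Since $\cA_n$ is exactly the algebra of matrices commuting with all such $P_\rho$, we may write $R=\sum_{i,j,t}r_{i,j}^{t}M_{i,j}^{t}$ and $R'=\sum_{i,j,t}(r')_{i,j}^{t}M_{i,j}^{t}$, where $r_{i,j}^{t}$ and $(r')_{i,j}^{t}$ are the common entries of $R$ and $R'$ over the pairs $(X,Y)$ with $\lvert X\rvert=i$, $\lvert Y\rvert=j$, $\lvert X\cap Y\rvert=t$; it remains to identify these scalars.

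To compute $r_{i,j}^{t}$, fix such a pair $(X,Y)$ and rewrite $R_{X,Y}=\lvert\Pi\rvert^{-1}\,\lvert\{\sigma\in G:\sigma(\emptyset),\sigma(X),\sigma(Y)\in C\}\rvert$. I would factor each $\sigma$ uniquely as a coordinate permutation $\rho$ followed by a translation $W\mapsto W\Delta Z$, so that the three conditions become $Z\in C$, $\rho(X)\Delta Z\in C$, $\rho(Y)\Delta Z\in C$, and then group the count by the triple $(Z,U,V):=(Z,\rho(X)\Delta Z,\rho(Y)\Delta Z)\in C^{3}$, which is recovered from $\sigma$ as $(\sigma(\emptyset),\sigma(X),\sigma(Y))$. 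The point is that $(Z,U,V)$ is attainable exactly when $(U\Delta Z,V\Delta Z)$ is permutation-equivalent to $(X,Y)$, i.e.\ when $\lvert Z\Delta U\rvert=i$, $\lvert Z\Delta V\rvert=j$, $\lvert(Z\Delta U)\cap(Z\Delta V)\rvert=t$ --- which is precisely the condition defining $\lambda_{i,j}^{t}$, with $(Z,U,V)$ in the role of $(X,Y,Z)$ there --- and that for each such triple the number of admissible $\rho$ is the number of permutations of $\{1,\dots,n\}$ respecting the four-cell partition $X\cap Y$, $X\setminus Y$, $Y\setminus X$, $\overline{X\cup Y}$, namely $n!/\binom{n}{i-t,\,j-t,\,t}$. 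Together with $\lvert\Pi\rvert=\lvert C\rvert\,n!$ (the map $\pi\mapsto\pi^{-1}(\emptyset)$ is $n!$-to-one from $\Pi$ onto $C$), this gives $r_{i,j}^{t}=\lambda_{i,j}^{t}\big/\big(\lvert C\rvert\binom{n}{i-t,\,j-t,\,t}\big)=x_{i,j}^{t}$, which is the first identity.

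For $R'$, the cleanest route is to observe that $\Pi$ and $\Pi'$ partition $G$, so $\lvert\Pi\rvert R+\lvert\Pi'\rvert R'=\sum_{\pi\in G}\chi^{\pi(C)}(\chi^{\pi(C)})^{T}=:S$. Now $S_{X,Y}=\lvert\{\sigma\in G:\sigma(X),\sigma(Y)\in C\}\rvert$ is invariant under \emph{all} of $G$, hence depends only on $\lvert X\Delta Y\rvert$; evaluating at $X=\emptyset$ and $\lvert Y\rvert=\delta$ and running the same factorization (now with two membership conditions) yields $S_{\emptyset,Y}=\lvert\{(Z,U)\in C^{2}:\lvert Z\Delta U\rvert=\delta\}\rvert\cdot\delta!\,(n-\delta)!=\lambda_{\delta,0}^{0}\cdot\delta!\,(n-\delta)!=\lvert C\rvert\,n!\,x_{\delta,0}^{0}$. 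Since the $M_{i,j}^{t}$ have pairwise disjoint supports covering $\cP\times\cP$ and $M_{i,j}^{t}$ is supported on the distance-$(i+j-2t)$ pairs, this says $S=\lvert C\rvert\,n!\sum_{i,j,t}x_{i+j-2t,0}^{0}M_{i,j}^{t}$. Plugging $\lvert\Pi\rvert=\lvert C\rvert\,n!$ and $\lvert\Pi'\rvert=(2^{n}-\lvert C\rvert)\,n!$ into $R'=\lvert\Pi'\rvert^{-1}(S-\lvert\Pi\rvert R)$ then produces $R'=\frac{\lvert C\rvert}{2^{n}-\lvert C\rvert}\sum_{i,j,t}(x_{i+j-2t,0}^{0}-x_{i,j}^{t})M_{i,j}^{t}$, as claimed.

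The only genuinely delicate step is the double count for $r_{i,j}^{t}$: one must set up the permutation-then-translation factorization of the isometries consistently, check that the attainable triples $(Z,U,V)$ are in a natural bijection with the triples enumerated by $\lambda_{i,j}^{t}$, and verify that the permutation count collapses, after dividing by $\lvert\Pi\rvert$, to exactly $\big(\lvert C\rvert\binom{n}{i-t,j-t,t}\big)^{-1}$. Everything else --- the $\cA_n$ membership, the identity $\lvert\Pi\rvert R+\lvert\Pi'\rvert R'=S$, and the final linear algebra --- is routine bookkeeping once that count is in place. A small point worth flagging is that the reduction for $R'$ uses the full isometry group (permutations together with translations), whereas the block structure of $\cA_n$ only records invariance under the coordinate-permutation subgroup.
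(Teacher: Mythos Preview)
Your argument is correct. The paper does not prove this proposition directly (it is cited from \cite{Sch05}), but it does prove the $2$-split analogue, Proposition~\ref{pro:R_R'}, in full, and your approach differs from that one in an interesting way. The paper's method is to introduce, for each $X\in\cP$, the partial average $R_{\rm s}^{X}$ over the isometries sending $X$ to $\emptyset$, then write $R_{\rm s}=\lvert C\rvert^{-1}\sum_{X\in C}R_{\rm s}^{X}$ and $R_{\rm s}'=\lvert\cP\setminus C\rvert^{-1}\sum_{X\notin C}R_{\rm s}^{X}$; the coefficient of each $M$-basis matrix in $R_{\rm s}^{X}$ is read off from the ``centered'' counts $\lambda^{t,t',X}_{i,j,i',j'}$, and for $R_{\rm s}'$ one then evaluates $\sum_{X\in\cP}\lambda^{X}$ by a combinatorial identity. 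Your route instead establishes $\cA_n$-membership via the commutant description, computes the coefficients of $R$ by a single double count over $G$ using the permutation--translation factorization, and then obtains $R'$ from the clean identity $\lvert\Pi\rvert R+\lvert\Pi'\rvert R'=S$ with $S$ the unrestricted sum over $G$. Your treatment of $R'$ is arguably slicker, since the full $G$-invariance of $S$ immediately forces the coefficients to depend only on $\lvert X\Delta Y\rvert$, avoiding the separate binomial identity the paper invokes; the paper's decomposition into the $R^{X}$'s, on the other hand, makes the link between the SDP variables and ``codeword-centered'' counts more explicit and transfers verbatim to the $m$-split setting.
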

By Theorem \ref{thm:ter_blo}, $R$ and $R'$ are  positive semidefinite if and only if for $k = 0, \dots, \lfloor \frac{n}{2} \rfloor$, the following matrices
\begin{align}
	& \left( \sum_{t} \binom{n-2k}{i-k}^{-\frac{1}{2}} \binom{n-2k}{j-k}^{-\frac{1}{2}} \beta_{i, j, k}^{t} x_{i, j}^{t} \right)_{i, j = k}^{n-k}, \label{sdp_1}\\
	& \left( \sum_{t} \binom{n-2k}{i-k}^{-\frac{1}{2}} \binom{n-2k}{j-k}^{-\frac{1}{2}} \beta_{i, j, k}^{t} (x_{i+j-2t, 0}^{0}-x_{i, j}^{t}) \right)_{i, j = k}^{n-k} \label{sdp_2}
\end{align}
are positive semidefinite. Schrijver also showed that $x_{i,j}^t$ satisfy the following constraints. 
\begin{proposition} \cite{Sch05}
	\label{pro:sch}
	Let $C$ be a code with length $n$ and minimum distance at least $d$. Then $x_{i, j}^{t}$'s corresponding to $C$ satisfy the following constraints:
	\begin{equation}
		\begin{array}{cl}
			\mbox{(i)}     &  x_{0, 0}^{0} = 1;\\
			\text{(ii)}    & 0 \leq x_{i, j}^{t} \leq x_{i, 0}^{0};\\
			\text{(iii)}    & x_{i,0}^{0} + x_{j,0}^{0} \leq  1+x_{i, j}^{t};\\
			\text{(iv)}    & x_{i, j}^{t} = x_{i', j'}^{t'}  \mbox{ if } (i, j, i+j-2t) \mbox{ is a permutation of } (i', j', i'+j'-2t');\\
			\text{(v)}    & x_{i, j}^{t} = 0 \mbox{ if }\{i, j, i+j-2t\} \cap \{1, \dots, d-1\} \neq \emptyset;
		\end{array} \label{sdp_3}
	\end{equation}
	Also, we have
	\begin{equation*}
		\lvert C \rvert = \sum_{i}\binom{n}{i}x_{i, 0}^{0}.
	\end{equation*}
\end{proposition}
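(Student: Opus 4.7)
The plan is to derive items (i)–(iii) as elementary inequalities about a joint probability, while (iv), (v) and the trailing identity follow from combinatorial symmetries of $\lambda_{i,j}^{t}$ together with the hypothesis on the minimum distance.

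First I would install the probabilistic reading of $x_{i,j}^{t}$ implicit in $R=\tfrac{1}{|\Pi|}\sum_{\pi\in\Pi}\chi^{\pi(C)}(\chi^{\pi(C)})^{T}$: since the stabilizer of $\emptyset$ acts transitively on pairs of given $(i,j,t)$-type, for any $(X,Y)\in\cP^{2}$ with $|X|=i$, $|Y|=j$, $|X\cap Y|=t$ one has
\[
x_{i,j}^{t}=R_{X,Y}=\Pr_{\pi\in\Pi}\!\big[X,Y\in\pi(C)\big],
\]
with the marginal $x_{i,0}^{0}=\Pr_{\pi\in\Pi}[X\in\pi(C)]$ obtained by taking $Y=\emptyset$, which lies in $\pi(C)$ for every $\pi\in\Pi$. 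From this reading (i)–(iii) drop out at once: $x_{0,0}^{0}=\Pr[\emptyset\in\pi(C)]=1$; nonnegativity of a probability is immediate; $\Pr[A\cap B]\le\Pr[A]$ gives $x_{i,j}^{t}\le x_{i,0}^{0}$; and the Bonferroni bound $\Pr[A]+\Pr[B]\le 1+\Pr[A\cap B]$ gives (iii).

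For (iv) I would show that each permutation $\sigma\in S_{3}$ of the three coordinates of an ordered triple $(X,Y,Z)\in C^{3}$ is a bijection on $C^{3}$ that permutes the pairwise-distance vector $(|X\Delta Y|,|X\Delta Z|,|Y\Delta Z|)=(i,j,i+j-2t)$; reading off from the transposition $X\leftrightarrow Y$, for instance, identifies $\lambda_{i,j}^{t}$ with $\lambda_{i,i+j-2t}^{i-t}$, and the other transpositions are analogous. The remaining point is the symmetry of the normalization, which becomes transparent once $\binom{n}{i-t,j-t,t}$ is rewritten as the four-term multinomial $\binom{n}{i-t,j-t,t,n-i-j+t}$: its four arguments are the sizes of the four Venn regions of $X\Delta Y$ and $X\Delta Z$, and they permute under the same relabelings, so $x_{i,j}^{t}$ depends only on the unordered multiset $\{i,j,i+j-2t\}$.

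Constraint (v) is immediate from the definition of $\lambda_{i,j}^{t}$: any of $i$, $j$, $i+j-2t$ lying in $\{1,\dots,d-1\}$ would force two distinct codewords of $C$ to sit at Hamming distance less than $d$, so $\lambda_{i,j}^{t}=0$. The summation $|C|=\sum_{i}\binom{n}{i}x_{i,0}^{0}$ is the $(j,t)=(0,0)$ specialization: $\lambda_{i,0}^{0}$ collapses to $|C|A_{i}$, hence $x_{i,0}^{0}=A_{i}/\binom{n}{i}$ and $\sum_{i}\binom{n}{i}x_{i,0}^{0}=\sum_{i}A_{i}=|C|$. The only real hurdle is the bookkeeping in (iv), because $\binom{n}{i-t,j-t,t}$ is not manifestly symmetric in the three pairwise distances; once its suppressed fourth (ambient) Venn block is made explicit, the $S_{3}$-symmetry is transparent and the rest is routine.
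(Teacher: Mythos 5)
Your proposal is correct, and for items (i)--(iii) it takes a genuinely different and in fact cleaner route than the paper. The paper derives (ii) and (iii) from the positive semidefiniteness of $R$ and $R'$, invoking the claim that ``a diagonal element of a positive semidefinite matrix would dominate its row entries''; that claim is \emph{false} for general PSD matrices (e.g.\ $\bigl(\begin{smallmatrix}1&2\\2&4\end{smallmatrix}\bigr)$). What actually makes it work here is that $R$ and $R'$ are averages of rank-one outer products $\chi\chi^{T}$ of $0$-$1$ vectors, so the entries admit exactly the probabilistic reading you wrote down, and then $\Pr[A\cap B]\le\Pr[A]$ and the Bonferroni bound $\Pr[A]+\Pr[B]\le 1+\Pr[A\cap B]$ give (ii) and (iii) directly. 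Your derivation of (iii) also avoids the paper's detour through $R'$ and constraint (iv). For (iv), the paper says only ``from the definition directly''; your explicit $S_3$-bijection argument is the content behind that remark, and you correctly identified that the transposition $X\leftrightarrow Y$ sends $(i,j,t)\mapsto(i,\,i+j-2t,\,i-t)$, which permutes the triple $(i-t,\,j-t,\,t)$ of inner Venn blocks. Your observation about the suppressed fourth Venn block $n-i-j+t$ is reasonable: the paper's displayed definition $\binom{n}{a,b,c}=\frac{n!}{a!b!c!}$ is a misprint for $\frac{n!}{a!b!c!(n-a-b-c)!}$ (otherwise the closing identity $|C|=\sum_i\binom{n}{i}x_{i,0}^0$ fails), and with that correction the fourth block is fixed while the first three permute, so either the three-argument or four-argument multinomial is manifestly invariant. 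Items (v) and the trailing identity match the paper's intent. In short: the substance is right, and your treatment of (i)--(iii) replaces a slightly hand-wavy PSD appeal with the correct elementary argument.
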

Note that $\{\binom{n}{i}x_{i, 0}^{0}\}$  is the distance distribution of $C$. Constraints (i) and (iv) are from the definition directly. Consider  $\lvert X \rvert = i$  and $\lvert Y \rvert = j$ and then we have
\begin{align*}
	&(R)_{X,X} = x_{i, i}^{i} = x_{i, 0}^{0}, \quad (R)_{X,Y} = x_{i, j}^{t}.
\end{align*}
The first inequality of Constraint  (ii) follows  because  of the non-negativity of $\lambda_{i,j}^t$ and the second inequality is because $R$ is positive semidefinite and a diagonal element of a positive semidefinite matrix would dominate its row entries.  Constraint (iii) can be similarly derived from $(R)_{X,X}'\geq (R)_{X,Y}'$ and (iv). To see this, we consider $\lvert X \rvert = i$ and $\lvert Y \rvert =j'= i+j-2t$ at $t'= i-t$. Then
\begin{align*}
	&  (R')_{X, X} = x_{0, 0}^{0} - x_{i, 0}^{0}, \\
	& (R')_{X, Y} = x_{i+j'-2t', 0}^{0} - x_{i, j'}^{t'}= x_{j, 0}^{0} - x_{i, i+j-2t}^{i-t}= x_{j, 0}^{0} - x_{i, j}^{t},
\end{align*}
where $x_{i, i+j-2t}^{i-t}=  x_{i, j}^{t}$ is because of (iv).
Constraint (v) is the requirement  from the minimum distance of the code. 
To sum up, Schrijver's  SDP is as follows with variables $ x_{i, j}^{t} \in\mathbb{R}$:
\begin{align}
	{\rm maximize}&\sum_{i}\binom{n}{i}x_{i, 0}^{0}  \notag\\
	{\rm subject\ to\ }       &  \mbox{ positive semidefiniteness of }(\ref{sdp_1}) \mbox{ and } (\ref{sdp_2})\notag\\
	&(\ref{sdp_3}).    \notag
\end{align}
	
\section{Split Terwilliger algebra of the Hamming scheme} \label{sec:split}
In this section, we consider split distance distribution on a partition of $\{1, \dots, n\}$ with two subsets $T_{1}$ and $T_{2}$ such that $T_{1}\cap T_{2}=\emptyset$, $\lvert T_{1}\rvert = n_{1}$,  $\lvert T_{2}\rvert = n_{2}$, and $n_1+n_2=n$. 
	
Let $G_{u}$ be the group of all distance-preserving automorphisms of the power set of $T_{u}$, for $u = 1, 2$. We consider the group $G_{1} \times G_{2}$ acting on ${\cal P}$ by
\begin{equation*}
	(g, h) \cdot X = g(X \cap T_{1}) \cup h(X \cap T_{2})
\end{equation*}
for  $(g, h) \in G_{1} \times G_{2}$ and  $X \in {\cal P}$. Observe that an orbit of $G_{1} \times G_{2}$ on ${\cal P} \times {\cal P}$ can be similarly represented by $(i, j, t, i', j', t')$ with $i, j, t \in \{0, \dots, n_{1}\}$, $i', j', t' \in \{0, \dots, n_{2}\}$, $i+j-2t \in \{0, \dots, n_{1}\}$ and $i'+j'-2t' \in \{0, \dots, n_{2}\}$ and we define $\lvert {\cal P} \rvert \times \lvert {\cal P}\rvert$ matrices $M_{i, j, i', j'}^{t, t'}$ by
\begin{equation*}
	\left(M_{i, j, i', j'}^{t, t'}\right)_{X,Y}  = \left\{
	\begin{aligned}
		&1, \quad 
		\begin{aligned}
		    &\text{if } \lvert X \cap T_{1}\rvert = i, \lvert X \cap T_{2} \rvert = i',\\
		    &\lvert Y \cap T_{1}\rvert = j, \lvert Y \cap T_{2}\rvert = j', \\
		    &\text{and } \lvert X \cap Y \cap T_{1}\rvert = t, \lvert X \cap Y \cap T_{2}\rvert = t';
	    \end{aligned}\\
	&0, \quad \text{otherwise.}
	\end{aligned} \right.
\end{equation*}
Let ${\cal A}_{n_{1}, n_{2}}$ be the collection of all linear combinations of $\{M_{i, j, i', j'}^{t, t'}\}$ over the complex field $\mathbb{C}$. One can verify that ${\cal A}_{n_{1}, n_{2}}$ is closed under matrix multiplication and adjoint, and ${\cal A}_{n_{1}, n_{2}}$ forms a $\mathbb{C}^{*}$-algebra, which we call a \textit{2-split Terwilliger algebra} of the Hamming scheme. By definition, we have the following lemma.
\begin{lemma} \label{lem:le_1}
	The algebra ${\cal A}_{n_{1}, n_{2}}$ is isomorphic to the algebra ${\cal A}_{n_{1}} \otimes {\cal A}_{n_{2}}$, where $\otimes $ is the matrix tensor product. 
\end{lemma}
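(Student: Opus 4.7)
The plan is to build the isomorphism from the obvious bijection $\varphi: \mathcal{P} \to \mathcal{P}(T_1) \times \mathcal{P}(T_2)$ given by $X \mapsto (X \cap T_1,\, X \cap T_2)$, which is a bijection because $T_1 \cap T_2 = \emptyset$ and $T_1 \cup T_2 = \{1,\dots,n\}$. This induces a linear isomorphism of the underlying Hilbert spaces $\mathbb{C}^{\mathcal{P}} \cong \mathbb{C}^{\mathcal{P}(T_1)} \otimes \mathbb{C}^{\mathcal{P}(T_2)}$ via $e_X \mapsto e_{X \cap T_1} \otimes e_{X \cap T_2}$, and hence a $\mathbb{C}^*$-algebra isomorphism $\Phi: \mathbb{C}^{\mathcal{P}\times\mathcal{P}} \to \mathbb{C}^{\mathcal{P}(T_1)\times\mathcal{P}(T_1)} \otimes \mathbb{C}^{\mathcal{P}(T_2)\times\mathcal{P}(T_2)}$ that preserves matrix multiplication and adjoint. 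I would then restrict $\Phi$ to $\mathcal{A}_{n_1,n_2}$ and show its image is $\mathcal{A}_{n_1} \otimes \mathcal{A}_{n_2}$.

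The heart of the argument is a direct computation on the spanning set. Writing $X = X_1 \cup X_2$ and $Y = Y_1 \cup Y_2$ with $X_u, Y_u \subseteq T_u$, I would expand
\begin{equation*}
\bigl(M_{i,j}^{t} \otimes M_{i',j'}^{t'}\bigr)_{(X_1,X_2),(Y_1,Y_2)} = \bigl(M_{i,j}^{t}\bigr)_{X_1,Y_1} \cdot \bigl(M_{i',j'}^{t'}\bigr)_{X_2,Y_2},
\end{equation*}
which equals $1$ exactly when $|X_1|=i$, $|Y_1|=j$, $|X_1 \cap Y_1|=t$ and $|X_2|=i'$, $|Y_2|=j'$, $|X_2 \cap Y_2|=t'$. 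Since $T_1$ and $T_2$ are disjoint, these six conditions are precisely the defining conditions for the $(X,Y)$-entry of $M_{i,j,i',j'}^{t,t'}$ to equal $1$. Thus $\Phi\bigl(M_{i,j,i',j'}^{t,t'}\bigr) = M_{i,j}^{t} \otimes M_{i',j'}^{t'}$.

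To finish, I would invoke that $\{M_{i,j}^{t}\}$ and $\{M_{i',j'}^{t'}\}$ are $\mathbb{C}$-bases of $\mathcal{A}_{n_1}$ and $\mathcal{A}_{n_2}$ respectively, so their tensors $\{M_{i,j}^{t} \otimes M_{i',j'}^{t'}\}$ form a $\mathbb{C}$-basis of $\mathcal{A}_{n_1} \otimes \mathcal{A}_{n_2}$. Since $\{M_{i,j,i',j'}^{t,t'}\}$ is by definition a $\mathbb{C}$-basis of $\mathcal{A}_{n_1,n_2}$, the restriction of $\Phi$ is a bijective linear map; because $\Phi$ already respects multiplication and adjoint, this restriction is a $\mathbb{C}^*$-algebra isomorphism, as desired.

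The main obstacle is purely notational: making the identification $\mathcal{P} \leftrightarrow \mathcal{P}(T_1) \times \mathcal{P}(T_2)$ precise enough that the tensor-product indexing of rows and columns lines up correctly with the disjoint-union indexing of $\mathcal{P}$. The content of the lemma is really the disjointness of $T_1$ and $T_2$, which makes the size and intersection statistics split additively, allowing the orbit-defining conditions to factor cleanly across the two coordinates. No deeper structural fact is needed beyond this.
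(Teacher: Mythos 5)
Your proposal is correct and follows essentially the same route as the paper: both arguments identify $\mathcal{P}$ with $\mathcal{P}(T_1)\times\mathcal{P}(T_2)$ via $X\mapsto(X\cap T_1, X\cap T_2)$, check entrywise that $M_{i,j,i',j'}^{t,t'}$ corresponds to $M_{i,j}^{t}\otimes M_{i',j'}^{t'}$, and conclude by matching spanning sets/bases. You are somewhat more explicit than the paper about why this identification gives a $\mathbb{C}^*$-algebra isomorphism of the ambient full matrix algebras, which is a welcome clarification but not a different idea.
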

	
\begin{proof}
	We denote $M_{i, j}^{n_{1}, t}$'s and $M_{i', j'}^{n_{2}, t'}$'s as the generators of ${\cal A}_{n_{1}}$, ${\cal A}_{n_{2}}$, respectively, for $i, j, t \in \{0, \dots, n_{1}\}$, $i', j', t' \in \{0, \dots, n_{2}\}$. For $X, Y \in {\cal P}$,
	\begin{equation}
		(M_{i, j, i', j'}^{t, t'})_{X, Y} = 1  \label{eq:Mijt_split}
	\end{equation}
	if and only if $\lvert X \cap T_{1}\rvert = i$, $\lvert Y \cap T_{1}\rvert = j$, $\lvert X \cap Y \cap T_{1}\rvert = t$ and $\lvert X \cap T_{2}\rvert = i'$, $\lvert Y \cap T_{2}\rvert = j'$, $\lvert X \cap Y \cap T_{2}\rvert = t'$.\\
	Suppose that the generators of ${\cal A}_{n_{1}}$ and ${\cal A}_{n_{2}}$ are indexed by the power sets ${\cal P}_{1}$, ${\cal P}_{2}$ of $T_{1}$ and $T_{2}$, respectively. We observe that  (\ref{eq:Mijt_split}) is equivalent to
	\begin{equation*}
		(M_{i, j}^{n_{1}, t})_{X\cap T_{1}, Y\cap T_{1}} = 1 \text{ and } (M_{i', j'}^{n_{2}, t'})_{X\cap T_{2}, Y\cap T_{2}} = 1.
	\end{equation*}
	Also, the size of the matrix $M_{i, j}^{n_{1}, t} \otimes M_{i', j'}^{n_{2}, t'}$ is equal to $2^{n_{1}+n_{2}} \times 2^{n_{1}+n_{2}}$. Immediately, we have
	\begin{equation}
		M_{i, j, i', j'}^{t, t'} = M_{i, j}^{n_{1}, t} \otimes M_{i', j'}^{n_{2}, t'}, \label{eq:Mijt_prod}
	\end{equation}
	since the right (left) hand side of (\ref{eq:Mijt_prod}) forms a set of generators for ${\cal A}_{n_{1}, n_{2}}$ (${\cal A}_{n_{1}} \otimes {\cal A}_{n_{2}}$).
	\end{proof}
As a consequence, we have a block-diagonal formula for  the $2$-split Terwilliger algebra of the Hamming scheme, which is an extension of the block-diagonal formula for the Terwilliger algebra of the Hamming scheme with the definition of tensor product. As in Schrijver's decomposition of the Terwilliger algebra, this split decomposition is irreducible. By certain basic results from representation theory,  ${\cal A}_{n_{1}, n_{2}}$ will be mapped to a direct sum of simple ${\cal A}_{n_{1}, n_{2}}$-modules. Thus we have the following corollary.
	
\begin{corollary}\cite[equation $(56)$]{Sch05}
\label{cor:blo}
	There is an isomorphism from ${\cal A}_{n_{1}, n_{2}}$ to
	\begin{equation*}
		\bigoplus_{k = 0}^{\left\lfloor \frac{n_{1}}{2} \right\rfloor} \bigoplus_{k' = 0}^{\left\lfloor \frac{n_{2}}{2} \right\rfloor} \mathbb{C}^{N_{k, k'} \times N_{k. k'}}, 
	\end{equation*}
	with $N_{k, k'} = (n_{1}-2k+1)(n_{2}-2k'+1)$ that maps $A = \sum_{i, j, i', j', t, t'}x_{i, j, i', j'}^{t, t'} M_{i, j, i', j'}^{t, t'}$ to
	\begin{equation*}
		\bigoplus_{k = 0}^{\lfloor \frac{n_{1}}{2} \rfloor} \bigoplus_{k' = 0}^{\lfloor \frac{n_{2}}{2} \rfloor} {\cal B}_{k, k'}, 
	\end{equation*}
	where
	\begin{equation*}
		\begin{aligned}
			{\cal B}_{k, k'} = \left( \sum_{t, t'}  \alpha_{i, j, i', j'}^{k, n_{1}, n_{2}} \beta_{i, j, k}^{n_{1}, t} \beta_{i', j', k'}^{n_{2}, t'}  x_{i, j, i', j'}^{t, t'} \right)_{((i, i'), (j, j')) = ((k, k'), (k, k'))}^{((n_{1}-k, n_{2}-k'), (n_{1}-k, n_{2}-k'))}
		\end{aligned} 
	\end{equation*}
	with
	\begin{align*}
		\beta_{i, j, k}^{n_{l}, t} = & \sum_{u = 0}^{n_{l}} (-1)^{u-t} \binom{u}{t} \binom{n_{l}-2k}{u-k} \binom{n_{l}-k-u}{i-u} \binom{n_{l}-k-u}{j-u},
	\end{align*}
	for $l = 1, 2$ and
	\begin{equation*}
		\alpha_{i, j, i', j'}^{k, n_{1}, n_{2}} = \binom{n_{1}-2k}{i-k}^{-\frac{1}{2}} \binom{n_{1}-2k}{j-k}^{-\frac{1}{2}} \binom{n_{2}-2k'}{i'-k'}^{-\frac{1}{2}} \binom{n_{2}-2k'}{j'-k'}^{-\frac{1}{2}}.
	\end{equation*}
\end{corollary}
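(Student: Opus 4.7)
The plan is to bootstrap Corollary \ref{cor:blo} directly from Lemma \ref{lem:le_1} and Theorem \ref{thm:ter_blo}, without any fresh representation-theoretic argument. By Lemma \ref{lem:le_1} the algebra $\mathcal{A}_{n_1,n_2}$ is isomorphic to the tensor product $\mathcal{A}_{n_1}\otimes\mathcal{A}_{n_2}$ with generator identification $M_{i,j,i',j'}^{t,t'}=M_{i,j}^{n_1,t}\otimes M_{i',j'}^{n_2,t'}$. It therefore suffices to tensor the two block decompositions supplied by Theorem \ref{thm:ter_blo} and read off the matrix entries.

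First I would write, for $l=1,2$, the Schrijver isomorphism
\[
\Phi_l:\mathcal{A}_{n_l}\longrightarrow \bigoplus_{k_l=0}^{\lfloor n_l/2\rfloor}\mathbb{C}^{(n_l-2k_l+1)\times(n_l-2k_l+1)},
\]
sending $M_{i,j}^{n_l,t}$ to the block whose $(i,j)$-entry is $\binom{n_l-2k_l}{i-k_l}^{-1/2}\binom{n_l-2k_l}{j-k_l}^{-1/2}\beta_{i,j,k_l}^{n_l,t}$ in the $k_l$-th summand and zero elsewhere. Next I would form the tensor product $\Phi_1\otimes\Phi_2$; this is an isomorphism of $\mathbb{C}^{*}$-algebras because each $\Phi_l$ is, and it factors through $\mathcal{A}_{n_1,n_2}$ via Lemma \ref{lem:le_1}. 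Using the standard identities $\bigl(\bigoplus_{k}U_k\bigr)\otimes\bigl(\bigoplus_{k'}V_{k'}\bigr)\cong\bigoplus_{k,k'}U_k\otimes V_{k'}$ and $\mathbb{C}^{a\times a}\otimes\mathbb{C}^{b\times b}\cong\mathbb{C}^{ab\times ab}$, the codomain becomes
\[
\bigoplus_{k=0}^{\lfloor n_1/2\rfloor}\bigoplus_{k'=0}^{\lfloor n_2/2\rfloor}\mathbb{C}^{N_{k,k'}\times N_{k,k'}},\qquad N_{k,k'}=(n_1-2k+1)(n_2-2k'+1),
\]
matching the statement.

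For the entrywise formula I would track a single generator $M_{i,j}^{n_1,t}\otimes M_{i',j'}^{n_2,t'}$: under $\Phi_1\otimes\Phi_2$ its $(k,k')$-block is the Kronecker product of the $k$-th block of $\Phi_1(M_{i,j}^{n_1,t})$ with the $k'$-th block of $\Phi_2(M_{i',j'}^{n_2,t'})$. In the natural indexing by pairs $((i,i'),(j,j'))$ the entry is the product of the two individual entries, which gives
\[
\binom{n_1-2k}{i-k}^{-\frac12}\binom{n_1-2k}{j-k}^{-\frac12}\binom{n_2-2k'}{i'-k'}^{-\frac12}\binom{n_2-2k'}{j'-k'}^{-\frac12}\,\beta_{i,j,k}^{n_1,t}\,\beta_{i',j',k'}^{n_2,t'}.
\]
Linearly extending to a general $A=\sum x_{i,j,i',j'}^{t,t'}M_{i,j,i',j'}^{t,t'}$ and summing over $t,t'$ produces precisely the block $\mathcal{B}_{k,k'}$ with the claimed coefficient $\alpha_{i,j,i',j'}^{k,n_1,n_2}$, since the four binomial half-powers factor in exactly that way.

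The only nontrivial bookkeeping issue, which I expect to be the main place to be careful rather than a deep obstacle, is to check that the index ranges match: that the $(k,k')$-block is indexed by pairs with $k\le i,j\le n_1-k$ and $k'\le i',j'\le n_2-k'$, giving size $N_{k,k'}$, and that the outer sums over $k\le\lfloor n_1/2\rfloor$ and $k'\le\lfloor n_2/2\rfloor$ exhaust the image. These are immediate from applying the corresponding ranges in Theorem \ref{thm:ter_blo} factorwise. Irreducibility of the blocks is then inherited from the tensor product of irreducible decompositions, as noted before the corollary, completing the identification.
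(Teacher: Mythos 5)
Your proposal is correct and follows essentially the same route the paper sketches: identify $\mathcal{A}_{n_1,n_2}\cong\mathcal{A}_{n_1}\otimes\mathcal{A}_{n_2}$ via Lemma~\ref{lem:le_1}, tensor the two Schrijver isomorphisms from Theorem~\ref{thm:ter_blo}, and read off the block entries from the Kronecker product. You fill in the routine bookkeeping (index ranges, block sizes, entrywise factorization of the $\alpha$-coefficients) a bit more explicitly than the paper, which only gestures at the tensor-product extension and cites Schrijver, but the argument is the same.
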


\begin{remark}
The formula in Corollary \ref{cor:blo} was provided by Schrijver in \cite{Sch05} to provide SDP constraints for a constant-weight code of weight $w$ by choosing $n_{1}=w$ and $n_{2}=n-w$. 
\end{remark}

Now we can derive additional semidefinite constraints on a nontrivial code $C\subset \cP$ from the 2-split Terwilliger algebra of the Hamming scheme.
Define
\begin{equation}
	x_{i, j, i', j'}^{t, t'} = \frac{1}{\lvert C\rvert \binom{n_{1}}{i-t, j-t, t} \binom{n_{2}}{i'-t', j'-t', t'}} \lambda^{t, t'}_{i, j, i', j'},
\end{equation}
where
\begin{equation}
	\begin{aligned}
		\lambda^{t, t'}_{i, j, i', j'} = \{&(X, Y, Z) \in C^{3} : \lvert (X \Delta Y)\cap T_{1}\rvert = i, \lvert (X \Delta Z)\cap T_{1}\rvert = j, \\
		&\lvert((X \Delta Y)\cap (X \Delta Z))\cap T_{1}\rvert = t,\\
		&\lvert(X \Delta Y)\cap T_{2}\rvert = i', \lvert(X \Delta Z)\cap T_{2}\rvert = j',\\
		&\lvert((X \Delta Y)\cap (X \Delta Z))\cap T_{2}\rvert = t' \}\rvert,
	\end{aligned}
\end{equation}
for $i, j, t \in \{0, \dots, n_{1}\}$, $i', j', t' \in \{0, \dots, n_{2}\}$ with $i-t \geq 0$, $j-t \geq 0$, $i+j-2t \leq n_{1}$, $i'-t' \geq 0$, $j'-t' \geq 0$ and $i'+j'-2t' \leq n_{2}$. 
Also,  the size of the code is 
\begin{equation*}
	\lvert C \rvert=\sum_{a = 0}^{n} \sum_{i+i' = a} \binom{n_{1}}{i}\binom{n_{2}}{i'} x_{i, 0, i', 0}^{0, 0}.
\end{equation*}
	
Let $\lambda_{a, b}^{c}$, $x_{a, b}^{c}$ and $M_{a, b}^{c}$ be defined as in Schrijver's SDP in the previous section. From the definitions, we have the following identities:
\begin{align}
	\lambda_{a, b}^{c} =& \sum_{i+i' = a, j+j' = b, t+t' = c} \lambda_{i, j, i', j'}^{t, t'},   \label{eqn:id_1}\\
	x_{a, b}^{c} =& \sum_{i+i' = a, j+j' = b, t+t' =c} \frac{\binom{n_{1}}{i-t, j-t, t} \binom{n_{2}}{i'-t', j'-t', t'}}{\binom{n}{a-c, b-c, c}}x_{i, j, i', j'}^{t, t'},  \label{eqn:id_2}\\
	M_{a, b}^{c} =& \sum_{i+i' = a, j+j' = b, t+t' = c} M_{i, j, i', j'}^{t, t'}.  \label{eqn:id_3}
\end{align}
	
Consider the following two sets of automorphisms: 
\begin{align*}
	& \Pi_{\rm s} = \{(\pi_{1}, \pi_{2}) \in G_{1} \times G_{2} \mid \emptyset \in (\pi_{1}, \pi_{2})(C)\},\\
	& \Pi_{\rm s}' = \{(\pi_{1}, \pi_{2}) \in G_{1} \times G_{2} \mid \emptyset \notin (\pi_{1}, \pi_{2})(C)\}.
\end{align*}
Here the subscript $\rm s$ stands for split. Similarly, we define
\begin{align*}
	& R_{\rm s} = \frac{1}{\lvert \Pi_{\rm s} \rvert} \sum_{(\pi_{1}, \pi_{2}) \in \Pi_{\rm s}} \chi^{(\pi_{1}, \pi_{2})(C)} (\chi^{(\pi_{1}, \pi_{2})(C)})^{T}, \\
	& R_{\rm s}' = \frac{1}{\lvert \Pi_{\rm s}'\rvert}\sum_{(\pi_{1}, \pi_{2}) \in \Pi_{\rm s}'} \chi^{(\pi_{1}, \pi_{2})(C)} (\chi^{(\pi_{1}, \pi_{2})(C)})^{T}.
\end{align*}
One can immediately see that $R_{\rm s}$ and $R_{\rm s}'$ are positive semidefinite and they only depend on the action $G_{1} \times G_{2}$ on $C$. Moreover, $R_{\rm s}$ and $R_{\rm s}'$ are elements of ${\cal A}_{n_{1}, n_{2}}$ as a consequence of the following proposition.
\begin{proposition}
	\label{pro:R_R'}
	\begin{equation*}
		\begin{aligned}
			&R_{\rm s} = \sum_{i, j, i', j', t ,t'} x_{i, j, i', j'}^{t, t'}M_{i, j, i', j'}^{t, t'},\\
			& 
			\begin{aligned}
				R_{\rm s}' =& \frac{\lvert C \rvert}{2^{n}-\lvert C\rvert}\sum_{i, j, i', j', t ,t'} \left( x_{i+j-2t, 0, i'+j'-2t', 0}^{0, 0}-x_{i, j, i', j'}^{t, t'}\right) \\
				&\cdot M_{i, j, i', j'}^{t, t'}.
			\end{aligned}
		\end{aligned}
	\end{equation*}
\end{proposition}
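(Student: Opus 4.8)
The plan is to mirror, in the product setting, Schrijver's proof of his Proposition~1 (which expresses $R$ and $R'$ in $\cA_n$). Since $M_{i,j,i',j'}^{t,t'}$ is, by its very definition, the $0/1$ indicator of the pairs $(X,Y)\in\cP\times\cP$ with the six prescribed block-cardinalities, it suffices to compute, for one representative $(X,Y)$ of each such class, the entries $(R_{\rm s})_{X,Y}$ and $(R_{\rm s}')_{X,Y}$; the asserted formulas then simply read off the coefficients, and in particular exhibit $R_{\rm s},R_{\rm s}'$ as linear combinations of the $M_{i,j,i',j'}^{t,t'}$, hence as elements of $\cA_{n_1,n_2}$. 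Throughout I fix $X,Y\in\cP$ and write $i=|X\cap T_1|$, $i'=|X\cap T_2|$, $j=|Y\cap T_1|$, $j'=|Y\cap T_2|$, $t=|X\cap Y\cap T_1|$, $t'=|X\cap Y\cap T_2|$, so that $|(X\Delta Y)\cap T_1|=i+j-2t$ and $|(X\Delta Y)\cap T_2|=i'+j'-2t'$.

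First I would treat $R_{\rm s}$. By definition $|\Pi_{\rm s}|\,(R_{\rm s})_{X,Y}$ is the number of $(\pi_1,\pi_2)\in G_1\times G_2$ such that $(\pi_1,\pi_2)^{-1}$ carries each of $\emptyset,X,Y$ into $C$. Setting $A=(\pi_1,\pi_2)^{-1}(\emptyset)$, $B=(\pi_1,\pi_2)^{-1}(X)$, $D=(\pi_1,\pi_2)^{-1}(Y)$, this counts the pairs $(\pi_1,\pi_2)$ whose inverse carries the triple $(\emptyset,X,Y)$ onto a triple $(A,B,D)\in C^3$ in the same $G_1\times G_2$-orbit. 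Because $(\pi_1,\pi_2)^{-1}$ preserves $T_1$ and $T_2$ and acts on each block by an element of $G_\ell$, the orbit condition becomes exactly $|(A\Delta B)\cap T_1|=i$, $|(A\Delta D)\cap T_1|=j$, $|((A\Delta B)\cap(A\Delta D))\cap T_1|=t$ together with the three analogous equalities on $T_2$ — that is, precisely the relations defining $\lambda^{t,t'}_{i,j,i',j'}$. Each admissible $(A,B,D)$ is realized by $|\mathrm{Stab}_{G_1\times G_2}(\emptyset,X,Y)|$ pairs, and this stabilizer factors as $\mathrm{Stab}_{G_1}(\emptyset,X\cap T_1,Y\cap T_1)\times\mathrm{Stab}_{G_2}(\emptyset,X\cap T_2,Y\cap T_2)$; each factor is the Young subgroup of the relevant symmetric group fixing the partition of its block into $X\cap Y$, $X\setminus Y$, $Y\setminus X$, and the remainder, of order $n_1!\big/\binom{n_1}{i-t,\,j-t,\,t}$ for the first block and the analogue for the second. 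Combined with $|\Pi_{\rm s}|=|C|\,n_1!\,n_2!$ (each $Z\in C$ is sent to $\emptyset$ by exactly $n_1!\,n_2!$ pairs), this gives $(R_{\rm s})_{X,Y}=x^{t,t'}_{i,j,i',j'}$, which is the first asserted identity.

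For $R_{\rm s}'$ I would instead sum over all of $G_1\times G_2=\Pi_{\rm s}\sqcup\Pi_{\rm s}'$, giving $|\Pi_{\rm s}|\,R_{\rm s}+|\Pi_{\rm s}'|\,R_{\rm s}'=\sum_{(\pi_1,\pi_2)\in G_1\times G_2}\chi^{(\pi_1,\pi_2)(C)}(\chi^{(\pi_1,\pi_2)(C)})^{T}$. The $(X,Y)$-entry of the right-hand side counts pairs whose inverse sends $X$ and $Y$ (only) into $C$; the same substitution yields $|\mathrm{Stab}_{G_1\times G_2}(X,Y)|$ times the number of $(B,D)\in C^2$ with $|(B\Delta D)\cap T_1|=i+j-2t$ and $|(B\Delta D)\cap T_2|=i'+j'-2t'$. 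Specializing $\lambda^{t,t'}_{i,j,i',j'}$ to $D=A$ (the indices $(i+j-2t,0,i'+j'-2t',0)$ with $t=t'=0$) identifies that pair-count with $\lambda^{0,0}_{i+j-2t,0,\,i'+j'-2t',0}$, while $\mathrm{Stab}_{G_1\times G_2}(X,Y)$ again factors into Young subgroups, now of orders $n_1!\big/\binom{n_1}{i+j-2t}$ and $n_2!\big/\binom{n_2}{i'+j'-2t'}$. Hence the right-hand side equals $|C|\,n_1!\,n_2!\sum x^{0,0}_{i+j-2t,0,\,i'+j'-2t',0}\,M^{t,t'}_{i,j,i',j'}$, and since $|\Pi_{\rm s}'|=(2^n-|C|)\,n_1!\,n_2!$, dividing through by $n_1!\,n_2!$ and inserting the formula for $R_{\rm s}$ already obtained, then solving for $R_{\rm s}'$, produces the second asserted identity.

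The step I expect to be the main obstacle is the orbit-and-stabilizer bookkeeping inside the product group: one must verify that the $G_1\times G_2$-orbit of $(\emptyset,X,Y)$ on $\cP^3$ is genuinely cut out by the six block-restricted cardinalities $(i,j,t,i',j',t')$ — equivalently, that the triples $(A,B,D)\in C^3$ arising as $(\pi_1,\pi_2)^{-1}$-images are in bijection with those counted by $\lambda^{t,t'}_{i,j,i',j'}$ and with none others — that the point-stabilizers split as direct products over $T_1$ and $T_2$ and equal the claimed Young subgroups, and, for $R_{\rm s}'$, that the relevant pair-count is exactly the degenerate specialization of $\lambda$. Once the two coordinate blocks are kept carefully separate, the remaining manipulations are the same as in Schrijver's non-split argument and present no new difficulty.
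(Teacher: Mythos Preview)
Your proposal is correct and follows essentially the same approach as the paper: both arguments are the product-group version of Schrijver's orbit-counting proof, with every count factoring over the two blocks $T_1,T_2$. The only organizational difference is that the paper partitions $\Pi_{\rm s}$ (and $G_1\times G_2$) according to the codeword $X$ sent to $\emptyset$, introduces the intermediate averages $R_{\rm s}^X$ and the per-basepoint counts $\lambda^{t,t',X}_{i,j,i',j'}$, and then sums over $X\in C$ (respectively $X\notin C$); you instead compute the $(X,Y)$-entry in one shot via orbit--stabilizer on triples (respectively pairs), which packages the same bookkeeping slightly more compactly.
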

	
\begin{proof}
	Let $X \in {\cal P}$. We define $\Pi_{s}^{X} = \{(\pi_{1}, \pi_{2}) \in \Pi_{1} \times \Pi_{2} \mid (\pi_{1}, \pi_{2})(X) = \emptyset\}$. For an element $(\pi_{1}, \pi_{2}) \in \Pi_{s}^{X}$, we can see $\pi_{1}$ as a permutation of $T_{1}$ and $\pi_{2}$ as a permutation of $T_{2}$. Hence, we have $\lvert \Pi_{s}^{X}\rvert = n_{1} ! n_{2} !$. Define
	\begin{align*}
		& R_{s}^{X} = \frac{1}{\lvert \Pi_{s}^{X} \rvert} \sum_{(\pi_{1}, \pi_{2}) \in \Pi_{s}^{X}} \chi^{(\pi_{1}, \pi_{2})(C)} (\chi^{(\pi_{1}, \pi_{2})(C)})^{T}, \\
		&
		\begin{aligned}
			\lambda^{t, t', X}_{i, j, i', j'} = \lvert\{&(Y, Z) \in C^{2} : \lvert(X \Delta Y)\cap T_{1}\rvert = i, \lvert(X \Delta Z)\cap T_{1}\rvert = j,\\ 
			&\lvert((X \Delta Y)\cap (X \Delta Z))\cap T_{1}\rvert = t,\\
			&\lvert(X \Delta Y)\cap T_{2}\rvert = i', \lvert(X \Delta Z)\cap T_{2}\rvert = j',\\
			&\lvert((X \Delta Y)\cap (X \Delta Z))\cap T_{2}\rvert = t' \}\rvert.
		\end{aligned}
	\end{align*}
	For $(\pi_{1}, \pi_{2}) \in \Pi_{s}^{X}$ and fixed $i, j, t, i', j', t'$, observe that the number of $1$'s in 
	\begin{equation*}
		(\chi^{(\pi_{1}, \pi_{2})(C)} (\chi^{(\pi_{1}, \pi_{2})(C)})^{T})_{Y, Z}
	\end{equation*} 
	such that $(M_{i, j, i', j'}^{t, t'})_{Y, Z} = 1$ is $\lambda^{t, t', X}_{i, j, i', j'}$. There are $\binom{n_{1}}{i-t, j-t, t} \binom{n_{2}}{i'-t', j'-t', t'}$ such $(Y, Z)$. Thus,
	\begin{equation*}
		R_{s}^{X} = \sum_{i, j, i', j', t, t'} \frac{1}{\binom{n_{1}}{i-t, j-t, t} \binom{n_{2}}{i'-t', j'-t', t'}} \lambda^{t, t', X}_{i, j, i', j'} M_{i, j, i', j'}^{t, t'}.
	\end{equation*}
	Next, we see that
	\begin{equation*}
		R_{s} = \sum_{X \in C} \frac{R_{s}^{X}}{\lvert C\rvert}, \quad R_{s}' = \sum_{X \notin C} \frac{R_{s}^{X}}{\lvert {\cal P} \setminus C \rvert},
	\end{equation*}
	and
	\begin{equation*}
		\sum_{X \in C} \lambda^{t, t', X}_{i, j, i', j'} = \lambda^{t, t'}_{i, j, i', j'}.
	\end{equation*}
	Therefore, 
	\begin{equation*}
		\begin{aligned}
			R_{s} =& \sum_{X \in C} \frac{R_{s}^{X}}{\lvert C \rvert}\\
			=& \frac{1}{\lvert C \rvert} \sum_{i, j, i', j', t, t'} \frac{1}{\binom{n_{1}}{i-t, j-t, t} \binom{n_{2}}{i'-t', j'-t', t'}} \left(\sum_{X \in C} \lambda^{t, t', X}_{i, j, i', j'}\right) M_{i, j, i', j'}^{t, t'}\\
			=& \sum_{i, j, i', j', t, t'} \frac{\lambda^{t, t'}_{i, j, i', j'}}{\lvert C\rvert \binom{n_{1}}{i-t, j-t, t} \binom{n_{2}}{i'-t', j'-t', t'}} M_{i, j, i', j'}^{t, t'}\\
			=& \sum_{i, j, i', j', t, t'} x_{i, j, i', j'}^{t, t'} M_{i, j, i', j'}^{t, t'}.
		\end{aligned}
	\end{equation*}
	For $(Y, Z) \in C^{2}$ with $\lvert Y \Delta Z \cap T_{1}\rvert = i+j-2t$, $\lvert Y \Delta Z \cap T_{2}\rvert = i'+j'-2t'$. The number of $X \in {\cal P}$ such that $\lvert (X \Delta Y)\cap T_{1}\rvert = i, \lvert(X \Delta Z)\cap T_{1}\rvert = j, \lvert((X \Delta Y)\cap (X \Delta Z))\cap T_{1}\rvert = t, \lvert(X \Delta Y)\cap T_{2}\rvert = i', \lvert(X \Delta Z)\cap T_{2}\rvert = j', \lvert((X \Delta Y)\cap (X \Delta Z))\cap T_{2}\rvert = t'$ is $\binom{i+j-2t}{i-t}\binom{n_{1}-t-j+2t}{t}\binom{i'+j'-2t'}{i'-t'}\binom{n_{2}-i'-j'+2t'}{t'}$.\\
	Thus we have
	\begin{equation*}
		\begin{aligned}
			\sum_{X \in {\cal P}} \lambda_{i, j, i', j'}^{t, t', X} = &\binom{i+j-2t}{i-t}\binom{n_{1}-t-j+2t}{t}\binom{i'+j'-2t'}{i'-t'} \binom{n_{2}-i'-j'+2t'}{t'} \lambda_{i+j-2t, 0, i'+j'-2t', 0}^{0, 0}.
		\end{aligned}
	\end{equation*}
	Hence, 
	\begin{equation*}
		\begin{aligned}
			R_{s}' =& \sum_{X \in {\cal P} \setminus C} \frac{R_{s}^{X}}{2^{n}-\lvert C \rvert}\\
			=& \frac{1}{2^{n}-\lvert C \rvert} \sum_{i, j, i', j', t, t'} \frac{1}{\binom{n_{1}}{i-t, j-t, t}\binom{n{2}}{i'-t', j'-t', t'}} \left( \sum_{X \in {\cal P} \setminus C} \lambda_{i, j, i', j'}^{t, t', X} \right) M_{i, j, i', j'}^{t, t'}\\
			=& \frac{\lvert C \rvert}{2^{n}-\lvert C \rvert} \sum_{i, j, i', j', t, t'} \frac{1}{\lvert C \rvert \binom{n_{1}}{i-t, j-t, t}\binom{n{2}}{i'-t', j'-t', t'}}\\
			& \cdot \left( \binom{i+j-2t}{i-t}\binom{n_{1}-t-j+2t}{t} \right. \binom{i'+j'-2t'}{i'-t'}\\
			& \cdot \binom{n_{2}-i'-j'+2t'}{t'} \left. \lambda_{i+j-2t, 0, i'+j'-2t', 0}^{0, 0} - \lambda_{i, j, i', j'}^{t, t'} \right) M_{i, j, i', j'}^{t, t'}\\
			=& \frac{\lvert C \rvert}{2^{n}-\lvert C\rvert} \sum_{i, j, i', j', t, t'} \left( \frac{\binom{i+j-2t}{i-t}\binom{n_{1}-t-j+2t}{t} \binom{i'+j'-2t'}{i'-t'}\binom{n_{2}-i'-j'+2t'}{t'}}{\lvert C \rvert \binom{n_{1}}{i-t, j-t, t}\binom{n_{2}}{i'-t', j'-t', t'}} \right. \\
			&\cdot \lambda_{i+j-2t, 0, i'+j'-2t', 0}^{0, 0}  - \left. x_{i, j, i', j'}^{t, t'} \right) M_{i,j, i', j'}^{t, t'}.
		\end{aligned}
	\end{equation*}
	Using the identities
	\begin{align*}
		&\binom{n_{1}}{i-t, j-t, t}^{-1} \binom{i+j-2t}{i-t}\binom{n_{1}-i-j+2t}{t} = \binom{n_{1}}{i+j-2t}^{-1} \text{ and }\\
		&\binom{n_{2}}{i'-t', j'-t', t'}^{-1} \binom{i'+j'-2t'}{i'-t'}\binom{n_{2}-i'-j'+2t'}{t'} = \binom{n_{2}}{i'+j'-2t'}^{-1},
	\end{align*}
	we have
	\begin{equation*}
		\begin{aligned}
			R_{s}' =& \frac{\lvert C\rvert}{2^{n}-\lvert C\rvert} \sum_{i, j, i', j', t, t'} \left(\frac{1}{\lvert C\rvert \binom{n_{1}}{i+j-2t}\binom{n_{2}}{i'+j'-2t'}} \right. \left.\lambda_{i+j-2t, 0, i'+j'-2t', 0}^{0, 0} - x_{i, j, i', j'}^{t, t'} \right) M_{i, j, i', j'}^{t, t'}\\
			=& \frac{\lvert C \rvert}{2^{n}-\lvert C\rvert} \sum_{i, j, i', j', t, t'} \left( x_{i+j-2t, 0, i'+j'-2t', 0}^{0, 0}-x_{i, j, i', j'}^{t, t'}\right) M_{i, j, i', j'}^{t, t'}.
		\end{aligned}
	\end{equation*}
\end{proof}

Using Theorem~\ref{cor:blo} and the positive semidefiniteness of $R_{\rm s}$ and $R_{\rm s}'$, we have the following semidefinite constraints. 
For each $k = 0, \dots, \lfloor \frac{n_{1}}{2} \rfloor$ and $k' = 0, \dots, \lfloor \frac{n_{2}}{2} \rfloor$, the following matrices are positive semidefinite: 
\begin{align}
	\left( \sum_{t, t'} \alpha_{i, j, i', j'}^{k, n_{1}, n_{2}} \beta_{i, j, k}^{n_{1}, t} \beta_{i', j', k'}^{n_{2}, t'} x_{i, j, i', j'}^{t, t'}  \right)_{((i, i'), (j, j')) = ((k, k'), (k, k'))}^{((n_{1}-k, n_{2}-k'), (n_{1}-k, n_{2}-k'))}, \label{sem1}
\end{align} 
\begin{align}
	\left( \sum_{t, t'} \alpha_{i, j, i', j'}^{k, n_{1}, n_{2}}  \beta_{i, j, k}^{n_{1}, t} \beta_{i', j', k'}^{n_{2}, t'} (x_{i+j-2t, 0, i'+j'-2t', 0}^{0, 0} -x_{i, j, i', j'}^{t, t'}) \right)_{((i, i'), (j, j')) = ((k, k'), (k, k'))}^{((n_{1}-k, n_{2}-k'), (n_{1}-k, n_{2}-k'))}. \label{sem2}
\end{align}

\begin{proposition}
	\label{pro:lin}
	Let $C$ be a code with length $n$ and minimum distance at least $d$. Then 
	$x_{i,j,i',j'}^{t,t'}$	corresponding to $C$ satisfy the following linear constraints: for $i, j, t, a, b, c \in \{0, \dots,$ $n_{1}\}$ and $i', j', t', a', b', c' \in \{0, \dots, n_{2}\}$,
	\begin{align}
	    \begin{array}{cl}
		    \mbox{(i) }& x_{0, 0, 0, 0}^{0, 0} = 1; \\
			\mbox{(ii) }& 0 \leq x_{i, j, i', j'}^{t, t'} \leq x_{i, 0, i', 0}^{0, 0};  \\
			\mbox{(iii) }& x_{i, 0, i', 0}^{0, 0} + x_{0, j, 0, j'}^{0, 0} \leq 1 + x_{i, j, i', j'}^{t, t'}; \\
			\mbox{(iv) }& x_{i, j, i', j'}^{t, t'} = x_{a, b, a', b'}^{c, c'} \mbox{ if } ((i, i'), (j, j'), (i+j-2t, i'+j'-2t'));  \\
			&\mbox{ is a permutation of } ((a, a'), (b, b'), (a+b-2c, a'+b'-2c')),  \\
			\mbox{(v) }&  x_{i, j, i', j'}^{t, t'} = 0 \mbox{ if } \{i+i', j+j', (i+i')+(j+j')-2(t+t')\} \cap \{1, \dots, d-1\} \neq \emptyset. \label{eq:split_cons}
		\end{array}
	\end{align}
\end{proposition}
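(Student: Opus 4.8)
The plan is to transcribe Schrijver's derivation of Proposition~\ref{pro:sch} into the split setting, with $\cA_{n_{1},n_{2}}$, $R_{\rm s}$, $R_{\rm s}'$ and Proposition~\ref{pro:R_R'} playing the roles of $\cA_{n}$, $R$, $R'$ and Schrijver's Proposition~$1$. Two observations are used throughout. First, by construction $(R_{\rm s})_{X,Y}$ is the fraction of $(\pi_{1},\pi_{2})\in\Pi_{\rm s}$ with $X,Y\in(\pi_{1},\pi_{2})(C)$, so $(R_{\rm s})_{X,X}\ge (R_{\rm s})_{X,Y}\ge 0$, and likewise for $R_{\rm s}'$ with $\Pi_{\rm s}'$ in place of $\Pi_{\rm s}$; this is the source of inequalities (ii) and (iii). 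Second, for every index tuple meeting the stated range conditions (in particular the multinomial coefficients in the definition of $x_{i,j,i',j'}^{t,t'}$ must be well defined, which forces $i+j-t\le n_{1}$ and $i'+j'-t'\le n_{2}$) there exist $X,Y\in\cP$ with $|X\cap T_{1}|=i$, $|Y\cap T_{1}|=j$, $|X\cap Y\cap T_{1}|=t$ and $|X\cap T_{2}|=i'$, $|Y\cap T_{2}|=j'$, $|X\cap Y\cap T_{2}|=t'$, since $\cP$ is the full power set and $T_{1}\sqcup T_{2}=\{1,\dots,n\}$; for such $X,Y$ Proposition~\ref{pro:R_R'} identifies $(R_{\rm s})_{X,Y}=x_{i,j,i',j'}^{t,t'}$ and $(R_{\rm s}')_{X,Y}=\frac{\lvert C\rvert}{2^{n}-\lvert C\rvert}\bigl(x_{i+j-2t,0,i'+j'-2t',0}^{0,0}-x_{i,j,i',j'}^{t,t'}\bigr)$.

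Items (i), (iv) and (v) are purely combinatorial. For (i), $\lambda_{0,0,0,0}^{0,0}$ counts triples $(X,Y,Z)\in C^{3}$ all of whose pairwise symmetric differences meet both $T_{1}$ and $T_{2}$ trivially, i.e.\ $X=Y=Z$, so $\lambda_{0,0,0,0}^{0,0}=\lvert C\rvert$ and the normalizer equals $1$ (equivalently, $x_{0,0,0,0}^{0,0}=x_{0,0}^{0}=1$ by (\ref{eqn:id_2}) and Proposition~\ref{pro:sch}). For (v), since $T_{1},T_{2}$ partition the coordinates one has $i+i'=|X\Delta Y|$, $j+j'=|X\Delta Z|$, and $(i+i')+(j+j')-2(t+t')=(i+j-2t)+(i'+j'-2t')=|Y\Delta Z|$ using $Y\Delta Z=(X\Delta Y)\Delta(X\Delta Z)$; if any of these three distances lies in $\{1,\dots,d-1\}$ then no triple of codewords realizes it, so $\lambda_{i,j,i',j'}^{t,t'}=0$ (equivalently, use (\ref{eqn:id_1}), non-negativity of the $\lambda$'s, and Proposition~\ref{pro:sch}). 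The crux is (iv). I would reinterpret $\lambda_{i,j,i',j'}^{t,t'}$ as the number of $(X,Y,Z)\in C^{3}$ whose ordered triple of split distances $\bigl(\delta(X,Y),\delta(X,Z),\delta(Y,Z)\bigr)$, with $\delta(U,V)=(|(U\Delta V)\cap T_{1}|,|(U\Delta V)\cap T_{2}|)$, equals $\bigl((i,i'),(j,j'),(i+j-2t,i'+j'-2t')\bigr)$; the map from $(i,j,i',j',t,t')$ to this triple is injective, and permuting the entries $X,Y,Z$ permutes the triple accordingly, so permutation-equivalent index tuples receive equal $\lambda$'s. It then remains to check that the normalizer $\binom{n_{1}}{i-t,j-t,t}\binom{n_{2}}{i'-t',j'-t',t'}$ is invariant under the same permutations --- this is the one place that needs care --- and it is, because each factor is the multinomial coefficient counting the four Venn regions of $(X\Delta Y)\cap T_{\ell}$ and $(X\Delta Z)\cap T_{\ell}$ (of sizes $i-t,\,j-t,\,t,\,n_{\ell}-i-j+t$ for $\ell=1$, and the primed analogue for $\ell=2$), and each generating transposition of $S_{3}$ merely reorders these four sizes.

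Granting (iv), items (ii) and (iii) follow immediately. In (ii), $x_{i,j,i',j'}^{t,t'}\ge 0$ is non-negativity of $\lambda_{i,j,i',j'}^{t,t'}$, and the upper bound comes from picking $X$ with $(|X\cap T_{1}|,|X\cap T_{2}|)=(i,i')$ and $Y$ realizing the full tuple, so that $(R_{\rm s})_{X,X}=x_{i,i,i',i'}^{i,i'}=x_{i,0,i',0}^{0,0}$ by (iv) while $(R_{\rm s})_{X,Y}=x_{i,j,i',j'}^{t,t'}$; then $(R_{\rm s})_{X,X}\ge (R_{\rm s})_{X,Y}$ is the claim. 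For (iii), apply the same principle to $R_{\rm s}'$: with $X$ satisfying $(|X\cap T_{1}|,|X\cap T_{2}|)=(i,i')$ and $Y$ satisfying $(|Y\cap T_{1}|,|Y\cap T_{2}|)=(i+j-2t,\,i'+j'-2t')$ and $(|X\cap Y\cap T_{1}|,|X\cap Y\cap T_{2}|)=(i-t,\,i'-t')$, Proposition~\ref{pro:R_R'} together with (i) and (iv) gives $(R_{\rm s}')_{X,X}=\frac{\lvert C\rvert}{2^{n}-\lvert C\rvert}\bigl(1-x_{i,0,i',0}^{0,0}\bigr)$ and $(R_{\rm s}')_{X,Y}=\frac{\lvert C\rvert}{2^{n}-\lvert C\rvert}\bigl(x_{0,j,0,j'}^{0,0}-x_{i,j,i',j'}^{t,t'}\bigr)$; cancelling the positive scalar $\tfrac{\lvert C\rvert}{2^{n}-\lvert C\rvert}$ in $(R_{\rm s}')_{X,X}\ge (R_{\rm s}')_{X,Y}$ yields $x_{i,0,i',0}^{0,0}+x_{0,j,0,j'}^{0,0}\le 1+x_{i,j,i',j'}^{t,t'}$. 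The only genuine obstacle is bookkeeping: verifying the $S_{3}$-invariance of the multinomial normalizer in (iv), and, in (iii), checking via (iv) that the tuple realized by the chosen $X,Y$ is permutation-equivalent to $(i,j,i',j',t,t')$ and that the multinomial simplifications (as in the proof of Proposition~\ref{pro:R_R'}) go through; everything else is a faithful copy of Schrijver's argument.
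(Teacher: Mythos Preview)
Your proposal is correct and follows essentially the same approach as the paper, which only sketches the argument in one paragraph: (i) and (iv) from the definition of $\lambda_{i,j,i',j'}^{t,t'}$, (v) from the minimum-distance condition, and (ii), (iii) by comparing diagonal and off-diagonal entries of $R_{\rm s}$ and $R_{\rm s}'$ via Proposition~\ref{pro:R_R'} and the symmetry (iv). Your write-up is in fact more careful than the paper's on one point: you derive $(R_{\rm s})_{X,X}\ge (R_{\rm s})_{X,Y}$ from the combinatorial interpretation of the entries as fractions of automorphisms, whereas the paper (following its explanation of Proposition~\ref{pro:sch}) attributes this to positive semidefiniteness, which by itself does not imply diagonal dominance.
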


Constraints (i) and (iv) come from the definition of $x_{i, j, i', j'}^{t, t'}$. Constraints (ii) and (iii) are because of the positive semidefiniteness of $R$ and $R'$. Constraint (iv) is the requirement of the minimum distance.

The split distance distribution of $C$ with respect to the partition $\{T_1,T_2\}$ is  $\{A_{i, j}\}$, where
\begin{equation*}
	A_{i, j} = \frac{1}{\lvert C \rvert} \lvert \{(a, b) \in C \times C \mid \lvert a \Delta b \cap T_{1}\rvert = i, \lvert a \Delta b \cap T_{2}\rvert = j\}\rvert.
\end{equation*}
It has been shown that generalized Delsarte's inequalities on $A_{i,j}$ hold~\cite{Sim95}.
\begin{corollary}\cite[Generalized Delsarte's inequalities]{Sim95}
	\label{coro:gen_del}
	\begin{equation}
		\sum_{i, j} A_{i, j} K_{p}^{n_{1}}(i) K_{q}^{n_{2}}(j) \geq 0, \label{eq:split_del}
	\end{equation}
	for $p = 0, \dots, n_{1}$ and $q = 0, \dots, n_{2}$, where
	\begin{equation*}
		K_{k}^{n}(x) = \sum_{y=0}^k (-1)^{y} \binom{x}{y} \binom{n-x}{k-y}
	\end{equation*}
	is the binary Krawchuk polynomial.
\end{corollary}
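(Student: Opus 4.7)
The plan is to prove the split Delsarte inequalities by a direct character-sum argument on $\mathbb{F}_2^{n_1} \times \mathbb{F}_2^{n_2}$, which is the split analogue of the classical derivation of Delsarte's inequalities via Plancherel positivity. The point is that the characters of $\mathbb{F}_2^n$ factor as tensor products of characters of the two blocks $T_1$ and $T_2$, so positivity of the squared Fourier coefficient separates cleanly along the partition $\{T_1, T_2\}$.

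Concretely, I would proceed as follows. First, for subsets $U \subseteq T_1$ and $V \subseteq T_2$, form the character sum
$$S(U,V) = \sum_{X \in C} (-1)^{|U \cap X| + |V \cap X|},$$
which is real, and expand its square to obtain
$$|S(U,V)|^2 = \sum_{A,B \in C} (-1)^{|U \cap ((A\Delta B)\cap T_1)| + |V \cap ((A\Delta B)\cap T_2)|} \geq 0.$$
Next, fix $p \in \{0,\dots,n_1\}$ and $q \in \{0,\dots,n_2\}$ and sum $|S(U,V)|^2$ over all $U \subseteq T_1$ of size $p$ and all $V \subseteq T_2$ of size $q$. Since $U$ and $V$ live in disjoint coordinate blocks, the inner sum factors across the two blocks, and the standard Krawtchouk identity $\sum_{U \subseteq T_r,\, |U|=k}(-1)^{|U \cap W|} = K_k^{n_r}(|W \cap T_r|)$, applied separately in each block, converts each factor into a Krawtchouk value at the corresponding split distance. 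Finally, grouping pairs $(A,B) \in C \times C$ according to $(|(A\Delta B) \cap T_1|, |(A\Delta B) \cap T_2|) = (i,j)$ and substituting the definition of $A_{i,j}$ yields
$$0 \leq \sum_{|U|=p,\, |V|=q} |S(U,V)|^2 = |C|\sum_{i,j} A_{i,j} K_p^{n_1}(i) K_q^{n_2}(j),$$
which is exactly (\ref{eq:split_del}).

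There is no serious obstacle; the only care needed is the bookkeeping step where one verifies that restricting $U$ to $T_1$ and $V$ to $T_2$ preserves the Krawtchouk identity independently in each block. An alternative route, better aligned with the SDP framework of the paper, would be to extract the inequality from the positive semidefiniteness of the $k = k' = 0$ block of (\ref{sem1}): that block encodes exactly the split Bose--Mesner subalgebra whose spectral decomposition is diagonalized by products of Krawtchouk matrices, so reading off a suitable quadratic form in that block should produce (\ref{eq:split_del}) directly and thereby realize the corollary as a literal consequence of the split Schrijver constraints rather than an independently established fact.
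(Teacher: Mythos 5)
Your first argument—the direct character-sum derivation—is correct and self-contained. The identity $\sum_{U \subseteq T_r,\ |U|=k}(-1)^{|U\cap W|} = K_k^{n_r}(|W\cap T_r|)$ is the right building block, the squared Fourier coefficient $|S(U,V)|^2$ factors cleanly across the two coordinate blocks because $U\subseteq T_1$ and $V\subseteq T_2$ are disjoint, and the final bookkeeping recovers $|C|\sum_{i,j}A_{i,j}K_p^{n_1}(i)K_q^{n_2}(j)\ge 0$. This is the natural split analogue of the classical Plancherel proof of Delsarte and is essentially how the cited reference~\cite{Sim95} establishes the fact. The paper itself does not prove Corollary~\ref{coro:gen_del} from scratch—it cites~\cite{Sim95}—and then, in the \emph{following} lemma, proves the separate claim that positive semidefiniteness of $R_s$ and $R_s'$ implies~\eqref{eq:split_del}. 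So your primary proof goes by a genuinely different (and more elementary) route than what the paper's text supplies: yours is an unconditional proof of the corollary, whereas the paper's adjacent lemma is a derivation conditional on the SDP constraints and shows the split Schrijver constraints subsume the split Delsarte constraints.

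Your proposed \emph{alternative} route, however, has a gap as stated. You cannot get~\eqref{eq:split_del} from the $k=k'=0$ block of \eqref{sem1} alone, i.e.\ from the positive semidefiniteness of $R_s$ by itself. The paper's lemma forms the nonnegative combination
\[
D_s \;=\; R_s + \frac{2^n-|C|}{|C|}\,R_s' \;=\; \sum_{i,j,i',j',t,t'} x^{0,0}_{i+j-2t,\,0,\,i'+j'-2t',\,0}\,M^{t,t'}_{i,j,i',j'} \;=\; \sum_{k,k'} x^{0,0}_{k,0,k',0}\,M_{k,k'},
\]
which lies in the split Bose--Mesner algebra precisely because the $R_s'$ term cancels the dependence on the triple-intersection parameters $t,t'$; only then does commutativity allow simultaneous diagonalization by tensor products of Krawtchouk matrices. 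The top block of \eqref{sem1} still carries the full triple-distance data $x^{t,t'}_{i,j,i',j'}$ and is not an element of the Bose--Mesner subalgebra, so ``reading off a quadratic form'' from it does not immediately produce a statement purely about $A_{i,j}$. If you want the SDP-side derivation, you need both constraints \eqref{sem1} and \eqref{sem2} and the specific positive combination $D_s$, exactly as in the paper's Lemma following this corollary.
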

	
We can show that the generalized Delsarte's inequalities on $\{A_{i,j}\}$ can be derived from the positive semidefiniteness of $R_s$ and $R_s'$ in Proposition \ref{pro:R_R'}.

\begin{lemma}
If $R_s$ and $R_s'$ are positive semidefinite, then (\ref{eq:split_del}) holds.
\end{lemma}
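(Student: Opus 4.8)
The plan is to mimic Schrijver's classical argument showing that semidefinite constraints imply Delsarte's linear inequalities, but now in the tensor-product setting of the $2$-split Terwilliger algebra. The key is that the relevant block in the block-diagonalization of Corollary~\ref{cor:blo} corresponding to $k=k'=0$ is, up to a positive diagonal congruence, a matrix built out of products of Krawtchouk polynomials $K_p^{n_1}(i)K_q^{n_2}(j)$, and the generalized Delsarte inequalities are exactly the diagonal entries of that block after a further congruence by a Krawtchouk eigenvector.

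First I would express the split distance distribution in terms of the variables $x_{i,j,i',j'}^{t,t'}$: by definition $|C| A_{a,b} = \sum_{t,t'} \lambda_{a,b,a',b'}^{\cdots}$-type sums, or more directly $A_{a,b} = \sum_{i+i'=\ldots}$ — concretely, summing the relation $\sum_{X\in C}\lambda^{t,t',X}_{i,j,i',j'}=\lambda^{t,t'}_{i,j,i',j'}$ over the right index ranges and using identity~(\ref{eqn:id_2}), one gets $A_{a,b}$ as a nonnegative combination of the $x_{a,b,\ast}^{\ast}$ — in fact $A_{a,b}=\sum_{i} \binom{n_1}{a}^{?}\cdots$; the clean statement to isolate is that $A_{a,b}=|C|\,x_{a,0,b,0}^{0,0}\binom{n_1}{a}\binom{n_2}{b}/|C|$ up to the combinatorial normalization already recorded in the identity $|C|=\sum_{a}\sum_{i+i'=a}\binom{n_1}{i}\binom{n_2}{i'}x_{i,0,i',0}^{0,0}$. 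Then I would take the $k=k'=0$ block of~(\ref{sem1}), which is the $(n_1+1)(n_2+1)\times(n_1+1)(n_2+1)$ matrix $\mathcal{B}_{0,0}$ with entries indexed by pairs $((i,i'),(j,j'))$; because this block is positive semidefinite, so is $v^T \mathcal{B}_{0,0} v$ for any vector $v$.

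The heart of the argument: choose $v$ indexed by $(i,i')$ to be $v_{(i,i')}=\binom{n_1}{i}^{1/2}\binom{n_2}{i'}^{1/2}$ — no, rather one restricts to the submatrix picking out $i=j$, $i'=j'$ appropriately, exactly as in Schrijver's derivation where one looks at the row/column sums of the $k=0$ block. Concretely, summing the $(i,i'),(j,j')$ entry of $\mathcal{B}_{0,0}$ against the Krawtchouk-weighted vector and using the standard orthogonality/expansion $\sum_i \binom{n_1}{i}^{-1/2}\beta^{n_1,0}_{i,j,0} = $ (something proportional to) $K^{n_1}_j$, one recognizes that the diagonal entry of $\mathcal{B}_{0,0}$ after congruence by the matrix whose columns are the Krawtchouk eigenvectors $\big(K_p^{n_1}(i)K_q^{n_2}(i')\big)$ equals $\sum_{i,i'} x_{i,0,i',0}^{0,0}\binom{n_1}{i}\binom{n_2}{i'} K_p^{n_1}(i)K_q^{n_2}(i')$, which is a positive multiple of $\sum_{i,j}A_{i,j}K_p^{n_1}(i)K_q^{n_2}(j)$. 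Since a PSD matrix has nonnegative diagonal, inequality~(\ref{eq:split_del}) follows. The role of $R_s'$ is only needed to handle the terms with $p=0$ or $q=0$ correctly (the "$A_0$" correction), exactly as in the classical case where $R'$ supplies the $A_0\geq$ bound; alternatively both follow from the single observation that $\beta^{n,t}_{i,j,k}$ factors through Krawtchouk polynomials when $k=0$.

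The main obstacle I anticipate is bookkeeping: correctly matching the normalization constants $\alpha,\beta$ in Corollary~\ref{cor:blo} at $k=k'=0$ with the Krawtchouk polynomials $K^{n_1}_p,K^{n_2}_q$, and making sure the congruence transformation that diagonalizes the $k=k'=0$ block is literally the tensor product of the two single-coordinate Krawtchouk transformations — this is where Lemma~\ref{lem:le_1} (the algebra is a tensor product) does the real work, since it lets one reduce to the one-variable identity $\sum_t \binom{n}{i-t,j-t,t} x_{i,j}^t$-type manipulation that Schrijver already carried out, applied independently to the $T_1$ and $T_2$ coordinates. Once the factorization is set up, the inequality is immediate from taking a diagonal entry of a PSD matrix, so no genuinely new idea beyond Schrijver's is required — only the observation that everything tensorizes.
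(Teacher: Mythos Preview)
Your overall instincts are right --- tensor structure via Lemma~\ref{lem:le_1}, Krawtchouk polynomials as eigenvalues --- but you have misidentified where $R_s'$ enters, and this is a real gap. Working with the $k=k'=0$ block of~(\ref{sem1}) alone gives you a matrix whose entries involve \emph{all} of the $x_{i,j,i',j'}^{t,t'}$, not just the distance-distribution variables $x_{a,0,b,0}^{0,0}$ that appear in~(\ref{eq:split_del}); no congruence by Krawtchouk vectors will make the unwanted variables disappear. Your remark that $R_s'$ is ``only needed to handle the terms with $p=0$ or $q=0$'' is incorrect: $R_s'$ is needed throughout, and your alternative claim that $\beta^{n,t}_{i,j,0}$ ``factors through Krawtchouk polynomials'' does not hold in the form you need.

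The paper's proof bypasses the Terwilliger block-diagonalization entirely. The key step you are missing is to form the nonnegative combination
\[
D_s \;=\; R_s + \frac{2^n-|C|}{|C|}\,R_s',
\]
which by Proposition~\ref{pro:R_R'} collapses to $\sum_{k,k'} x_{k,0,k',0}^{0,0}\, M_{k,k'}$ with $(M_{k,k'})_{X,Y}=1$ iff $|X\Delta Y\cap T_1|=k$ and $|X\Delta Y\cap T_2|=k'$. This matrix lies in the \emph{split Bose--Mesner} subalgebra (not the full split Terwilliger algebra), and under the isomorphism of Lemma~\ref{lem:le_1} one has $M_{k,k'}\cong D_k^{n_1}\otimes D_{k'}^{n_2}$. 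Since the $D_a^m$ commute and have eigenvalues $K_a^m(p)$, the positive semidefiniteness of $D_s$ reads off directly as $\sum_{k,k'} x_{k,0,k',0}^{0,0}\,K_k^{n_1}(p)K_{k'}^{n_2}(q)\ge 0$, which after the Krawtchouk symmetry $\binom{m}{a}K_b^m(a)=\binom{m}{b}K_a^m(b)$ is exactly~(\ref{eq:split_del}). So the whole proof happens in the commutative subalgebra, and the role of $R_s'$ is precisely to cancel the non-Bose--Mesner part of $R_s$.
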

	
\begin{proof}
	Define a matrix	
	\begin{equation*}
		D_{s} =  R_{s} + \frac{2^{n}-\lvert C \rvert}{\lvert C \rvert}R_{s}'= \sum_{i, j, i', j', t, t'} x_{i+j-2t, 0, i'+j'-2t', 0}^{0, 0} M_{i, j, i', j'}^{t, t'},
	\end{equation*}
	which   is a nonnegative linear combination of positive semidefinite matrices $R_{s}$ and $R_{s}'$, and hence is positive semidefinite. 
	We can rewrite $D_{s}$ as
	\begin{equation*}
		D_{s} = \sum_{k, k'} x_{k, 0, k', 0}^{0, 0} M_{k, k'},
	\end{equation*}
	where
	\begin{equation*}
		(M_{k, k'})_{X, Y} = \left\{
		\begin{aligned}
			&1, \text{ if } \lvert X \Delta Y \cap T_{1}\rvert = k, \lvert X \Delta Y \cap T_{2}\rvert = k';\\
			&0, \text{ otherwise.}
		\end{aligned} \right.
	\end{equation*}
	Define matrices $D_{a}^{m}$ for $0\leq a,m\leq n$ with entries
	\begin{equation*}
		(D_{a}^{m})_{X, Y} = \left\{
		\begin{aligned}
			&1, \text{ if } \lvert X \Delta Y\rvert = a;\\
			&0, \text{ otherwise,}
		\end{aligned}
		\right.
	\end{equation*}
	for indexes $X, Y$ in the power set of $\{1, \dots, m\}$. Notice that $\{D_{a}^{m}\}$ forms a basis for the Bose–Mesner algebra of the Hamming scheme with length $m$. Using the isomorphism in Lemma \ref{lem:le_1}, one finds that $M_{k, k'}$ is isomorphic to  $D_{k}^{n_{1}} \otimes D_{k'}^{n_{2}}$. By \cite{Del73}, we learn that Krawchuk polynomials $K_{a}^{m}(p)$ are eigenvalues of the matrices $\{D_{a}^{m}\}$ for $0 \leq p \leq m$ and, moreover, the matrices $\{M_{k, k'}\}$ are commutative. Consequently, to diagonalize $D_{s}$,  we  simply have to diagonalize $M_{k, k'}$ and the semidefinite conditions become
	\begin{equation}
		\label{ineq:cor_2}
		\sum_{k, k'} x_{k, 0, k', 0}^{0, 0} K_{k}^{n_{1}}(p) K_{k'}^{n_{2}}(q) \geq 0,
	\end{equation}
	for $p \in \{0, \dots, n_{1}\}$ and $q \in \{0, \dots, n_{2}\}$. Recall that the Krawchuk polynomial obeys the following symmetric relation~\cite{MS77}:
	\begin{equation*}
		\binom{m}{a}K_{b}^{m}(a) = \binom{m}{b}K_{a}^{m}(b).
	\end{equation*}
	Thus the inequality (\ref{ineq:cor_2}) becomes
	\begin{equation*}
		\sum_{k, k'} \binom{n_{1}}{k}\binom{n_{2}}{k'} x_{k, 0, k', 0}^{0, 0} \binom{n_{1}}{p} \binom{n_{2}}{q}K_{k}^{n_{1}}(p) K_{k'}^{n_{2}}(q) \geq 0,
	\end{equation*}
	which implies the result.
\end{proof}

Let ${\cal N}_{k}$ be the collection of codes $S \subset {\cal P}$ with minimum distance at least $d$ and $\lvert S \rvert \leq k$. Schrijver's SDP can be generalized by considering the $k$-points relation of codewords. The generalized version is as follows \cite{GMS12}.

\begin{proposition}\cite[Generalized  SDP bound]{GMS12}
    \label{pro:gen_sch_sdp}
    For $S \in {\cal N}_{k}$, define
    \begin{equation*}
        {\cal N}(S) = \{S' \subset {\cal P} : S \subset S', \lvert S \rvert + 2\lvert S' \setminus S \rvert \leq k \}.
    \end{equation*}
    Let $x = \{x_{i}\}_{i \in {\cal N}_{k}} \subset \mathbb{R}$ be a set of non-negative numbers. Define the $\lvert{\cal N}(S)\rvert \times \lvert{\cal N}(S)\rvert$ matrix indexed by ${\cal N}(S)$ as
    \begin{equation*}
    \label{gen_Sch}
        \left( M_{S}(x)\right)_{S', S''} = \left\{
        \begin{aligned}
            &x_{S' \cup S''}, \quad \text{ if } S' \cup S'' \text{ has minimum distance at least } d;\\
            &0,  \quad \quad \quad \quad \text{otherwise,}
        \end{aligned}\right.
    \end{equation*}
    for $S', S'' \in {\cal N}(S)$. Then,
    \begin{equation}
    \label{equ:k_sdp}
        A(n, d) \leq A_{k}(n, d) = {\rm max}\left\{\sum_{i \in {\cal P}}x_{\{i\}} : x_{\emptyset} = 1, x_{S} \geq 0 \text{ and } M_{S}(x) \text{ is positive semidefinite for each } S \in {\cal N}_{k} \right\}.
    \end{equation}
\end{proposition}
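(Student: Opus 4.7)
The plan is to show $A(n,d) \leq A_{k}(n,d)$ by exhibiting, for any $(n,d)$ code $C$ of maximum size, a feasible point of the semidefinite program in~\eqref{equ:k_sdp} whose objective equals $\lvert C \rvert$. The candidate I would write down is the natural set-indicator
\begin{equation*}
    x_{S} = \mathbf{1}[S \subseteq C], \qquad S \in {\cal N}_{k}.
\end{equation*}
Two of the three feasibility conditions are then immediate: $x_{\emptyset} = 1$ because $\emptyset \subseteq C$, and every $x_{S} \geq 0$ by construction. The real work lies in verifying that $M_{S}(x) \succeq 0$ for every $S \in {\cal N}_{k}$.

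For the PSD step, my plan is to introduce the column vector $v$ indexed by ${\cal N}(S)$ with entries $v_{S'} = \mathbf{1}[S' \subseteq C]$ and show the rank-one factorization $M_{S}(x) = v v^{T}$ entrywise. There are two cases for the $(S', S'')$-entry. If both $S', S'' \subseteq C$, then $S' \cup S'' \subseteq C$ inherits minimum distance at least $d$, so the definition of $M_{S}(x)$ gives $(M_{S}(x))_{S', S''} = x_{S' \cup S''} = 1 = v_{S'} v_{S''}$. Otherwise $v_{S'} v_{S''} = 0$, and the matrix entry is also zero: either $S' \cup S''$ still has minimum distance at least $d$ but is not contained in $C$, forcing $x_{S' \cup S''} = 0$, or $S' \cup S''$ fails the minimum-distance condition and the entry is zeroed by definition. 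In every case the two matrices agree, so $M_{S}(x) = v v^{T} \succeq 0$, as wanted.

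Once feasibility is in hand, the objective value evaluates to $\sum_{i \in {\cal P}} x_{\{i\}} = \lvert C \rvert$, and maximizing over optimal $C$ yields $A(n,d) \leq A_{k}(n,d)$. I expect the main obstacle to be the two-case analysis on the off-support entries of $M_{S}(x)$: one must align the two different mechanisms producing zeros --- the variable assignment itself and the hard-coded zeros in the definition of the matrix --- and check that they exactly match the zeros of the rank-one outer product. A small but necessary auxiliary point is that the matrix only references legitimate variables when its entry is nonzero: for $S', S'' \in {\cal N}(S)$ one has $\lvert S' \cup S''\rvert \leq \lvert S \rvert + \lvert S' \setminus S\rvert + \lvert S'' \setminus S\rvert \leq \lvert S \rvert + (k - \lvert S \rvert) = k$, so whenever $S' \cup S''$ also has minimum distance at least $d$ it really is an index in ${\cal N}_{k}$. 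Beyond this, no representation-theoretic input (no block-diagonalization, no Terwilliger algebra) enters; the proposition is a purely combinatorial encoding of the code as a moment-type lifting.
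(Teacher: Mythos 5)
Your proof is correct, and it is the standard feasibility argument used to establish this bound in the cited reference [GMS12]. The paper itself does not re-prove this proposition; it is imported directly from Gijswijt, Mittelmann, and Schrijver, where the same rank-one certificate $M_{S}(x) = vv^{T}$ with $v_{S'} = \mathbf{1}[S' \subseteq C]$ is used. Your two-case analysis on the entries, the observation that subsets of a code inherit the minimum-distance property (with the convention that sets of size at most one have infinite minimum distance), and the size estimate $\lvert S' \cup S'' \rvert \leq k$ ensuring that only legitimate indices of $\mathcal{N}_{k}$ are ever referenced, together constitute a complete and correct argument. No gap; nothing beyond the combinatorial lifting is needed, exactly as you say.
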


It can be shown that the generalized Schrijver's SDP constraints with $k = 4$ induces our positive semidefinite constraint on $R_{s}$.

\begin{proposition}
       The positive semidefinite constraints in the quadruple semidefinite program in (\ref{equ:k_sdp}) imply the positive semidefiniteness of $R_{s}$.
\end{proposition}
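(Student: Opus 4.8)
\emph{Proof plan.}
The strategy is to exhibit $R_{\rm s}$, up to a positive scalar, as a principal submatrix (or, after block-diagonalization, as a collection of sub-blocks) of one of the positive semidefinite matrices $M_S(x)$ occurring in the quadruple SDP~(\ref{equ:k_sdp}); since principal submatrices of positive semidefinite matrices are positive semidefinite and scaling by a positive constant preserves positive semidefiniteness, this gives $R_{\rm s}\succeq 0$. The code $S$ to use is the two-element code $S_0=\{\emptyset,\,T_1\}\subset\cP$, where $T_1$ is regarded as a codeword; it lies in $\mathcal N_4$ whenever $n_1\ge d$ (and since $n_1+n_2=n\ge d$ one may swap the roles of $T_1$ and $T_2$ otherwise). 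Note that $|S_0|+2=4$, so the entries of $M_{S_0}(x)$ are the variables $x_T$ with $|T|\le 4$; this is why a four-point configuration — hence the quadruple, not merely the triple, SDP — is used for this approach.

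First I would symmetrize: the feasible region of~(\ref{equ:k_sdp}) is invariant under the distance-preserving automorphism group $G$ of $\cP$ (the symmetry used when~(\ref{equ:k_sdp}) is applied to a code), because $M_S(g\cdot x)$ is obtained from $M_{g^{-1}S}(x)$ by a simultaneous permutation of rows and columns. Hence we may replace a feasible $x$ by its $G$-average $\bar x$, which is again feasible and for which $\bar x_T$ depends only on the isometry type of the code $T$. Next comes the combinatorial heart: the isometry type of the four-element code $\{\emptyset,T_1,X,Y\}$ is determined by the six numbers $\big(|X\cap T_1|,|Y\cap T_1|,|X\cap Y\cap T_1|,|X\cap T_2|,|Y\cap T_2|,|X\cap Y\cap T_2|\big)$, which are exactly the parameters indexing the split variable in $(R_{\rm s})_{X,Y}=x^{\,|X\cap Y\cap T_1|,\ |X\cap Y\cap T_2|}_{\,|X\cap T_1|,\ |Y\cap T_1|,\ |X\cap T_2|,\ |Y\cap T_2|}$ (taking $\emptyset$ as base point, so $\emptyset\,\Delta\,X=X$, etc.). Indeed, with $\emptyset$ as base point the $G$-orbit of $\{\emptyset,T_1,X,Y\}$ is recovered from the six pairwise Hamming distances together with the single triple-intersection $|T_1\cap X\cap Y|=|X\cap Y\cap T_1|$; from $|X|$ and $|X\,\Delta\,T_1|$ one recovers $|X\cap T_1|$ and $|X\cap T_2|$ (likewise for $Y$), and then $|X\cap Y\cap T_2|=|X\cap Y|-|X\cap Y\cap T_1|$. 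So $\bar x_{\{\emptyset,T_1,X,Y\}}$ is a function of exactly these six parameters, and — matching the minimum-distance clause defining the entries of $M_{S_0}(x)$ with Constraint~(v) of Proposition~\ref{pro:lin} (the condition that $\{\emptyset,T_1,X,Y\}$ have minimum distance $\ge d$, restricted to the distances $|X|,|Y|,|X\,\Delta\,Y|$, is precisely the negation of the vanishing condition in~(v)) — one identifies, after normalizing by $\bar x_{S_0}$, the relevant entries of $M_{S_0}(\bar x)$ with the entries of $R_{\rm s}$.

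The step I expect to be the real obstacle is turning this entrywise identification into the clean statement $R_{\rm s}\succeq 0$, because of two mismatches. (i) The index set of $M_{S_0}(\bar x)$ is $\mathcal N(S_0)=\{S_0\}\cup\{S_0\cup\{X\}:X\notin S_0\}$, which has one index fewer than $\cP$; one must therefore say precisely how the two ``absorbed'' words $\emptyset,T_1$ are handled — e.g.\ by choosing a $\{0,1\}$-valued map $B$ with $B^{\!\top}M_{S_0}(\bar x)B=\bar x_{S_0}R_{\rm s}$, or by passing to the block-diagonal form of Corollary~\ref{cor:blo} and matching the blocks $\mathcal B_{k,k'}$ of $R_{\rm s}$ against sub-blocks of the block-diagonalization of $M_{S_0}(\bar x)$. (ii) The defining clause of $M_{S_0}(x)$ also forces the entry to vanish when $|X\,\Delta\,T_1|<d$ or $|Y\,\Delta\,T_1|<d$, conditions that need not make $(R_{\rm s})_{X,Y}$ vanish; reconciling this — for instance by also using the companion root $S_0'=\{\emptyset,T_2\}$ (for each profile at least one of $|X\,\Delta\,T_1|,|X\,\Delta\,T_2|$ is $\ge d$ since $n\ge d$) and combining, or again by working blockwise so that these additional zeros lie outside the block under consideration — is the crux of the argument. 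Finally, the degenerate case $\bar x_{S_0}=0$ is dispatched separately: then the $(S_0,S_0)$-entry of the positive semidefinite matrix $M_{S_0}(\bar x)$ vanishes, forcing the whole $S_0$-row and $S_0$-column to be zero, which together with Constraint~(v) makes the corresponding part of $R_{\rm s}$ identically zero as well.
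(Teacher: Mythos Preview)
Your choice of root $S_0=\{\emptyset,T_1\}$ matches the paper, but the paper's argument is far more direct and bypasses both of the obstacles you flag. Instead of symmetrizing the SDP variables and trying to match entries of $M_{S_0}(\bar x)$ against those of $R_{\rm s}$, the paper simply plugs in the code-induced $0/1$ vector $x_T=1$ if $T\subset C$, $x_T=0$ otherwise (tacitly taking $\emptyset,T_1\in C$). With this specific $x$ the minimum-distance clause in the definition of $M_{S_0}(x)$ is never active: if $v,w\in C$ then $\{\emptyset,T_1,v,w\}\subset C$ already has minimum distance at least $d$, and if $v\notin C$ or $w\notin C$ the entry vanishes regardless. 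Hence, under the correspondence $S_0\cup\{v\}\leftrightarrow v$, one has $M_{S_0}(x)=\chi^C(\chi^C)^{T}$. Now $R_{\rm s}$ is, \emph{by its very definition}, a convex combination of matrices $\chi^{(\pi_1,\pi_2)(C)}(\chi^{(\pi_1,\pi_2)(C)})^{T}$, each of which is a permutation-conjugate of $\chi^C(\chi^C)^{T}=M_{S_0}(x)$. That is the whole argument; no block-diagonalization, no entrywise identification, no principal-submatrix extraction.

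Your plan is implicitly aiming at the stronger statement that the quadruple SDP \emph{relaxation} dominates the $R_{\rm s}\succeq0$ constraint for every feasible $x$, not only code-induced ones, and this is exactly why obstacles (i) and (ii) arise. Obstacle (ii) in particular is genuine at that level of generality: for arbitrary feasible $x$ the entries $(M_{S_0}(x))_{X,Y}$ with $\lvert X\Delta T_1\rvert<d$ or $\lvert Y\Delta T_1\rvert<d$ are forced to zero by the min-distance clause, while the corresponding $(R_{\rm s})_{X,Y}$ need not vanish, so $R_{\rm s}$ is not literally a scaled principal submatrix of $M_{S_0}(\bar x)$; your suggested patches (the companion root $\{\emptyset,T_2\}$, or passing to the blocks of Corollary~\ref{cor:blo}) would require substantial additional work to make rigorous. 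The paper reads its own proposition only at the code level, where none of this machinery is needed.
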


\begin{proof}
    Let $C$ be a code of length $n$ and minimum distance $d$. Consider $S = \{\emptyset, u\}  \in {\cal N}_{4}$,
   where $u$ has $1$ in the first $n_{1}$ positions and $0$ in the other $n_{2}$ positions,  which is a subset of $ C$.  Let $x$ be defined as $x_{i} = 1$ if $i \subset C$ and $x_{i} = 0$ for the other cases. Observe that there is an one-to-one correspondence between $N(S) = \{S \cup \{v\} : v \in {\cal P}\}$ and ${\cal P}$ by sending $S \cup \{v\}$ to $v$. Thus,  $M_{S}(x)$ can be indexed by ${\cal P}$ and  
    \begin{equation*}
        \left( M_{S}(x)\right)_{v, w} = \left\{
        \begin{aligned}
            &1, \quad \text{if } \{v, w\} \subset C;\\
            &0,  \quad \text{otherwise.}
        \end{aligned}\right.
    \end{equation*}
    Notice that $M_{S}(x) = \chi^{C} (\chi^{C})^{T}$ is positive semidefinite and
    \begin{equation*}
        R_{s}^{u} = \frac{1}{\lvert \Pi_{s}^{u}\rvert} \sum_{(\pi_{1}, \pi_{2}) \in \Pi_{s}^{u}} \chi^{(\pi_{1}, \pi_{2})(C)} (\chi^{(\pi_{1}, \pi_{2})(C)})^{T}.
    \end{equation*}
    We observe that each $\chi^{(\pi_{1}, \pi_{2})(C)} (\chi^{(\pi_{1}, \pi_{2})(C)})^{T}$ is similar to $M_{S}(x) = \chi^{C} (\chi^{C})^{T}$ by changing the order of the basis. Thus, $R_{s}^{u}$ is semidefinite.
\end{proof}

\section{Semidefinite program} \label{sec:SDP}
In this section, we provide an SDP  that gives an upper bound on the size of an $(n,d)$ code $C\subset \cP$. We include the  known linear constraints in the literature, which are critical in the SDP. We observe that an SDP with semidefinite constraints based on quadruple distances \cite{GMS12} does not improve the upper bounds of $A(18, 4)$ and $A(19, 4)$. In fact, the bounds obtained with only Schrijver's semidefinite constraints are even worse than the bounds with linear constraints \cite{Del73, Bes80} in some cases. For instance, $A(18, 4) \leq 6552$ can be obtained by certain linear constraints, while the SDP gives us an upper bound of $6553$.
	
\subsection{linear constraints}
The distance distribution $A_j$ of  $C$ in (\ref{eq:dist_distriubtion}) can be represented in terms of $x_{i,i',j,j'}^{t,t'}$ as
\begin{equation*}
	A_{j} = \sum_{i+i' = j} \binom{n_{1}}{i} \binom{n_{2}}{i'} x_{i, 0, i', 0}^{0, 0}. 
\end{equation*}
Furthermore, the propagation rule  $A(n-1, 2e-1) = A(n, 2e)$ \cite{MS77} implies that we  need to consider only even distance $j$.

\begin{lemma}\cite[Lemma $7$]{Bes80}
	\begin{equation}
		\sum_{i = 0}^{n} \binom{n-i}{k}A_{i} \leq \binom{n}{k} A(n-k, d). \label{le3}
	\end{equation} 
\end{lemma}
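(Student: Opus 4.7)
The plan is to prove the inequality by a double counting argument on triples $(X,Y,S)$, where $X,Y\in C$ and $S\subseteq\{1,\dots,n\}$ is a $k$-subset disjoint from $X\Delta Y$. For fixed $(X,Y)\in C^2$ the number of such $S$ is $\binom{n-|X\Delta Y|}{k}$, so summing first over pairs and using the definition (\ref{eq:dist_distriubtion}) of the distance distribution gives
\begin{equation*}
\sum_{S}\,|\{(X,Y)\in C^2 : S\cap(X\Delta Y)=\emptyset\}| \;=\; \sum_{(X,Y)\in C^2}\binom{n-|X\Delta Y|}{k} \;=\; |C|\sum_{i=0}^{n}\binom{n-i}{k}A_i.
\end{equation*}

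Next I would rewrite the inner sum on the left in terms of a coordinate shortening. Fix a $k$-subset $S$ and for each $v\subseteq S$ let $C_{S,v}=\{X\in C:X\cap S=v\}$. Two codewords in the same class $C_{S,v}$ agree on $S$, so their symmetric difference is contained in $\{1,\dots,n\}\setminus S$; hence the map $X\mapsto X\setminus S$ is an injection from $C_{S,v}$ to the power set of a ground set of size $n-k$ and preserves distances. The image is therefore a code of length $n-k$ and minimum distance at least $d$, giving $|C_{S,v}|\leq A(n-k,d)$. Combined with $\sum_{v\subseteq S}|C_{S,v}|=|C|$ this yields
\begin{equation*}
|\{(X,Y)\in C^2 : X\cap S=Y\cap S\}| \;=\; \sum_{v\subseteq S}|C_{S,v}|^2 \;\leq\; A(n-k,d)\sum_{v\subseteq S}|C_{S,v}| \;=\; |C|\,A(n-k,d).
\end{equation*}

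Summing this inequality over the $\binom{n}{k}$ choices of $S$ bounds the left-hand side of the first display by $\binom{n}{k}|C|\,A(n-k,d)$, and cancelling $|C|$ produces (\ref{le3}). The only step that requires a moment of care is the inheritance of minimum distance by $C_{S,v}$ after deleting the $S$-coordinates, but this is immediate since distinct elements of $C_{S,v}$ already agree on $S$, so deleting those coordinates does not decrease any pairwise distance; the rest is bookkeeping and the Cauchy--Schwarz-style estimate $\sum_v|C_{S,v}|^2\leq(\max_v|C_{S,v}|)\sum_v|C_{S,v}|$.
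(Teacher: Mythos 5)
Your proof is correct: the double count of triples $(X,Y,S)$ with $S$ a $k$-set disjoint from $X\Delta Y$, the partition of $C$ into the shortened subcodes $C_{S,v}$ (each of length $n-k$, inheriting minimum distance $\geq d$ because its members agree on $S$, hence of size $\leq A(n-k,d)$), and the estimate $\sum_{v\subseteq S}|C_{S,v}|^2\leq A(n-k,d)\sum_v|C_{S,v}|=|C|\,A(n-k,d)$ combine cleanly, and cancelling $|C|$ after summing over the $\binom{n}{k}$ choices of $S$ gives the inequality. The paper itself states this lemma without proof, citing Best~\cite{Bes80}, and your argument is the standard shortening/averaging proof one expects behind that citation, so there is nothing to contrast.
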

	
\begin{lemma}\cite[Theorem $9$]{MEL02}
	\begin{equation}
		\label{le4}
		A_{n-\frac{d}{2}}+\left\lfloor \frac{2n}{d} \right\rfloor \sum_{i < \lfloor \frac{d}{2} \rfloor}A_{n-i} \leq \left\lfloor \frac{2n}{d} \right\rfloor.
	\end{equation}
\end{lemma}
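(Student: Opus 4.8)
The plan is to prove this by a purely combinatorial argument about codewords lying almost antipodally from one another, using nothing but the minimum-distance condition. Since the excerpt already reduces matters to even minimum distance, I would write $d = 2e$ (so that $\lfloor 2n/d\rfloor = \lfloor n/e\rfloor$ and $n-\tfrac d2 = n-e$), and keep a code $C$ of length $n$ with minimum distance at least $d$ fixed throughout. The single tool needed is the identity $Y\Delta Z = (X\Delta Y)\Delta(X\Delta Z)$ together with $\lvert A\cap B\rvert \ge \lvert A\rvert + \lvert B\rvert - n$, which combine into the ``anti-triangle'' inequality
\begin{equation*}
	\lvert Y\Delta Z\rvert \;\le\; 2n - \lvert X\Delta Y\rvert - \lvert X\Delta Z\rvert \qquad \text{for all } X,Y,Z\in\cP .
\end{equation*}

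First I would record three structural facts. (a) Each $X\in C$ has \emph{at most one} codeword $Y\in C$ with $\lvert X\Delta Y\rvert \ge n-e+1$: two such $Y,Z$ would satisfy $\lvert Y\Delta Z\rvert \le 2e-2 < d$, forcing $Y=Z$. (b) If $X\in C$ does have such a (necessarily unique) ``far partner'' $Y$, then there is \emph{no} $Z\in C$ with $\lvert X\Delta Z\rvert = n-e$: such a $Z$ would give $\lvert Y\Delta Z\rvert \le 2e-1 < d$, hence $Y=Z$, contradicting $\lvert X\Delta Y\rvert \ge n-e+1 \ne n-e$. (c) Each $X\in C$ has at most $\lfloor n/e\rfloor$ codewords $Z\in C$ with $\lvert X\Delta Z\rvert = n-e$: for such $Z$ set $S_Z = X\Delta Z$, a set of size $n-e$ whose complement $\overline{S_Z}$ has size $e$; for two distinct such $Z,Z'$ we have $\lvert S_Z\Delta S_{Z'}\rvert = \lvert Z\Delta Z'\rvert \ge d = 2e$, and a one-line computation ($\lvert \overline{S_Z}\cap\overline{S_{Z'}}\rvert = n - \lvert S_Z\cup S_{Z'}\rvert$) turns this into $\overline{S_Z}\cap\overline{S_{Z'}} = \emptyset$, so we are looking at pairwise disjoint $e$-subsets of $\{1,\dots,n\}$, of which there are at most $\lfloor n/e\rfloor$.

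Then I would finish by double counting. Partition $C = C_1 \sqcup C_0$ according to whether a codeword has a far partner, and recall that $\lvert C\rvert A_j = \#\{(X,Y)\in C^2 : \lvert X\Delta Y\rvert = j\}$. Fact (a) gives $\lvert C\rvert\sum_{i<e}A_{n-i} = \#\{(X,Y)\in C^2 : \lvert X\Delta Y\rvert \ge n-e+1\} = \lvert C_1\rvert$, while facts (b) and (c) give $\lvert C\rvert A_{n-e} \le \lfloor n/e\rfloor\,\lvert C_0\rvert = \lfloor n/e\rfloor(\lvert C\rvert - \lvert C_1\rvert)$. Adding $\lfloor n/e\rfloor$ times the first relation to the second and dividing by $\lvert C\rvert$ yields $A_{n-e} + \lfloor n/e\rfloor\sum_{i<e}A_{n-i} \le \lfloor n/e\rfloor = \lfloor 2n/d\rfloor$, which is the assertion.

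The main obstacle is fact (c): one has to recognize that the codewords counted by $A_{n-e}$, once shifted against $X$, become sets whose complements all have the \emph{same} size $e$ and that the minimum distance forces these complements to be \emph{pairwise disjoint} — this is precisely what produces the factor $\lfloor 2n/d\rfloor$ — and then to pair it with the bookkeeping observation (b) that a codeword which has already spent its single far partner contributes nothing to the $A_{n-e}$ count, so the two counts add up exactly to $\lfloor 2n/d\rfloor$. Everything else is the anti-triangle inequality and elementary counting. (For odd $d$ the same scheme applies, with fact (c) even sharpening to ``at most one'' since $\lvert Z\Delta Z'\rvert$ is necessarily even while the minimum distance is odd, and one checks this still yields $\lfloor 2n/d\rfloor$.)
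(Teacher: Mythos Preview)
Your argument is correct. The anti-triangle inequality, facts (a)--(c), and the double count all go through exactly as you describe; in particular the key observation in (c)---that the complements $\overline{S_Z}$ are pairwise disjoint $e$-subsets---is the right way to extract the factor $\lfloor n/e\rfloor$, and the bookkeeping via the partition $C = C_1 \sqcup C_0$ cleanly combines the two counts.

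There is nothing to compare against in the present paper: the lemma is quoted verbatim from \cite{MEL02} without proof, as one of several imported linear constraints used to strengthen the SDP. Your write-up is a faithful and self-contained reconstruction of the combinatorial argument behind \cite[Theorem~9]{MEL02}. One minor remark: the parenthetical about odd $d$ is unnecessary here, since the paper already restricts to even $d$ via the propagation rule $A(n-1,2e-1)=A(n,2e)$, and the expression $A_{n-d/2}$ in the stated inequality only makes sense for even $d$ anyway.
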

	
Our SDP also benefits from bounds on $A(n,d,w)$, the maximum size of a length-$n$ code with minimum distance $d$ and constant weight $w$.
\begin{lemma}\cite[Lemma $5$]{Bes80}
	Let $P = A(n-1, d, \frac{1}{2}d+1)$, $Q = A(n-\frac{1}{2}d, d, \frac{1}{2}d+1)$, $R = A(n-\frac{1}{2}d+2, d, \frac{1}{2}d+2)$, then
	\begin{equation}
		\label{le5}
		\begin{aligned}
			&(\frac{1}{2}d+2)A_{n-\frac{1}{2}d-2}+\frac{1}{2}d(P-Q)A_{n-\frac{1}{2}d} +(n P-(\frac{1}{2}d+2)R)A_{n-\frac{1}{2}d+2}+n P\sum_{i = n-\frac{1}{2}d+3}^{n} A_{i} \leq n P.
		\end{aligned}
	\end{equation}
\end{lemma}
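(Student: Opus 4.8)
The plan is to reduce the bound to a statement about one reference codeword and then prove that by a packing / double-counting argument. Since $\tfrac12 d$ occurs we may take $d=2e$ even; for $x\in C$ set $a_i^{(x)}=\lvert\{y\in C:\lvert x\Delta y\rvert=i\}\rvert$, so that $A_i=\tfrac1{\lvert C\rvert}\sum_{x\in C}a_i^{(x)}$, and it suffices to prove, for every $x\in C$,
\begin{equation*}
(e+2)a_{n-e-2}^{(x)}+e(P-Q)a_{n-e}^{(x)}+\bigl(nP-(e+2)R\bigr)a_{n-e+2}^{(x)}+nP\sum_{i\ge n-e+3}a_i^{(x)}\le nP,
\end{equation*}
since averaging over $x$ yields the lemma. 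Applying the distance-preserving map $Y\mapsto x\Delta Y$ I may assume $x=\emptyset$, so every nonzero codeword $c$ has $\lvert c\rvert\ge 2e$; I then work with the complements $\bar c=\{1,\dots,n\}\setminus c$, so that a weight-$(n-\ell)$ codeword becomes an $\ell$-set and $\lvert c\Delta c'\rvert\ge 2e$ becomes $\lvert\bar c\cap\bar c'\rvert\le\tfrac12(\lvert\bar c\rvert+\lvert\bar c'\rvert)-e$. In particular codewords of weight $n-e+2$, $n-e$, $n-e-2$ give sets of sizes $e-2$, $e$, $e+2$.

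Next I would case-split on the number of codewords of weight $\ge n-e+2$, i.e.\ whose complement has size $\le e-2$: any two of them would be at distance $\le 2(e-2)<2e$, so there is at most one. In the principal case (none), the complements of the weight-$(n-e)$ codewords are pairwise disjoint $e$-sets, and the complements of the weight-$(n-e-2)$ codewords are $(e+2)$-sets pairwise meeting in $\le 2$ points and meeting each weight-$(n-e)$ complement in $\le 1$ point. Fix a coordinate $p$: after deleting it, the $(e+1)$-sets $S\setminus\{p\}$ over weight-$(n-e-2)$ complements $S\ni p$ pairwise intersect in $\le 1$ point inside a set of size $n-1$, hence form a constant-weight $(n-1,d,e+1)$ code and number at most $P$; and if moreover a weight-$(n-e)$ complement $S_0$ contains $p$, then $S\cap S_0=\{p\}$ for each such $S$, so the sets $S\setminus\{p\}$ lie in the $(n-e)$-set $\{1,\dots,n\}\setminus S_0$ and number at most $Q$. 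With $\delta_p=1$ iff $p$ lies in a weight-$(n-e)$ complement, summing the per-$p$ bound $\lvert\{S\ni p\}\rvert\le P-(P-Q)\delta_p$ (legitimate since $P\ge Q$ by monotonicity of $A(\cdot,d,w)$ in the length) over all $p$ yields, via $\sum_p\delta_p=e\,a_{n-e}^{(x)}$ and $\sum_p\lvert\{S\ni p\}\rvert=(e+2)a_{n-e-2}^{(x)}$, exactly $(e+2)a_{n-e-2}^{(x)}+e(P-Q)a_{n-e}^{(x)}\le nP$. In the two remaining cases the extremal codeword kills the other terms: if there is a codeword $w_0$ of weight exactly $n-e+2$, a support count shows no weight-$(n-e)$ codeword exists and every weight-$(n-e-2)$ codeword $w$ has its zero set contained in $\mathrm{supp}(w_0)$, so these zero sets form a constant-weight $(n-e+2,d,e+2)$ code, giving $a_{n-e-2}^{(x)}\le R$; and if there is a codeword $w_1$ of weight $\ge n-e+3$, the bound $\lvert w_1\Delta w\rvert\le 2n-\lvert w_1\rvert-\lvert w\rvert<2e$ forces $a_{n-e-2}^{(x)}=a_{n-e}^{(x)}=a_{n-e+2}^{(x)}=0$. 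Substituting into the displayed inequality settles every case.

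The step I expect to be the main obstacle is getting the coefficients to come out exactly, which hinges on the refinement in the principal case: the coarse bound $(e+2)a_{n-e-2}^{(x)}\le nP$ is easy, but to also absorb the term $e(P-Q)a_{n-e}^{(x)}$ one must notice that at a coordinate covered by a weight-$(n-e)$ complement the ambient length for the $(e+1)$-sets shrinks from $n-1$ to $n-e$, sharpening $P$ to $Q$. Pinning down exactly which shortened lengths appear --- $n-1$, $n-e$, $n-e+2$, so that precisely $A(n-1,d,e+1)$, $A(n-e,d,e+1)$ and $A(n-e+2,d,e+2)$ enter --- and verifying the two incompatibility computations that annihilate the remaining terms, is the only delicate bookkeeping; the averaging reduction and the packing estimates themselves are routine.
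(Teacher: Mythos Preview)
Your argument is correct. Note, however, that the paper does not actually prove this lemma: it is quoted verbatim from Best \cite{Bes80} and used as an input to the SDP, with no proof given in the present paper. So there is nothing to compare against here beyond the citation.

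For what it is worth, your reconstruction matches the standard proof. The reduction to a single reference codeword via averaging, the passage to complements, and the trichotomy on the existence of a codeword of weight at least $n-e+2$ (none / exactly one of weight $n-e+2$ / one of weight $\ge n-e+3$) are exactly the right moves; the key refinement---that at a coordinate covered by a weight-$(n-e)$ complement the ambient set for the $(e+1)$-subsets shrinks to size $n-e$, turning the local bound $P$ into $Q$---is precisely what produces the mixed term $e(P-Q)A_{n-e}$. All the incompatibility checks (e.g.\ that a weight-$(n-e+2)$ codeword forces $a_{n-e}^{(x)}=0$, or that a weight $\ge n-e+3$ codeword kills all three of $a_{n-e-2}^{(x)},a_{n-e}^{(x)},a_{n-e+2}^{(x)}$) go through as you state.
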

	
\begin{lemma}\cite[Theorem $10$]{MEL02}
	For $i = 1, \dots, \frac{d}{2}-1$, 
	\begin{equation}
		\label{le6}
		\begin{aligned}
			A_{n-\frac{d}{2}-i}+&(A(n, d, \frac{d}{2}+i)-A(n-\frac{d}{2}+i, d, \frac{d}{2}+i))A_{n-\frac{d}{2}+i}+A(n, d, \frac{d}{2}+i)\sum_{j>i}A_{n-\frac{d}{2}+j} \leq A(n, d, \frac{d}{2}+i).
		\end{aligned}
	\end{equation}
\end{lemma}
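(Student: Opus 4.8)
The plan is to establish the inequality in (\ref{le6}) by proving a stronger statement one codeword at a time and then averaging. Write $e=d/2$ and $w=d/2+i$, and for $x\in C$ let $n_m(x)=|\{y\in C: |x\Delta y|=m\}|$, so that by definition $A_m=\frac{1}{|C|}\sum_{x\in C}n_m(x)$. It suffices to show that for every $x\in C$,
\begin{equation*}
n_{n-e-i}(x)+\bigl(A(n,d,w)-A(n-e+i,d,w)\bigr)n_{n-e+i}(x)+A(n,d,w)\sum_{j>i}n_{n-e+j}(x)\le A(n,d,w);
\end{equation*}
averaging this over $x\in C$ yields (\ref{le6}). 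Replacing $C$ by $C\Delta x$ (which changes neither the minimum distance nor the counts $n_m(x)$) one may assume $x=\emptyset$, so the $n_m(x)$ become the numbers of weight-$m$ codewords of a code of minimum distance at least $d=2e$ containing $\emptyset$.

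The first step is a uniqueness statement: there is at most one codeword of weight at least $n-e+i$. Indeed, if $y_1\neq y_2$ both have weight $\ge n-e+i$, then $|y_1\cap y_2|\ge|y_1|+|y_2|-n\ge n-2e+2i$, hence $|y_1\Delta y_2|=|y_1\cup y_2|-|y_1\cap y_2|\le n-(n-2e+2i)=2e-2i<2e$, contradicting the minimum distance (here $i\ge1$). Therefore $n_{n-e+i}(x)+\sum_{j>i}n_{n-e+j}(x)\in\{0,1\}$, and I split into three cases: (a) there is no codeword of weight $\ge n-e+i$; (b) there is a unique such codeword $z_0$, of weight exactly $n-e+i$; (c) there is a unique such codeword $z_0$, of weight $>n-e+i$.

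The core is bounding the number of weight-$(n-e-i)$ codewords, which after complementing within all $n$ coordinates is a constant-weight-$w$ code of minimum distance $\ge d$. In case (a) this gives $n_{n-e-i}(x)\le A(n,d,w)$ while the other two summands vanish. In case (b), any weight-$(n-e-i)$ codeword $y$ satisfies $|y\Delta z_0|\ge d$ and $|y\cap z_0|\ge|y|+|z_0|-n=n-2e$, which force $|y\cap z_0|=n-2e$ and $\operatorname{supp}(z_0)^{c}\subseteq\operatorname{supp}(y)$; since all such $y$ agree on $\operatorname{supp}(z_0)^{c}$, their restrictions to $\operatorname{supp}(z_0)$ (a set of $n-e+i$ coordinates) remain distinct and equidistant, forming a constant-weight-$(n-2e)$ code of minimum distance $\ge d$, so $n_{n-e-i}(x)\le A(n-e+i,d,n-2e)=A(n-e+i,d,w)$; as $n_{n-e+i}(x)=1$ and $\sum_{j>i}n_{n-e+j}(x)=0$ here, the left side is at most $A(n-e+i,d,w)+\bigl(A(n,d,w)-A(n-e+i,d,w)\bigr)=A(n,d,w)$. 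In case (c), with $s=|z_0|>n-e+i$, any weight-$(n-e-i)$ codeword $y$ would give $|y\cap z_0|\ge|y|+s-n$ and hence $|y\Delta z_0|=|y|+s-2|y\cap z_0|\le 2n-|y|-s=n+e+i-s\le 2e-1<d$, impossible; so $n_{n-e-i}(x)=0$, $n_{n-e+i}(x)=0$, $\sum_{j>i}n_{n-e+j}(x)=1$, and the left side equals $A(n,d,w)$. In all three cases the per-codeword inequality holds.

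The main obstacle is the pair of inclusion–exclusion estimates in cases (b) and (c): the one in (b) shows that a heavy codeword of weight exactly $n-e+i$ must contain the complement of every light codeword, which is what upgrades the crude bound $A(n,d,w)$ to $A(n-e+i,d,w)$; the one in (c) shows that a heavy codeword of weight strictly above $n-e+i$ precludes any light codeword, which is exactly what licenses the coefficient $A(n,d,w)$—rather than the smaller difference—in front of $\sum_{j>i}A_{n-e+j}$. Everything else is routine: the translation reduction, the monotonicity $A(n-e+i,d,w)\le A(n,d,w)$, the complementation identity $A(m,d,w')=A(m,d,m-w')$ used to move between weights $n-e-i$, $n-2e$ and $w$, and the final averaging over $x\in C$.
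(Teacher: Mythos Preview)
The paper does not prove this lemma; it is quoted verbatim from \cite[Theorem~10]{MEL02} and used only as an auxiliary linear constraint in the SDP. There is therefore no ``paper's own proof'' to compare against, so the relevant question is simply whether your argument is correct.

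Your proof is correct. The reduction to a per-codeword inequality via translation $C\mapsto C\Delta x$, the uniqueness of a codeword of weight $\ge n-e+i$ (from $|y_1\Delta y_2|\le 2e-2i<d$), and the three-case analysis all go through as you describe. The two delicate points are handled properly: in case~(b) the equality $|y\cap z_0|=n-2e$ forces $\operatorname{supp}(z_0)^{c}\subseteq\operatorname{supp}(y)$, so the restrictions to $\operatorname{supp}(z_0)$ preserve all pairwise distances and give a length-$(n-e+i)$ constant-weight code, whence $n_{n-e-i}(x)\le A(n-e+i,d,n-2e)=A(n-e+i,d,w)$; in case~(c) the bound $|y\Delta z_0|\le 2n-|y|-s\le 2e-1$ indeed rules out any weight-$(n-e-i)$ codeword (the inclusion--exclusion lower bound on $|y\cap z_0|$ is valid here since $|y|+s>2n-2e\ge n$). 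Averaging the per-codeword inequality over $x\in C$ yields~(\ref{le6}).

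This is precisely the style of argument used in \cite{MEL02} and in Best's earlier work \cite{Bes80} for the companion inequality~(\ref{le5}): a local count around each codeword, refined by the presence or absence of a nearby ``heavy'' codeword, then averaged. So while the present paper offers no proof to compare with, your argument is the standard one and matches the spirit of the original source.
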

	
\begin{lemma}\cite[ (25)]{Sch05}
	For $i = 0, \dots, n$,
	\begin{equation}
		\label{le7}
		A_{i} \leq A(n, d, i).
	\end{equation}
\end{lemma}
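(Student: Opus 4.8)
The plan is to exploit the translation invariance of the Hamming distance on $\cP$, i.e.\ the fact that for any fixed $X\in\cP$ the map $Y\mapsto X\Delta Y$ is a distance-preserving bijection of $\cP$. Concretely, I would fix a codeword $X\in C$ and set
\[
	C_X = \{\, X\Delta Y : Y\in C,\ \lvert X\Delta Y\rvert = i \,\}.
\]
Each element of $C_X$ is a subset of $\{1,\dots,n\}$ of size exactly $i$, so $C_X$ is a constant-weight code of weight $i$. For distinct $Y,Y'\in C$ one has the set identity $(X\Delta Y)\Delta(X\Delta Y') = Y\Delta Y'$, hence $\lvert (X\Delta Y)\Delta(X\Delta Y')\rvert = \lvert Y\Delta Y'\rvert \ge d$ since $C$ has minimum distance at least $d$. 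Thus $C_X$ has minimum distance at least $d$, and because $Y\mapsto X\Delta Y$ is injective we get $\lvert C_X\rvert = \lvert\{Y\in C : \lvert X\Delta Y\rvert = i\}\rvert$. By the definition of $A(n,d,i)$ this yields $\lvert\{Y\in C : \lvert X\Delta Y\rvert = i\}\rvert \le A(n,d,i)$.

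Summing this bound over all $X\in C$ and dividing by $\lvert C\rvert$ then gives
\[
	A_i = \frac{1}{\lvert C\rvert}\sum_{X\in C}\bigl\lvert\{Y\in C : \lvert X\Delta Y\rvert = i\}\bigr\rvert \;\le\; \frac{1}{\lvert C\rvert}\cdot \lvert C\rvert\, A(n,d,i) = A(n,d,i),
\]
which is exactly the claimed inequality.

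The only step that requires any care is the set identity $(X\Delta Y)\Delta(X\Delta Y') = Y\Delta Y'$, equivalently the observation that symmetric difference with a fixed set is an isometry of $\cP$; once that is in hand, everything else is bookkeeping. I do not anticipate a genuine obstacle — this is the constant-weight refinement of the trivial estimate ``$A_i$ is at most the number of weight-$i$ words,'' improved by the minimum-distance condition that $C_X$ inherits from $C$, and it matches the statement recorded in \cite{Sch05}.
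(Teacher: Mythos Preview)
Your argument is correct and is the standard proof of this inequality: translate $C$ by a fixed codeword $X$ so that codewords at distance $i$ from $X$ become a constant-weight-$i$ code inheriting minimum distance $\ge d$, bound its size by $A(n,d,i)$, and average over $X$. The paper does not supply its own proof of this lemma---it simply quotes the inequality from \cite[(25)]{Sch05}---so there is nothing further to compare against; your write-up fills in exactly the argument the citation points to.
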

	
We also have additional linear constraints from doubly constant-weight codes. Let $T(w_{1}, t_{1}, w_{2}, t_{2}, d)$ be the maximum possible size of a doubly constant-weight code, which is a $(t_{1}+t_{2}, d, w_{1}+w_{2})$ constant-weight code such that every codeword has exactly $w_{1}$ and $w_{2}$ ones on the first $t_{1}$ and next $t_{2}$  coordinates, respectively.
	
\begin{lemma}
\cite[Theorem $3$]{KT13}
	For $i, j, t \in \{0, \dots, n\}$, we have
	\begin{equation}
		\label{le14}
		x_{i, j}^{t} \leq \frac{T(t, i, j-t, n-i, d)}{\binom{i}{t} \binom{n-i}{j-t}}x_{i, 0}^{0}.
	\end{equation}
\end{lemma}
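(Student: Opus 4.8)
The plan is to unfold the definition of $x_{i,j}^{t}$ and bound the counting quantity $\lambda_{i,j}^{t}$ directly by a doubly constant-weight code bound. I will assume $i-t\ge 0$, $j-t\ge 0$ and $i+j-2t\le n$ (outside this range $x_{i,j}^{t}$ is undefined or zero and there is nothing to prove), and start from
\[
  \lambda_{i,j}^{t}
  = \sum_{X\in C}\ \sum_{\substack{Y\in C\\ |X\Delta Y|=i}}
  \Big|\{\,Z\in C : |X\Delta Z|=j,\ |(X\Delta Y)\cap(X\Delta Z)|=t\,\}\Big|,
\]
so that it suffices to bound, uniformly over pairs $X,Y\in C$ with $|X\Delta Y|=i$, the inner count of admissible $Z$.

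For a fixed such pair I would partition $\{1,\dots,n\}$ into $P_{1}=X\Delta Y$, of size $i$, and its complement $P_{2}$, of size $n-i$, and look at the map $Z\mapsto X\Delta Z$, which is a bijection of $C$ onto $X\Delta C=\{X\Delta W:W\in C\}$. It sends every admissible $Z$ to a set with exactly $t$ elements in $P_{1}$ (because $|(X\Delta Z)\cap(X\Delta Y)|=t$) and exactly $j-t$ elements in $P_{2}$ (because $|X\Delta Z|=j$), and for distinct admissible $Z_{1},Z_{2}$ one has $(X\Delta Z_{1})\Delta(X\Delta Z_{2})=Z_{1}\Delta Z_{2}$, of size at least $d$. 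Hence the images of the admissible $Z$ form a doubly constant-weight code of block lengths $i$ and $n-i$, weights $t$ and $j-t$, and minimum distance at least $d$, so there are at most $T(t,i,j-t,n-i,d)$ of them.

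Substituting this and noting $\sum_{X\in C}\bigl|\{Y\in C:|X\Delta Y|=i\}\bigr|=\lambda_{i,0}^{0}$ gives $\lambda_{i,j}^{t}\le T(t,i,j-t,n-i,d)\,\lambda_{i,0}^{0}$. Dividing by $|C|\binom{n}{i-t,j-t,t}$ and using $\lambda_{i,0}^{0}=|C|\binom{n}{i}x_{i,0}^{0}$ then yields
\[
  x_{i,j}^{t}\ \le\ \frac{\binom{n}{i}}{\binom{n}{i-t,j-t,t}}\,
  T(t,i,j-t,n-i,d)\,x_{i,0}^{0},
\]
and it only remains to invoke the multinomial identity $\binom{n}{i-t,j-t,t}=\binom{n}{i}\binom{i}{t}\binom{n-i}{j-t}$ (with the convention $\binom{n}{a,b,c}=\tfrac{n!}{a!\,b!\,c!\,(n-a-b-c)!}$) to rewrite the prefactor as $1/\bigl(\binom{i}{t}\binom{n-i}{j-t}\bigr)$, which is exactly (\ref{le14}).

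I do not expect a real obstacle; the work is in the bookkeeping. The points to handle carefully are matching the five arguments of $T(\cdot)$ to the correct data---block lengths $i$ and $n-i$, weights $t$ and $j-t$, distance $d$; justifying the minimum-distance claim for the auxiliary code via $(X\Delta Z_{1})\Delta(X\Delta Z_{2})=Z_{1}\Delta Z_{2}$; and checking that the validity-range assumptions make the degenerate cases (such as $j=t=0$) harmless. The closing binomial manipulation is routine.
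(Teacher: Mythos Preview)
The paper does not supply its own proof of this lemma; it is quoted verbatim from Kim and Toan \cite{KT13} as an imported linear constraint, so there is nothing in the paper to compare against. Your argument is correct and is exactly the intended one: fix $(X,Y)$ with $|X\Delta Y|=i$, observe that $\{X\Delta Z:Z\in C,\ |X\Delta Z|=j,\ |(X\Delta Y)\cap(X\Delta Z)|=t\}$ is a doubly constant-weight code on the blocks $X\Delta Y$ and its complement with parameters $(t,i,j-t,n-i,d)$, sum over $(X,Y)$ to get $\lambda_{i,j}^{t}\le T(t,i,j-t,n-i,d)\,\lambda_{i,0}^{0}$, and normalize using $\binom{n}{i-t,j-t,t}=\binom{n}{i}\binom{i}{t}\binom{n-i}{j-t}$. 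The only cosmetic point is that the paper's $T(w_1,t_1,w_2,t_2,d)$ is phrased for the \emph{first} $t_1$ and \emph{next} $t_2$ coordinates, whereas your blocks $P_1,P_2$ are an arbitrary bipartition; a one-line remark that any coordinate relabeling is a Hamming isometry makes the application of $T$ literal.
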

	
\subsection{Semidefinite program for binary codes}
Collecting all the mentioned linear and semidefinite constraints, we have the following SDP on $A(n,d)$ with variables $ x_{i, j,i',j'}^{t,t'} \in\mathbb{C}$:
\begin{align}
	{\rm maximize\ } &\sum_{a = 0}^{n} \sum_{i+i' = a} \binom{n_{1}}{i}\binom{n_{2}}{i'} x_{i, 0, i', 0}^{0, 0}.  \notag\\
	{\rm subject\ to\ }       &  \mbox{ positive semidefiniteness of }(\ref{sdp_1}), (\ref{sdp_2}),(\ref{sem1}), (\ref{sem2}) \label{eq:main_SDP} 
	\\ & (\ref{sdp_3}),(\ref{eqn:id_1}), (\ref{eqn:id_2}), (\ref{eq:split_cons})  \notag \\
	& (\ref{le3}), (\ref{le4}), (\ref{le5}), (\ref{le6}), (\ref{le7}), (\ref{le14}).    \notag 
\end{align}
	
As mentioned in the previous section, the generalized Delsarte's inequalities~(\ref{eq:split_del}) are implicitly included in the SDP.

\subsection{The correctness of computer computational results}
\label{subsec:err}
Since numerical methods will be used to approximate the optimal solution to an SDP, when we have a large number of variables, the accuracy of computer simulations may not be sufficient. In \cite{Gij05}, Gijswijt used the weak duality of the optimization program to verify the SDP bounds. We describe his method here.
	
Consider an SDP of the following form:
\begin{align*}
	&\text{maximize} \quad \sum_{i=1}^{m}x_{i}c_{i}\\
	&\text{subject to} \quad \sum_{i=1}^{m}x_{i}F_{i} + F_{0}  \text{ is positive semidefinite}
\end{align*}
with variables $x_{1}, \dots, x_{m}\in\mathbb{R}$, constants $c_{1}, \dots, c_{m}\in\mathbb{R}$ and symmetric matrices $F_{0}, \dots, F_{m}\in\mathbb{R}^{n\times n}$. Its dual problem is as follows: 

\begin{align*}
	\text{minimize} \quad & \text{tr}{F_{0} Y} \\
	\text{subject to} \quad &\text{tr}{F_{i}Y} + c_{i}=0  \text{ for } i=1,\dots, m\\
	& Y \text{ is positive semidefinite.}
\end{align*}
Every feasible $Y$ in the dual problem gives an upper bound of the primal problem. 
	
In a numerical computation, a dual solution $Y$ may not exactly satisfy the constraints, but 
\begin{equation*}
	\text{tr} F_{i} Y + c_{i} = \epsilon_{i}
\end{equation*}
for some small numbers $\epsilon_{1}, \dots, \epsilon_{m}$ due to the computer accuracy. Similarly, a primal solution may not be reliable.
However, we can estimate the computation error as follows. Let $x_{1}^\star, \dots, x_{m}^\star$ be an optimal solution for the primal problem with objective value $P = \sum_{i = 1}^{m}x_{i}^\star c_{i}$. Let $X=\sum_{i=1}^{m}x_{i}^\star F_{i} + F_{0}$, which is positive semidefinite since $x_i^\star$ are feasible. Then
\begin{align*}
	\text{tr} F_{0} Y  &= \text{tr}  \left\{\left(-\sum_{i = 1}^{m} x_{i}^\star F_{i} +X\right) Y \right\}\\
	& \geq \text{tr}  \left\{-\sum_{i = 1}^{m} x_{i}^\star F_{i} Y \right\}\\
	&= -\sum_{i=1}^{m} x_{i}^\star \text{tr}  F_{i} Y  \\
	&= -\sum_{i = 1}^{m} x_{i}^\star(-c_{i}+ \epsilon_{i})\\
	&= P - \sum_{i = 1}^{m} x_{i}^\star\epsilon_{i}.
\end{align*}

In our SDP~(\ref{eq:main_SDP}), the variables $x_{i,j,i',j'}^{t,t'}$ are negative and no larger than one by~(\ref{eq:split_cons}).
For our purpose, we can have an upper bound on the optimal value $P$ that
\begin{equation*}
	P \leq	\text{tr} F_{0} Y + \sum_{i = 1}^{m}x_{i}^{*}\epsilon_{i} \leq  \text{tr} F_{0} Y + \sum_{i=1}^{m}{\rm max}\{0,\epsilon_{i}\}.
\end{equation*}
Therefore, we may use the dual optimal value and error terms to estimate an upper bound in our computational results. 

\subsection{computational results}
We use the CVX toolbox \cite{GB14}, \cite{GB08} in MATLAB with the MOSEK solver  to run our SDP. Our main results are as follows. In our SDP,  we have tested all possible values of splits $n_{1}$ and $n_{2}$. The computational results give us two improvements	$A(18,4) \leq 6551$ and 	$A(19,4) \leq 13087$. However, using Gijswijt's method in the previous subsection, we are only able to ensure one of them is improved.
\begin{theorem}
	$A(18,4) \leq 6551$.
\end{theorem}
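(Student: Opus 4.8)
The plan is to specialize the semidefinite program (\ref{eq:main_SDP}) to $n=18$, $d=4$, solve it numerically over all admissible splits, and then certify the resulting bound rigorously through the dual program described in Section~\ref{subsec:err}. The numerical optimum alone is not a proof, so the real work is producing a verified dual certificate.

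First I would instantiate the SDP. For a chosen partition with $n_1+n_2=18$, introduce the variables $x_{i,j,i',j'}^{t,t'}$; the semidefinite blocks (\ref{sdp_1}), (\ref{sdp_2}), (\ref{sem1}), (\ref{sem2}) are obtained from the block-diagonalizations in Theorem~\ref{thm:ter_blo} and Corollary~\ref{cor:blo}, and the linear constraints (\ref{sdp_3}), (\ref{eqn:id_1}), (\ref{eqn:id_2}), (\ref{eq:split_cons}) together with (\ref{le3}), (\ref{le4}), (\ref{le5}), (\ref{le6}), (\ref{le7}), (\ref{le14}) are adjoined, using as constants the known values and bounds on $A(m,4)$ for $m<18$, on constant-weight codes $A(n,d,w)$, and on doubly constant-weight codes $T(\cdot)$ taken from the literature. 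Running the solver for each admissible $(n_1,n_2)$ and keeping the smallest objective value gives a primal optimum strictly below $6552$ for at least one split.

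Second, I would produce a feasible (or near-feasible) dual solution $Y$. Writing the primal in the canonical form of Section~\ref{subsec:err} with data $F_0,\dots,F_m$ and cost vector $c$, weak duality gives for the primal optimum $P$
\[
P \;\le\; \mathrm{tr}\,F_0 Y \;+\; \sum_{i=1}^{m} x_i^\star \epsilon_i ,
\]
where $\epsilon_i = \mathrm{tr}\,F_i Y + c_i$ are the (small) dual constraint violations caused by finite precision. Since every primal variable $x_{i,j,i',j'}^{t,t'}$ lies in $[0,1]$ by (\ref{eq:split_cons}), this is further bounded by $\mathrm{tr}\,F_0 Y + \sum_{i}\max\{0,\epsilon_i\}$. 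Evaluating this quantity with the computed dual solution for the favorable split yields a value strictly less than $6552$; since $A(18,4)$ is an integer, $A(18,4)\le 6551$ follows.

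The main obstacle is numerical stability: the SDP (\ref{eq:main_SDP}) already has $O((n/2)^{6})$ variables plus many linear side constraints, so the solver output carries non-negligible error, and one must verify that $\mathrm{tr}\,F_0 Y + \sum_i \max\{0,\epsilon_i\}$ still lands below $6552$ rather than merely near it. (This is also why, although the same computation suggests $A(19,4)\le 13087$, only the $A(18,4)$ certificate is tight enough to be confirmed.) A secondary, more routine concern is assembling all the auxiliary quantities entering (\ref{le3})--(\ref{le14}) correctly, since an error in any of these constants would corrupt the certificate; these should be cross-checked against the cited tables before the dual bound is trusted.
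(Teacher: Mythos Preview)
Your proposal is correct and follows essentially the same approach as the paper: instantiate the split SDP (\ref{eq:main_SDP}) for $n=18$, $d=4$, solve it numerically, and certify the bound via the dual error estimate of Section~\ref{subsec:err}. The paper singles out the split $n_1=2$, reporting a primal value of $6551.93$ with error term below $10^{-16}$, which matches your plan of finding at least one split whose certified dual value falls strictly below $6552$.
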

\begin{proof}
    Using $n_{1} = 2$, we obtain $A(18,4)\leq 6551.93$ with an error term less than $10^{-16}$. Thus $A(18,4)\leq 6551$.
\end{proof}

\begin{remark}
$A(18,4) \leq 6551$ can be obtained by the split SDP with $n_{1} = 2$ and only the constraints (\ref{sem1}), (\ref{sem2}), (\ref{le3}) and (\ref{le5}). Then, we have $A(18,4) \leq 6551.98$ with an error term less than $10^{-16}$. All the above mentioned constraints are necessary in this case.
\end{remark}

\section{Generalization}
\label{sec:gener}
In this section, we  consider an $m$-split distance distribution defined on  a partition of $\{1, \dots, n\}$ with  arbitrary $m$ subsets, say $T_{1}, \dots, T_{m}$,  each of size $\lvert T_{p}\rvert = n_{p}$ for $p \in \{1, \dots, m\}$. Our method can be generalized in this case to introduce more semidefinite constraints. Proofs to these generalizations are similar to the $2$-split case and will be omitted. Finally, we discuss the underlying association scheme structure.\\

\subsection{Semidefinite constraints from  \texorpdfstring{$m$}--split Terwilliger algebras}
Consider a group $G = \prod_{p = 1}^{m}G_{p}$, where $G_{p}$ is the isometry group on the power set of $T_{p}$. Let $\sigma_{1}, \dots, \sigma_{q}$ be the orbits of $G$ acting on ${\cal P}$ for some $q$. Then we define  $\lvert {\cal P}\rvert \times \lvert {\cal P}\rvert$ matrices, indexed by ${\cal P}$, 
\begin{equation*}
	(M_{\sigma_{k}})_{X, Y} = \left\{
	\begin{aligned}
		&1, \text{ if } X, Y \in \sigma_{k},\\
		&0, \text{ otherwise}
	\end{aligned}\right.
\end{equation*}
for all $k$. Observe that $(X, Y)$ and $(U, V)$ belong to the same orbit if and only if $\lvert X \cap T_{k}\rvert = \lvert U \cap T_{k}\rvert$, $\lvert Y \cap T_{k}\rvert = \lvert V \cap T_{k}\rvert$ and $\lvert X \Delta Y \cap T_{k}\rvert = \lvert U \Delta V \cap T_{k}\rvert$ for each $k$. Let $\bmi = (i_{1}, \dots, i_{m})$, $\bmj = (j_{1}, \dots, j_{m})$, and $\bmt = (t_{1}, \dots, t_{m})$ for $i_{k}, j_{k}, t_{k} \in \{0, \dots, n_{k}\}$ with $i_{k} + j_{k} - 2t_{k} \in \{0, \dots, n_{k}\}$, for all $k \in \{1, \dots, m\}$. Then the orbits of $G$ can be indexed by $(\bmi, \bmj, \bmt)$ and we may rewrite $M_{\sigma_{k}}$ as
\begin{equation*}
	(M_{\bmi, \bmj}^{\bmt})_{X, Y} = \left\{
	\begin{aligned}
		&1, \text{ if } \lvert X \cap T_{k}\rvert = i_{k}, \lvert Y \cap T_{k}\rvert = j_{k}, \lvert X \cap Y \cap T_{k}\rvert = t_{k} \text{ for all } k;\\
		&0, \text{ otherwise.}
	\end{aligned}\right.
\end{equation*}
For $\bmn = (n_{1}, \dots, n_{m})$, we denote  the algebra generated by $\{M_{\bmi, \bmj}^{\bmt}\}$ over $\mathbb{C}$, defined as above, by ${\cal A}_{\bmn}$, which is called an $m$-split Terwilliger algebra of the Hamming scheme.
	
\begin{lemma}
	${\cal A}_{\bmn}$ is isomorphic to $\bigotimes_{i = 1}^{m} {\cal A}_{n_{i}}$. 
\end{lemma}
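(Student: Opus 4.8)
The plan is to generalize the proof of Lemma~\ref{lem:le_1} verbatim, replacing the two-fold tensor decomposition by an $m$-fold one. First I would fix, for each $p \in \{1,\dots,m\}$, a labelling of the generators $M_{i,j}^{n_p,t}$ of ${\cal A}_{n_p}$ by the power set ${\cal P}_p$ of $T_p$, where $i,j,t \in \{0,\dots,n_p\}$ with $i+j-2t \in \{0,\dots,n_p\}$. Because $\{T_1,\dots,T_m\}$ partitions $\{1,\dots,n\}$, the power set ${\cal P}$ is in canonical bijection with ${\cal P}_1 \times \cdots \times {\cal P}_m$ via $X \mapsto (X\cap T_1,\dots,X\cap T_m)$; this bijection is exactly the indexing bijection underlying the $m$-fold matrix tensor product $\bigotimes_{p=1}^m {\cal A}_{n_p}$, whose matrices have size $\prod_p 2^{n_p} = 2^n$, matching $\lvert{\cal P}\rvert$.

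The core step is to check the identity
\begin{equation*}
	M_{\bmi, \bmj}^{\bmt} = \bigotimes_{p=1}^{m} M_{i_p, j_p}^{n_p, t_p}
\end{equation*}
for all admissible multi-indices $\bmi=(i_1,\dots,i_m)$, $\bmj=(j_1,\dots,j_m)$, $\bmt=(t_1,\dots,t_m)$. By the defining property of the matrix tensor product, the $(X,Y)$-entry of the right-hand side equals $\prod_{p=1}^m (M_{i_p,j_p}^{n_p,t_p})_{X\cap T_p,\,Y\cap T_p}$, which is $1$ exactly when for every $p$ one has $\lvert X\cap T_p\rvert = i_p$, $\lvert Y\cap T_p\rvert = j_p$, and $\lvert X\cap Y\cap T_p\rvert = t_p$, and is $0$ otherwise. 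That is precisely the definition of $(M_{\bmi,\bmj}^{\bmt})_{X,Y}$ given in the excerpt, so the two matrices agree entrywise. (Note $\lvert X\cap Y\cap T_p\rvert = t_p$ together with $\lvert X\cap T_p\rvert = i_p$, $\lvert Y\cap T_p\rvert = j_p$ encodes $\lvert X\Delta Y\cap T_p\rvert = i_p+j_p-2t_p$, consistent with the orbit description.)

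Finally I would conclude by a dimension/spanning argument exactly as in Lemma~\ref{lem:le_1}: the set $\{\bigotimes_{p=1}^m M_{i_p,j_p}^{n_p,t_p}\}$, as $(\bmi,\bmj,\bmt)$ ranges over all admissible tuples, spans $\bigotimes_{p=1}^m {\cal A}_{n_p}$ since tensor products of spanning sets span the tensor product algebra; by the displayed identity this same set is $\{M_{\bmi,\bmj}^{\bmt}\}$, which by definition spans ${\cal A}_{\bmn}$. Since the linear map sending $M_{\bmi,\bmj}^{\bmt}$ to $\bigotimes_{p=1}^m M_{i_p,j_p}^{n_p,t_p}$ is then the identity on a common ambient space of matrices and is multiplicative and $*$-preserving (the tensor product of algebra homomorphisms is an algebra homomorphism, and conjugate-transpose distributes over tensor products), it is an isomorphism ${\cal A}_{\bmn} \cong \bigotimes_{p=1}^m {\cal A}_{n_p}$. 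There is no real obstacle here; the only point requiring mild care is verifying that the admissibility ranges for $(i_p,j_p,t_p)$ match on both sides and that the tensor factors are taken in a fixed consistent order, so that the bijection ${\cal P} \leftrightarrow \prod_p {\cal P}_p$ is compatible with the tensor indexing. As the excerpt says, the argument is identical in form to the $m=2$ case, so the proof can be safely omitted or stated in one line.
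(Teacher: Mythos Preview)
Your proposal is correct and follows exactly the approach of the paper: the paper explicitly omits the proof of this lemma, stating that it is similar to the $2$-split case (Lemma~\ref{lem:le_1}), and your argument is the natural $m$-fold generalization of that proof. The key identity $M_{\bmi,\bmj}^{\bmt} = \bigotimes_{p=1}^{m} M_{i_p,j_p}^{n_p,t_p}$ and the spanning argument you give are precisely the intended content.
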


\begin{corollary} \label{cor:blo_g}
	There is an isomorphism from ${\cal A}_{\bmn}$  to
	\begin{equation*}
		\bigoplus_{k_{1} = 0}^{\left\lfloor \frac{n_{1}}{2} \right\rfloor} \cdots \bigoplus_{k_{m} = 0}^{\left\lfloor \frac{n_{m}}{2} \right\rfloor} \mathbb{C}^{N_{\bmk} \times N_{\bmk}}, 
	\end{equation*}
	with $N_{\bmk} = \prod_{a = 1}^{m} (n_{a}-2k_{a}+1)$, that maps
	$A = \sum_{\bmi, \bmj, \bmt}x_{\bmi, \bmj}^{\bmt} M_{\bmi, \bmj}^{\bmt}$ to
	\begin{equation*}
		\bigoplus_{k_{1} = 0}^{\left\lfloor \frac{n_{1}}{2} \right\rfloor} \cdots \bigoplus_{k_{m} = 0}^{\left\lfloor \frac{n_{m}}{2} \right\rfloor} {\cal B}_{\bmk}, 
	\end{equation*}
	where
	\begin{equation*}
		\begin{aligned}
			{\cal B}_{\bmk} = \left( \sum_{\bmt} \prod_{a = 1}^{m} \binom{n_{a}-2k_{a}}{i_{a}-k_{a}}^{-\frac{1}{2}} \binom{n_{a}-2k_{a}}{j_{a}-k_{a}}^{-\frac{1}{2}}  \beta_{i_{a}, j_{a}, k_{a}}^{n_{a}, t_{a}} x_{\bmi, \bmj}^{\bmt} \right)_{(\bmi, \bmj) = (\bmk, \bmk)}^{(\bmn-\bmk, \bmn-\bmk)}
		\end{aligned} 
	\end{equation*}
\end{corollary}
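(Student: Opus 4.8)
The plan is to mirror the proof of Corollary~\ref{cor:blo} for the $2$-split case, with the single tensor factor replaced by an $m$-fold tensor product. By the preceding lemma, ${\cal A}_{\bmn}$ is isomorphic to $\bigotimes_{a=1}^{m}{\cal A}_{n_a}$ via the map sending the generator $M_{\bmi,\bmj}^{\bmt}$ to $\bigotimes_{a=1}^{m} M_{i_a,j_a}^{n_a,t_a}$; this is the $m$-fold generalization of the identity~(\ref{eq:Mijt_prod}) and is verified the same way, by comparing the entry of $M_{\bmi,\bmj}^{\bmt}$ indexed by $X,Y\in{\cal P}$ against the product over $a$ of the entries of $M_{i_a,j_a}^{n_a,t_a}$ indexed by $X\cap T_a,\,Y\cap T_a$ in the power set of $T_a$. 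Hence it suffices to block-diagonalize $\bigotimes_{a=1}^{m}{\cal A}_{n_a}$.

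For each $a$, let $\phi_a:{\cal A}_{n_a}\to\bigoplus_{k_a=0}^{\lfloor n_a/2\rfloor}\mathbb{C}^{(n_a-2k_a+1)\times(n_a-2k_a+1)}$ be Schrijver's isomorphism of Theorem~\ref{thm:ter_blo}; concretely $\phi_a$ sends a generator $M_{i,j}^{n_a,t}$ to the tuple whose $k_a$-th component is the matrix supported on the single position $(i,j)$ (when $k_a\le i,j\le n_a-k_a$) with value $\binom{n_a-2k_a}{i-k_a}^{-1/2}\binom{n_a-2k_a}{j-k_a}^{-1/2}\beta_{i,j,k_a}^{n_a,t}$, and zero in components where $i$ or $j$ lies outside $[k_a,n_a-k_a]$. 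Now take $\Phi=\phi_1\otimes\cdots\otimes\phi_m$. Since a tensor product of finite-dimensional $*$-algebra isomorphisms over $\mathbb{C}$ is again a $*$-algebra isomorphism, and since tensor product distributes over direct sums while $\mathbb{C}^{p\times p}\otimes\mathbb{C}^{q\times q}\cong\mathbb{C}^{pq\times pq}$ canonically, $\Phi$ is an isomorphism from $\bigotimes_a{\cal A}_{n_a}$ onto $\bigoplus_{\bmk}\mathbb{C}^{N_{\bmk}\times N_{\bmk}}$ with $N_{\bmk}=\prod_{a}(n_a-2k_a+1)$ and $\bmk$ ranging over $\prod_a\{0,\dots,\lfloor n_a/2\rfloor\}$, where the row/column index of the $\bmk$-block is identified with a tuple $\bmi=(i_1,\dots,i_m)$, $i_a\in[k_a,n_a-k_a]$, via the standard flattening of the Kronecker product.

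It remains to compute the image of $A=\sum_{\bmi,\bmj,\bmt}x_{\bmi,\bmj}^{\bmt}M_{\bmi,\bmj}^{\bmt}$. Under $\Phi$, the generator $M_{\bmi,\bmj}^{\bmt}$ maps to $\bigotimes_a\phi_a(M_{i_a,j_a}^{n_a,t_a})$, whose $\bmk$-block is the Kronecker product of the individual $k_a$-blocks, i.e. the matrix supported on the single position $(\bmi,\bmj)$ with value $\prod_{a=1}^{m}\binom{n_a-2k_a}{i_a-k_a}^{-1/2}\binom{n_a-2k_a}{j_a-k_a}^{-1/2}\beta_{i_a,j_a,k_a}^{n_a,t_a}$. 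Summing over $\bmt$ (which touches only the $\beta$ factors) and then over $\bmi,\bmj$ by multilinearity produces exactly the $\bmk$-block ${\cal B}_{\bmk}$ in the statement, completing the construction. One could alternatively argue by induction on $m$ with Corollary~\ref{cor:blo} as the base case, the inductive step being the same distribute-tensor-over-direct-sum argument. As in the $2$-split case, each summand $\mathbb{C}^{N_{\bmk}\times N_{\bmk}}$ is simple (a tensor product of full matrix algebras over the algebraically closed field $\mathbb{C}$), so this decomposition is irreducible.

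The only genuinely delicate point is the index bookkeeping in the last two steps: one must fix the identification of Kronecker-product block indices with tuples $(\bmi,\bmj)$ coherently across all $m$ factors, so that "supported on a single position" is literally position $(\bmi,\bmj)$ of the $N_{\bmk}\times N_{\bmk}$ matrix and the coefficient comes out as the advertised product. Everything else is a factor-by-factor application of Theorem~\ref{thm:ter_blo} together with the elementary structural facts about tensor products of matrix algebras, so no new ideas beyond the $2$-split case are needed; the full details are routine and are omitted here as announced.
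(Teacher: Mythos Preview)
Your proposal is correct and follows essentially the same approach as the paper: the paper explicitly omits the proof, stating that it is similar to the $2$-split case, and for the $2$-split case (Corollary~\ref{cor:blo}) it simply invokes Lemma~\ref{lem:le_1} to identify ${\cal A}_{n_1,n_2}$ with ${\cal A}_{n_1}\otimes{\cal A}_{n_2}$ and then applies Schrijver's block-diagonalization factorwise via the tensor product, exactly as you do. Your write-up in fact supplies more detail than the paper, including the explicit tracking of how the Kronecker-product indices match the tuple indices $(\bmi,\bmj)$.
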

	
Next we derive additional semidefinite constraints for an $(n,d)$ code $C$ from the $m$-split Terwilliger algebra. Define
\begin{equation}
	x_{\bmi, \bmj}^{\bmt} = \frac{1}{\lvert C\rvert \prod_{k = 1}^{m}\binom{n_{k}}{i_{k}-t_{k}, j_{k}-t_{k}, t_{k}}} \lambda^{\bmt}_{\bmi, \bmj},
\end{equation}
where
\begin{equation}
	\begin{aligned}
		\lambda^{\bmt}_{\bmi, \bmj} =  \lvert \{&(X, Y, Z) \in C^{3} : \lvert (X \Delta Y)\cap T_{k}\rvert = i_{k}, \lvert(X \Delta Z)\cap T_{k}\rvert = j_{k},  \\
		&\lvert((X \Delta Y)\cap (X \Delta Z))\cap T_{k}\rvert = t_{k}, \text{ for } k=1,\dots, m \} \rvert.
	\end{aligned}
\end{equation}
Now, the size of the code $C$ is 
\begin{equation*}
	\lvert C\rvert=\sum_{a = 0}^{n} \sum_{\bmi \cdot \bml = a} \prod_{k = 1}^{m} \binom{n_{k}}{i_{k}} x_{\bmi, \bmo}^{\bmo},
\end{equation*}
where $\bml$ denotes the all-one vector and  $\bmo$ denotes the zero vector. Then   the group $G$ acts on $C$ as follows:
\begin{equation*}
	(g_{1}, \dots, g_{m}) \cdot (c) = \bigcup_{k=1}^{m} g_{k}(c \cap T_{k}),
\end{equation*}
for $g_{i} \in G_{i}$, $i = 1, \dots, m$ and $c \in C$. We define the sets
\begin{align*}
	& \Pi_{\rm \bms} = \{g \in G \mid \emptyset \in g(C)\},\\
	& \Pi_{\rm \bms}' = \{g \in G \mid \emptyset \notin g(C)\},
\end{align*}
and consider the semidefinite matrices
\begin{align*}
	& R_{\rm \bms} = \frac{1}{\lvert \Pi_{\rm \bms}\rvert} \sum_{g \in \Pi_{\rm \bms}} \chi^{g(C)} (\chi^{g(C)})^{T}, \\
	& R_{\rm \bms}' = \frac{1}{\lvert \Pi_{\rm \bms}'\rvert}\sum_{g \in \Pi_{\rm \bms}'} \chi^{g(C)} (\chi^{g(C)})^{T},
\end{align*}
which are elements of ${\cal A}_{\bmn}$. In fact, we have the following proposition.
\begin{proposition}
	\label{pro:R_g}
	\begin{equation*}
		\begin{aligned}
			&R_{\rm \bms} = \sum_{\bmi, \bmj, \bmt} x_{\bmi, \bmj}^{\bmt}M_{\bmi, \bmj}^{\bmt},\\
			& 
			\begin{aligned}
				R_{\rm \bms}' =& \frac{\lvert C\rvert}{2^{n}-\lvert C\rvert}\sum_{\bmi, \bmj, \bmt} \left( x_{\bmi+\bmj-2\bmt, \bmo}^{\bmo}-x_{\bmi, \bmj}^{\bmt}\right)M_{\bmi, \bmj}^{\bmt}.
			\end{aligned}
		\end{aligned}
	\end{equation*}
\end{proposition}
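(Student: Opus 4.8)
The plan is to mimic exactly the proof of Proposition~\ref{pro:R_R'}, since the $m$-split case differs from the $2$-split case only in bookkeeping: a single index $t$ becomes a vector $\bmt=(t_1,\dots,t_m)$, a product of two multinomial coefficients becomes a product of $m$ of them, and ``intersect with $T_1$ or $T_2$'' becomes ``intersect with $T_k$ for each $k$''. First I would fix $X\in\cP$ and set $\Pi_{\bms}^{X}=\{(g_1,\dots,g_m)\in G \mid (g_1,\dots,g_m)(X)=\emptyset\}$, observe that each $g_k$ ranges over all permutations of $T_k$ so that $\lvert\Pi_{\bms}^{X}\rvert=\prod_{k=1}^m n_k!$, and define the pointwise counts
\begin{equation*}
	\lambda^{\bmt,X}_{\bmi,\bmj}=\lvert\{(Y,Z)\in C^2 : \lvert(X\Delta Y)\cap T_k\rvert=i_k,\ \lvert(X\Delta Z)\cap T_k\rvert=j_k,\ \lvert(X\Delta Y)\cap(X\Delta Z)\cap T_k\rvert=t_k\ \forall k\}\rvert,
\end{equation*}
together with $R_{\bms}^{X}=\frac{1}{\lvert\Pi_{\bms}^{X}\rvert}\sum_{g\in\Pi_{\bms}^{X}}\chi^{g(C)}(\chi^{g(C)})^{T}$.

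Second, I would carry out the orbit-counting step: for a fixed tuple $(\bmi,\bmj,\bmt)$ and a fixed $g\in\Pi_{\bms}^{X}$, the number of pairs $(Y,Z)$ with $(M_{\bmi,\bmj}^{\bmt})_{Y,Z}=1$ among the $1$-entries of $\chi^{g(C)}(\chi^{g(C)})^{T}$ equals $\lambda^{\bmt,X}_{\bmi,\bmj}$ (this uses that $g$ fixes the cardinality data on each block $T_k$), while the total number of index pairs $(Y,Z)$ with $(M_{\bmi,\bmj}^{\bmt})_{Y,Z}=1$ is $\prod_{k=1}^m\binom{n_k}{i_k-t_k,\,j_k-t_k,\,t_k}$; averaging gives
\begin{equation*}
	R_{\bms}^{X}=\sum_{\bmi,\bmj,\bmt}\frac{\lambda^{\bmt,X}_{\bmi,\bmj}}{\prod_{k=1}^m\binom{n_k}{i_k-t_k,\,j_k-t_k,\,t_k}}\,M_{\bmi,\bmj}^{\bmt}.
\end{equation*}
Then, using $R_{\bms}=\sum_{X\in C}R_{\bms}^{X}/\lvert C\rvert$ and $\sum_{X\in C}\lambda^{\bmt,X}_{\bmi,\bmj}=\lambda^{\bmt}_{\bmi,\bmj}$, the first identity follows after substituting the definition of $x_{\bmi,\bmj}^{\bmt}$.

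Third, for $R_{\bms}'$ I would use $R_{\bms}'=\sum_{X\in\cP\setminus C}R_{\bms}^{X}/(2^n-\lvert C\rvert)$ and compute $\sum_{X\in\cP}\lambda^{\bmt,X}_{\bmi,\bmj}$: given a pair $(Y,Z)\in C^2$ with $\lvert(Y\Delta Z)\cap T_k\rvert=i_k+j_k-2t_k$ for all $k$, the number of $X\in\cP$ achieving the prescribed block data is $\prod_{k=1}^m\binom{i_k+j_k-2t_k}{i_k-t_k}\binom{n_k-i_k-j_k+2t_k}{t_k}$, so that $\sum_{X\in\cP}\lambda^{\bmt,X}_{\bmi,\bmj}=\bigl(\prod_{k=1}^m\binom{i_k+j_k-2t_k}{i_k-t_k}\binom{n_k-i_k-j_k+2t_k}{t_k}\bigr)\lambda^{\bmo}_{\bmi+\bmj-2\bmt,\bmo}$. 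Subtracting $\sum_{X\in C}$ from $\sum_{X\in\cP}$ and dividing through, the per-block identity $\binom{n_k}{i_k-t_k,\,j_k-t_k,\,t_k}^{-1}\binom{i_k+j_k-2t_k}{i_k-t_k}\binom{n_k-i_k-j_k+2t_k}{t_k}=\binom{n_k}{i_k+j_k-2t_k}^{-1}$ collapses each block factor, and recognizing $\frac{1}{\lvert C\rvert\prod_k\binom{n_k}{i_k+j_k-2t_k}}\lambda^{\bmo}_{\bmi+\bmj-2\bmt,\bmo}=x_{\bmi+\bmj-2\bmt,\bmo}^{\bmo}$ yields the stated formula for $R_{\bms}'$. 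I do not anticipate a genuine obstacle: the only mild subtlety is keeping the index arithmetic on each block $T_k$ consistent so that the multinomial-to-binomial collapse applies factor by factor, exactly as in the $m=2$ proof.
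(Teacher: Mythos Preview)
Your proposal is correct and is precisely the argument the paper has in mind: the authors explicitly omit the proof of Proposition~\ref{pro:R_g}, stating that it is ``similar to the $2$-split case,'' and your write-up is exactly the $m$-block transcription of the proof of Proposition~\ref{pro:R_R'}. The only cosmetic point is that, as in the $2$-split proof, the claim $\lvert\Pi_{\bms}^{X}\rvert=\prod_{k}n_k!$ relies on identifying $\Pi_{\bms}^{X}$ with a coset of the permutation subgroup inside each $G_k$; otherwise every step (the orbit-averaging that yields $R_{\bms}^{X}$, the summations over $X\in C$ and $X\in\cP$, and the per-block multinomial-to-binomial collapse) matches the paper's $2$-split argument factor by factor.
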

	
From Proposition \ref{pro:R_g}, we can also obtain the $m$-split generalized Delsarte's inequalities~\cite{Sim95}. Consider the generalized distance distribution $\{A_{\bmi}\}$ of $C$, where
\begin{equation*}
	A_{\bmi} = \frac{1}{\lvert C\rvert}\lvert\{(a, b) \in C \times C \mid \lvert a \Delta b \cap T_{k}\rvert = i_{k} \text{ for } k=1,\dots,m\}\rvert.
\end{equation*}
\begin{corollary} 
	If $R_{\rm \bms}$ and $R_{\rm \bms}'$ are positive semidefinite, then
	\label{coro:m_gen_del}
	\begin{equation}
		\sum_{\bmi} A_{\bmi} \prod_{k = 1}^{m}K_{p_{k}}^{n_{k}}(i_{k}) \geq 0,
	\end{equation}
	for $\bmp = (p_{1}, \dots, p_{m})$ with $0 \leq p_{k} \leq n_{k}$ for all $k$.
\end{corollary}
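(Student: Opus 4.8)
The plan is to mirror the proof of the $2$-split case. First I would form
\begin{equation*}
	D_{\bms} = R_{\bms} + \frac{2^{n}-\lvert C\rvert}{\lvert C\rvert}\, R_{\bms}',
\end{equation*}
which is a nonnegative linear combination of the positive semidefinite matrices $R_{\bms}$ and $R_{\bms}'$, hence positive semidefinite. By Proposition~\ref{pro:R_g}, the contributions $x_{\bmi,\bmj}^{\bmt}$ cancel between the two summands and
\begin{equation*}
	D_{\bms} = \sum_{\bmi,\bmj,\bmt} x_{\bmi+\bmj-2\bmt,\bmo}^{\bmo}\, M_{\bmi,\bmj}^{\bmt} = \sum_{\bmk} x_{\bmk,\bmo}^{\bmo}\, M_{\bmk},
\end{equation*}
where in the last step I group together all triples $(\bmi,\bmj,\bmt)$ with $\bmi+\bmj-2\bmt=\bmk$ and set $M_{\bmk}=\sum_{\bmi+\bmj-2\bmt=\bmk} M_{\bmi,\bmj}^{\bmt}$, so that $(M_{\bmk})_{X,Y}=1$ precisely when $\lvert X\Delta Y\cap T_{l}\rvert = k_{l}$ for every $l$.

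Next I would diagonalize $D_{\bms}$. By the $m$-split analogue of Lemma~\ref{lem:le_1}, i.e.\ ${\cal A}_{\bmn}\cong\bigotimes_{l=1}^{m}{\cal A}_{n_{l}}$, the matrix $M_{\bmk}$ corresponds to $\bigotimes_{l=1}^{m} D_{k_{l}}^{n_{l}}$, a tensor product of distance-$k_{l}$ adjacency matrices of the length-$n_{l}$ Hamming scheme, where the $D_{a}^{m}$ are the Bose--Mesner basis matrices used in the $2$-split proof. The matrices $\bigotimes_{l} D_{k_{l}}^{n_{l}}$ pairwise commute and are simultaneously diagonalizable; since the eigenvalues of $D_{a}^{m}$ are the Krawtchouk values $K_{a}^{m}(p)$, $p=0,\dots,m$ \cite{Del73}, the common eigenvalues of the family $\{M_{\bmk}\}$ are $\prod_{l=1}^{m} K_{k_{l}}^{n_{l}}(p_{l})$, indexed by $\bmp=(p_{1},\dots,p_{m})$ with $0\le p_{l}\le n_{l}$. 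Hence $D_{\bms}$ being positive semidefinite is equivalent to
\begin{equation*}
	\sum_{\bmk} x_{\bmk,\bmo}^{\bmo}\prod_{l=1}^{m} K_{k_{l}}^{n_{l}}(p_{l}) \;\geq\; 0 \qquad\text{for all such }\bmp.
\end{equation*}

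Finally I would translate this into the generalized distance distribution. Putting $\bmj=\bmo$ and $\bmt=\bmo$ in the definition of $\lambda_{\bmi,\bmj}^{\bmt}$ forces $X\Delta Z=\emptyset$, i.e.\ $Z=X$, so $\lambda_{\bmk,\bmo}^{\bmo}=\lvert C\rvert\, A_{\bmk}$ and therefore $x_{\bmk,\bmo}^{\bmo}=A_{\bmk}/\prod_{l}\binom{n_{l}}{k_{l}}$. Applying the Krawtchouk symmetry $\binom{m}{a}K_{b}^{m}(a)=\binom{m}{b}K_{a}^{m}(b)$ coordinatewise, so that $K_{k_{l}}^{n_{l}}(p_{l})/\binom{n_{l}}{k_{l}}=K_{p_{l}}^{n_{l}}(k_{l})/\binom{n_{l}}{p_{l}}$, and multiplying the displayed inequality by the positive constant $\prod_{l}\binom{n_{l}}{p_{l}}$, the binomials cancel and one is left with $\sum_{\bmk} A_{\bmk}\prod_{l=1}^{m} K_{p_{l}}^{n_{l}}(k_{l})\ge 0$, which is the asserted inequality after renaming $\bmk$ to $\bmi$. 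The only step that is more than bookkeeping is the identification of $M_{\bmk}$ with $\bigotimes_{l} D_{k_{l}}^{n_{l}}$ together with the verification that this commuting family is simultaneously diagonalized with product eigenvalues; both follow from the tensor decomposition of ${\cal A}_{\bmn}$ and the classical spectral theory of the Hamming scheme, exactly as in the $2$-split argument, so no genuinely new difficulty arises.
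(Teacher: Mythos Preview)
Your proof is correct and follows exactly the approach the paper intends: the paper omits the proof of this corollary, stating only that it is similar to the $2$-split case, and your argument is precisely the straightforward $m$-fold tensor generalization of that $2$-split proof (form $D_{\bms}$, identify $M_{\bmk}$ with $\bigotimes_{l} D_{k_{l}}^{n_{l}}$, read off the Krawtchouk product eigenvalues, and apply the Krawtchouk symmetry coordinatewise).
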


It can be showed that the generalized Schrijver's SDP constraints with $k = m+2$ induces our positive semidefinite constraint on $R_{s}$.

\begin{proposition}
    The positive semidefinite constraints in the generalized Schrijver's SDP with $k = m+2$ implies the positive semidefiniteness of $R_{s}$ .
\end{proposition}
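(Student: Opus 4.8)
The plan is to carry over, essentially word for word, the reduction used above for the $2$-split case ($k=4$), replacing the single marker vector $u$ by a family of $m-1$ markers that separate the blocks $T_{1},\dots,T_{m}$. I would fix an $(n,d)$-code $C$ together with the feasible point it induces for the $(m+2)$-tuple program, namely $x_{i}=1$ if $i\subseteq C$ and $x_{i}=0$ otherwise; after translating we may assume $\emptyset\in C$, and take $S_{0}=\{\emptyset,u_{1},\dots,u_{m-1}\}\subseteq C$, where $u_{p}$ is the indicator vector of $T_{1}\cup\cdots\cup T_{p}$, so that $S_{0}\in{\cal N}_{m+2}$ is a set of $m$ codewords. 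Because $|S_{0}|+2=m+2=k$, the set $N(S_{0})=\{S_{0}\cup\{v\}:v\in{\cal P}\}$ is in bijection with ${\cal P}$ via $S_{0}\cup\{v\}\mapsto v$, so $M_{S_{0}}(x)$ becomes a $|{\cal P}|\times|{\cal P}|$ matrix indexed by ${\cal P}$; its $(v,w)$-entry is $x_{S_{0}\cup\{v,w\}}$, which equals $1$ exactly when $\{v,w\}\subseteq C$ (the minimum-distance conditions on $S_{0}\cup\{v,w\}$ holding automatically since $S_{0}\cup\{v,w\}\subseteq C$), so $M_{S_{0}}(x)=\chi^{C}(\chi^{C})^{T}$ and the $(m+2)$-tuple constraint attached to $S_{0}$ is precisely $\chi^{C}(\chi^{C})^{T}\succeq 0$.

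Next I would recover $R_{\bms}$ from this single matrix: for every $g\in\Pi_{\bms}=\{g\in\prod_{p}G_{p}:\emptyset\in g(C)\}$ the matrix $\chi^{g(C)}(\chi^{g(C)})^{T}$ is the conjugate $P_{g}M_{S_{0}}(x)P_{g}^{T}$ of $M_{S_{0}}(x)$ by the permutation of ${\cal P}$ induced by $g$, hence positive semidefinite, and averaging over $g\in\Pi_{\bms}$ gives $R_{\bms}\succeq 0$ — exactly the argument of the $2$-split proposition with $\{\emptyset,u\}$ replaced by $S_{0}$. Dually, and this is the reading that matches the phrase ``implies the positive semidefiniteness of $R_{\bms}$'' at the level of the SDPs, I would symmetrise a feasible point of the $(m+2)$-tuple program over the partition-preserving permutation group $\prod_{p}S_{n_{p}}$, under which that program is invariant; the symmetrised $M_{S_{0}}$ then lies in ${\cal A}_{\bmn}$, and the block-diagonalisation of Corollary~\ref{cor:blo_g} converts $M_{S_{0}}\succeq 0$ into exactly the constraints (\ref{sem1}).

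The step I expect to be the real work is verifying that the symmetrised $M_{S_{0}}$ carries precisely the $m$-split variables $x_{\bmi,\bmj}^{\bmt}$ appearing in (\ref{sem1}). For a pair $(v,w)$ lying in the $\prod_{p}S_{n_{p}}$-orbit labelled by $(\bmi,\bmj,\bmt)$ one must check first that every pairwise and higher intersection among $\{v,w,u_{1},\dots,u_{m-1}\}$ is determined by $(\bmi,\bmj,\bmt)$ — which holds because $u_{p}\cap T_{k}=T_{k}$ for $k\le p$ and $u_{p}\cap T_{k}=\emptyset$ for $k>p$, so for instance $|v\cap w\cap u_{p}|=\sum_{k\le p}t_{k}$ and similarly for the weights — and second that the averaging multiplicities reproduce the normalising factors $\prod_{k}\binom{n_{k}}{i_{k}-t_{k}, j_{k}-t_{k}, t_{k}}$ in the definition of $x_{\bmi,\bmj}^{\bmt}$, a count entirely parallel to the proof of Proposition~\ref{pro:R_R'}. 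This is also precisely where the budget $k=m+2$ is used: with $|S_{0}|=m$ the sets $S_{0}\cup\{v,w\}$ have size at most $m+2$, and it is the $m-1$ markers that make the entry $(M_{S_{0}})_{v,w}$ record the full split statistics of $(v,w)$; with fewer markers (e.g.\ $S_{0}=\{\emptyset\}$, which is all the ordinary triple program supplies) the entry would see only the non-split data of the pair and could not control (\ref{sem1}).
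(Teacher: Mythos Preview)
Your proposal is correct and takes essentially the same approach as the paper: the paper omits the general-$m$ proof with the remark that it is similar to the $2$-split case, and your choice of nested markers $u_{p}=\mathbf{1}_{T_{1}\cup\cdots\cup T_{p}}$ is exactly the natural extension of the paper's single marker $u=\mathbf{1}_{T_{1}}$, after which the identification $M_{S_{0}}(x)=\chi^{C}(\chi^{C})^{T}$ and the averaging over the partition-preserving group match the paper's argument verbatim. Your second and third paragraphs --- the symmetrisation of an arbitrary feasible point and the verification that the orbit data of $(v,w)$ relative to $S_{0}$ recovers the split variables $x_{\bmi,\bmj}^{\bmt}$ --- go beyond what the paper writes down (and in fact give the cleaner reading of the implication at the level of the SDPs); one small caveat you should make explicit is that $S_{0}\in{\cal N}_{m+2}$ requires $\min_{k}n_{k}\ge d$, a hypothesis the paper also tacitly assumes in the $2$-split case.
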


By the block diagonal form for ${\cal A}_{\bmn}$, we have additional semidefinite constraints. For $\bmk = (k_{1}, \dots, k_{m})$ with $k_{a} \in \{0, \dots, \lfloor \frac{n_{a}}{2} \rfloor \}$ and $a \in\{1, \dots, m\}$, the matrices
\begin{equation}
	\label{eq:gen_sdp_1}
	\begin{aligned}
		\left( \sum_{\bmt} \prod_{a = 1}^{m} \binom{n_{a}-2k_{a}}{i_{a}-k_{a}}^{-\frac{1}{2}} \binom{n_{a}-2k_{a}}{j_{a}-k_{a}}^{-\frac{1}{2}}  \beta_{i_{a}, j_{a}, k_{a}}^{n_{a}, t_{a}} x_{\bmi, \bmj}^{\bmt} \right)_{(\bmi, \bmj) = (\bmk, \bmk)}^{(\bmn-\bmk, \bmn-\bmk)},
	\end{aligned} 
\end{equation}
\begin{equation}
	\label{eq:gen_sdp_2}
	\begin{aligned}
		\left( \sum_{\bmt} \prod_{a = 1}^{m} \binom{n_{a}-2k_{a}}{i_{a}-k_{a}}^{-\frac{1}{2}} \binom{n_{a}-2k_{a}}{j_{a}-k_{a}}^{-\frac{1}{2}}  \beta_{i_{a}, j_{a}, k_{a}}^{n_{a}, t_{a}} (x_{\bmi+\bmk-2\bmt, \bmo}^{\bmo}-x_{\bmi, \bmj}^{\bmt}) \right)_{(\bmi, \bmj) = (\bmk, \bmk)}^{(\bmn-\bmk, \bmn-\bmk)}
	\end{aligned} 
\end{equation}
are semidefinite.

\begin{proposition}
    Let $C$ be a code with length $n$ and minimum distance at least $d$. Then we have the following linear constraints on $x_{\bmi, \bmj}^{\bmt}$ corresponding to $C$: for proper $\bmi, \bmj, \bmt$,
	\begin{align}
		\label{eq:gen_lin}
		\begin{array}{cl}
		    \mbox{(i) }& x_{\bmo, \bmo}^{\bmo} = 1 \\
			\mbox{(ii) }& 0 \leq x_{\bmi, \bmj}^{\bmt} \leq x_{\bmi, \bmo}^{\bmo} \\
			\mbox{(iii) }& x_{\bmi, \bmo}^{\bmo} + x_{\bmo, \bmj}^{\bmo} \leq 1 + x_{\bmi, \bmj}^{\bmt} \\
			\mbox{(iv) }& x_{\bmi, \bmj}^{\bmt} = x_{\bma, \bmb}^{\bmc} \mbox{ if } (\bmi, \bmj, \bmi+\bmj-2\bmt)  \\
			&\mbox{ is a permutation of } (\bma, \bmb, \bma+\bmb-2\bmc),  \\
			\mbox{(v) }&  x_{\bmi, \bmj}^{\bmt} = 0 \mbox{ if } \{\bmi, \bmj, \bmi+\bmj-2\bmt\} \cap \{1, \dots, d-1\} \neq \emptyset. 
		\end{array}
	\end{align}
\end{proposition}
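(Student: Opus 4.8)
The plan is to mimic, coordinate by coordinate, the derivation of the $2$-split constraints in Proposition~\ref{pro:lin} --- which in turn follows Schrijver's argument in Proposition~\ref{pro:sch} --- now based on the identification of $R_{\rm \bms}$ and $R_{\rm \bms}'$ as elements of ${\cal A}_{\bmn}$ supplied by Proposition~\ref{pro:R_g}. Constraints (i) and (iv) are immediate from the definition of $x_{\bmi,\bmj}^{\bmt}$. For (i), $\lambda_{\bmo,\bmo}^{\bmo}$ counts triples $(X,Y,Z)\in C^{3}$ with $X=Y=Z$, hence equals $|C|$, while the normalizing product in the definition of $x_{\bmo,\bmo}^{\bmo}$ equals $1$, so $x_{\bmo,\bmo}^{\bmo}=1$. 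For (iv), note that for a triple $(X,Y,Z)$ the vectors $\bmi$, $\bmj$, $\bmi+\bmj-2\bmt$ are precisely the three pairwise split-distance vectors of $(X,Y,Z)$ (the last one because $|A\Delta B|=|A|+|B|-2|A\cap B|$ on each block $T_{k}$); a permutation of $(\bmi,\bmj,\bmi+\bmj-2\bmt)$ is induced by the corresponding permutation of $(X,Y,Z)$, under which, on each $T_{k}$, the triple of sizes $(i_{k}-t_{k},\,j_{k}-t_{k},\,t_{k})$ of the disjoint sets $(X\setminus Y)\cap T_{k}$, $(Y\setminus X)\cap T_{k}$, $(X\cap Y)\cap T_{k}$ is merely permuted, so $\prod_{k}\binom{n_{k}}{i_{k}-t_{k},\,j_{k}-t_{k},\,t_{k}}$ is unchanged; since $C^{3}$ is stable under permuting the coordinates, $\lambda_{\bmi,\bmj}^{\bmt}$ --- and hence $x_{\bmi,\bmj}^{\bmt}$ --- is invariant, which is exactly (iv).

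For (ii), (iii), (v) I would read off matrix entries and use the minimum distance. By Proposition~\ref{pro:R_g}, if $X,Y\in{\cal P}$ have block-sizes $\bmi=(|X\cap T_{k}|)_{k}$, $\bmj=(|Y\cap T_{k}|)_{k}$, $\bmt=(|X\cap Y\cap T_{k}|)_{k}$, then $(R_{\rm \bms})_{X,Y}=x_{\bmi,\bmj}^{\bmt}$, so $(R_{\rm \bms})_{X,X}=x_{\bmi,\bmi}^{\bmi}=x_{\bmi,\bmo}^{\bmo}$ by (iv). The left inequality of (ii) is $\lambda_{\bmi,\bmj}^{\bmt}\ge0$; the right one holds because $R_{\rm \bms}=\tfrac{1}{|\Pi_{\rm \bms}|}\sum_{g}\chi^{g(C)}(\chi^{g(C)})^{T}$ with $0/1$ vectors, so $\chi^{g(C)}_{X}\chi^{g(C)}_{Y}\le\chi^{g(C)}_{X}$ for every $g$ and thus $(R_{\rm \bms})_{X,Y}\le(R_{\rm \bms})_{X,X}$ (equivalently, one invokes the diagonal-dominance remark following Proposition~\ref{pro:sch}). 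For (iii) one repeats this for $R_{\rm \bms}'$: by Proposition~\ref{pro:R_g}, up to the positive scalar $\tfrac{|C|}{2^{n}-|C|}$, $(R_{\rm \bms}')_{X,Y}\propto x_{\bma+\bmb-2\bmc,\bmo}^{\bmo}-x_{\bma,\bmb}^{\bmc}$ where $\bma,\bmb,\bmc$ are the block-sizes of $X$, $Y$, $X\cap Y$; choosing $X,Y$ with $\bma=\bmi$, $\bmb=\bmi+\bmj-2\bmt$, $\bmc=\bmi-\bmt$ gives $(R_{\rm \bms}')_{X,X}\propto 1-x_{\bmi,\bmo}^{\bmo}$ and, after using (iv), $(R_{\rm \bms}')_{X,Y}\propto x_{\bmj,\bmo}^{\bmo}-x_{\bmi,\bmj}^{\bmt}$, so $(R_{\rm \bms}')_{X,X}\ge(R_{\rm \bms}')_{X,Y}$ rearranges to (iii). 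Finally, (v) is the minimum-distance condition: any $(X,Y,Z)$ counted by $\lambda_{\bmi,\bmj}^{\bmt}$ satisfies $|X\Delta Y|=\bmi\cdot\bml$, $|X\Delta Z|=\bmj\cdot\bml$, $|Y\Delta Z|=(\bmi+\bmj-2\bmt)\cdot\bml$, so if one of these sums lies in $\{1,\dots,d-1\}$ the corresponding pair would be two distinct codewords at distance less than $d$, which is impossible; hence $\lambda_{\bmi,\bmj}^{\bmt}=0$ and $x_{\bmi,\bmj}^{\bmt}=0$.

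The step I expect to be the only genuine obstacle --- and a purely bookkeeping one, exactly as in the $2$-split case --- is making the choice of $X,Y$ in (iii) legitimate: sets with the prescribed block-sizes $(\bma,\bmb,\bmc)=(\bmi,\bmi+\bmj-2\bmt,\bmi-\bmt)$ must actually exist in ${\cal P}$, i.e.\ $i_{k}+j_{k}-t_{k}\le n_{k}$ on every block, a realizability requirement belonging to the ``proper'' parameter ranges for which the constraints are asserted (and outside which $\lambda_{\bmi,\bmj}^{\bmt}=0$, so nothing beyond (ii) and the distance-distribution relations is needed). Once these ranges are pinned down, the proof is just the argument of Proposition~\ref{pro:lin} run with $m$ blocks in place of two, with Proposition~\ref{pro:R_g} replacing Proposition~\ref{pro:R_R'}.
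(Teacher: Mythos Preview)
Your proposal is correct and follows exactly the approach the paper indicates: the paper omits the proof of this $m$-split proposition, stating that ``proofs to these generalizations are similar to the $2$-split case,'' and your argument is precisely that $2$-split argument (Proposition~\ref{pro:lin} together with the remarks after Proposition~\ref{pro:sch}) carried through block by block, with Proposition~\ref{pro:R_g} playing the role of Proposition~\ref{pro:R_R'}. One small slip to fix in (iv): the three disjoint sets whose sizes are $(i_{k}-t_{k},\,j_{k}-t_{k},\,t_{k})$ are $(X\Delta Y)\setminus(X\Delta Z)$, $(X\Delta Z)\setminus(X\Delta Y)$, $(X\Delta Y)\cap(X\Delta Z)$ restricted to $T_{k}$, not $(X\setminus Y)$, $(Y\setminus X)$, $(X\cap Y)$; with that correction your permutation-invariance check goes through as written.
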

	
To sum up, we have an SDP for $A(n, d)$  with variables $x_{\bmi, \bmj}^{\bmt} \in \mathbb{C}$:
\begin{align}
	{\rm maximize } &\sum_{a = 0}^{n} \sum_{\bmi \cdot \bml = a} \prod_{k = 1}^{m} \binom{n_{k}}{i_{k}} x_{\bmi, \bmo}^{\bmo}.  \notag\\
	{\rm subject\ to\ }       &  \mbox{ positive semidefiniteness of }(\ref{eq:gen_sdp_1}) \text{ and } (\ref{eq:gen_sdp_2}) \notag
	\\ & (\ref{eq:gen_lin})   \notag
\end{align}
This SDP is called an $m$-split SDP, which corresponds to an $m$-split Terwilliger algebra ${\cal A}_{\bmn}$.

\subsection{Underlying structure}
\label{sec:struct}
	
Herein we describe a special structure for an association scheme, which is inspired by the split method used in this paper.

\begin{definition}($m$-split property)
	Let $S = (X, \{R_{i}\}_{i=0}^{k})$ be an association scheme. We say that $S$  is \textit{$m$-split}  for some $1 \leq m \leq \lvert X\rvert$ if there exist $m$ association schemes $S_{1} = (X_{1}, \{R_{i}^{(1)}\}_{i=0}^{k_{1}}), \dots, S_{m} = (X_{m}, \{R_{i}^{(m)}\}_{i=0}^{k_{m}})$ and a collection of maps $\{f_{i}\}$ such that $f_{i}: X \longrightarrow X_{i}$ is surjective for each $i$ and for $(x, y) \in X \times X$, $(x, y) \in R_{j}$ if and only if $\sum_{i = 1}^{m} d_{i} = j$, where $(f_{i}(x), f_{i}(y)) \in R_{d_{i}}^{(i)}$ for each $i$. 
\end{definition}
		
Clearly, we have the following lemma.
\begin{lemma}
Every association scheme is $1$-split.
\end{lemma}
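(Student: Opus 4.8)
The plan is to exhibit the trivial $1$-split decomposition in which the single factor is the scheme itself. Concretely, I would take $m = 1$, set $S_{1} = S$ (so that $X_{1} = X$, $k_{1} = k$, and $R_{i}^{(1)} = R_{i}$ for $0 \leq i \leq k$), and let $f_{1} : X \longrightarrow X_{1}$ be the identity map on $X$. Since the vertex set of an association scheme is a nonempty finite set, we have $1 \leq m = 1 \leq \lvert X\rvert$, so the admissible range for $m$ is respected, and the identity map is manifestly surjective.

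It then only remains to verify the defining equivalence. Given $(x, y) \in X \times X$, there is a unique index $d_{1}$ with $(f_{1}(x), f_{1}(y)) = (x, y) \in R_{d_{1}}^{(1)} = R_{d_{1}}$, and the constraint $\sum_{i = 1}^{1} d_{i} = j$ collapses to $d_{1} = j$. Hence $(x, y) \in R_{j}$ if and only if $d_{1} = j$, which is precisely the condition required in the definition. This shows that $S$ is $1$-split.

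There is essentially no obstacle to overcome: the only point worth noting is that the definition of the $m$-split property allows $m$ to be as small as $1$ and allows one of the $m$ constituent association schemes to be $S$ itself, so that no genuine "splitting" is needed in this degenerate case. The substance of the notion, of course, resides in the cases $m \geq 2$ — for instance the $2$-split Terwilliger algebra ${\cal A}_{n_{1}, n_{2}}$ analyzed earlier, which corresponds to splitting the Hamming scheme on $\{1, \dots, n\}$ along the partition $\{T_{1}, T_{2}\}$.
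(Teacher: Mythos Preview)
Your proof is correct and matches the paper's treatment: the paper states the lemma with the remark ``Clearly, we have the following lemma'' and offers no proof, since the statement follows immediately by taking $S_{1}=S$ and $f_{1}$ the identity, exactly as you do.
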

	
For an  $m$-split association scheme $S$, we can define an $m$-split Bose-Mesner algebra by $\bigotimes_{i=1}^{m} {\cal B}(S_{i})$ and an $m$-split Terwilliger algebra by $\bigotimes_{i=1}^{m} {\cal T}(S_{i})$, where ${\cal B}(S_{i})$ is the Bose-Mesner algebra of $S_{i}$ and ${\cal T}(S_{i})$ is the Terwilliger algebra of $S_{i}$.

In our case of binary codes, $S = {\cal H}(n, 2)$ is the Hamming scheme of length $n$ over $\mathbb{F}_{2}$ and each $S_{i}$ is ${\cal H}(n_{i}, 2)$. Clearly, $S$ is $m$-split for $1 \leq m \leq n$. We may observe that $m$-split Bose-Mesner algebras induce the $m$-split generalized Delsarte's inequalities and $m$-split Terwilliger algebras induce $m$-split triple distances SDP. Moreover, there are ascending chains for algebras.
\begin{equation}
	\label{eq:chain_1}
	{\cal B}_{1} \leq {\cal B}_{2} \leq \cdots \leq {\cal B}_{n},
\end{equation}
\begin{equation}
	\label{eq:chain_2}
	{\cal T}_{1} \leq {\cal T}_{2} \leq \cdots \leq {\cal T}_{n}.
\end{equation}
In here, ${\cal B}_{m}$ is an $m$-split Bose-Mesner algebra of $S$, and ${\cal T}_{m}$ is an $m$-split Terwilliger algebra of $S$ for $m = 1, \dots, n$. The chain (\ref{eq:chain_1}) allows us to combine an $l$-split generalized Delsarte's inequalities to an $m$-split generalized Delsarte's inequalities for $l \leq m$ by merging some partitions as the one. On the other hand, the chain (\ref{eq:chain_2}) allows us to add constraints in an $l$-split triple distances SDP to an $m$-split triple distances SDP for some $l \leq m$ without increasing the number of variables.

\section{Conclusion}
\label{sec:conslus}
In conclusion, we have derived generalized Schrijver semidefinite constraints by considering the split of Terwilliger algebra. Our split semidefinite constraints are a natural generalization of Schrijver's constraints since they also implied the generalized Delsarte inequalities. 
	
By implementing a $2$-split  SDP, we have improved the upper bounds for $A(18, 4)$ and $A(19, 4)$. 	The MATLAB programs of the SDPs in this paper can be found at: 
\begin{center}
	\url{https://github.com/PinChiehTseng/Split_SDP_solution}
\end{center}
\noindent 	
We have confirmed that $A(18,4)\leq 6551$ by showing that an upper bound on the numerical error is small enough. As for $A(19,4)$, we obtain $A(19,4)\leq 13087.5$  using $n_{1} = 9$.  The currently best known bound for $A(19,4)$ is $13104$.  As for the error estimate,  we obtained an   upper bound on the numerical error as large as   $215.7376$. Since this estimate is over pessimistic, it provides no information about out result. Since the  solver normally returned  without any warning, we believed this figure is correct. More accurate solvers could be considered. 

Our split approach can be extended to other related problems. For example, one may consider an arbitrary finite field or $k$-distance SDP for arbitrary $k$. Notice that the method described in \cite{GMS12} with $\lvert S \rvert = 2$ has been applied to improve upper bounds for constant weight codes. Moreover, the algebra considered in \cite{Pol19_2} is of the form $\bigotimes_{i} \mathcal{A}_{n_{i}}$ with $\sum_{i}n_{i} = n$. The constraint $R_{s}$ for constant weight codes has been studied. We can have a similar application by adding the constraint $R'_{s}$, which potentially opens a way to strengthen the upper bounds for constant weight codes. As for the upper bounds on the size of a set  with few distances or intersecting families of subsets \cite{MN11, BM11}, we may directly apply our method to those problems. Moreover, as Schrijver's SDP has been extended to the maximum size problem of a code in the fold of $n$-cube by Hou et al. \cite{HHGY20}, we might have a similar extension. The idea of isometry groups might be applied to spherical codes as well. 

As an example of association schemes, our method corresponds to a special case of the $m$-split property of the Hamming scheme. Thus, our method may be extended to other association schemes, which share this $m$-split property. It is an interesting research direction.

Finally, it has been shown that certain polynomial symmetric properties can be exploited to obtain additional matrix inequalities and hence improve the  semidefinite programming bounds for the kissing number problem~\cite{caluza2018improving}. It is unknown whether a similar ideas  could be applied to the case of binary codes to obtain better bounds for $A(n,d)$.  
	
\section*{ACKNOWLEDGEMENT}
	
We would like to thank Dion Gijswijt, Alexander Schrijver, and Hajime Tanaka for helpful discussions. We would also like to thank the anonymous referees for their valuable comments.

PCT and CYL were supported by the Ministry of Science and Technology (MOST) in Taiwan under Grant MOST110-2628-E-A49-007. WHY was supported by MOST under Grant109-2628-M-008-002-MY4.

\newcommand{\etalchar}[1]{$^{#1}$}


\begin{thebibliography}{BBM{\etalchar{+}}78}

\bibitem[ALB20]{ALB20}
A.~Ashikhmin, C.-Y. Lai, and T.~A. Brun.
\newblock Quantum data-syndrome codes.
\newblock {\em IEEE J. Sel. Area. Comm.}, 38(3):449 -- 462, 2020.
\newblock \href {https://doi.org/10.1109/JSAC.2020.2968997}
  {\path{doi:10.1109/JSAC.2020.2968997}}.

\bibitem[Bac10]{Bac10}
C.~Bachoc.
\newblock Applications of semidefinite programming to coding theory.
\newblock In {\em 2010 IEEE Inf. Theory Workshop (ITW)}, pages 1--5, 2010.
\newblock \href {https://doi.org/10.1109/CIG.2010.5592938}
  {\path{doi:10.1109/CIG.2010.5592938}}.

\bibitem[BBIT21]{BBIT21}
E.~Bannai, E.~Bannai, T.~Ito, and R.~Tanaka.
\newblock {\em Algebraic combinatorics}.
\newblock De Gruyter Series in Discrete Mathematics and Applications. Walter de
  Gruyter GmbH, Co KG, Berlin, Boston, 2021.

\bibitem[BBM{\etalchar{+}}78]{BBMOS_1987}
M.~Best, A.~Brouwer, F.~MacWilliams, A.~Odlyzko, and N.~Sloane.
\newblock Bounds for binary codes of length less than 25.
\newblock {\em IEEE Trans. Inf. Theory}, 24(1):81--93, 1978.
\newblock \href {https://doi.org/10.1109/TIT.1978.1055827}
  {\path{doi:10.1109/TIT.1978.1055827}}.

\bibitem[Bes80]{Bes80}
M.~Best.
\newblock Binary codes with a minimum distance of four.
\newblock {\em IEEE Trans. Inf. Theory}, 26(6):738--742, 1980.
\newblock \href {https://doi.org/10.1109/TIT.1980.1056269}
  {\path{doi:10.1109/TIT.1980.1056269}}.

\bibitem[BM11]{BM11}
A.~Barg and O.~R. Musin.
\newblock Bounds on sets with few distances.
\newblock {\em J. Comb. Theory Ser. A.}, 118(4):1465--1474, 2011.
\newblock \href {https://doi.org/10.1016/j.jcta.2011.01.002}
  {\path{doi:10.1016/j.jcta.2011.01.002}}.

\bibitem[Bro]{Brouwer}
A.~Brouwer.
\newblock Table of general binary codes.
\newblock \url{https://www.win.tue.nl/~aeb/codes/binary-1.html}.
\newblock accessed on Feb. 14, 2022.

\bibitem[BV08]{BV08}
C.~Bachoc and F.~Vallentin.
\newblock New upper bounds for kissing numbers from semidefinite programming.
\newblock {\em J. Am. Math. Soc.}, 21(3):909--924, 2008.
\newblock \href {https://doi.org/10.1090/S0894-0347-07-00589-9}
  {\path{doi:10.1090/S0894-0347-07-00589-9}}.

\bibitem[BY13a]{BY14}
A.~Barg and W.-H. Yu.
\newblock New bounds for equiangular lines.
\newblock {\em Discrete geometry and algebraic combinatorics}, 625:111--121,
  2013.

\bibitem[BY13b]{BY13}
A.~Barg and W.-H. Yu.
\newblock New bounds for spherical two-distance sets.
\newblock {\em Exp. Math.}, 22(2):187--194, 2013.
\newblock \href {https://doi.org/10.1080/10586458.2013.767725}
  {\path{doi:10.1080/10586458.2013.767725}}.

\bibitem[Del73]{Del73}
P.~Delsarte.
\newblock An algebraic approach to the association schemes of coding theory.
\newblock {\em Philips Res}, 10:880--886, 1973.

\bibitem[GB08]{GB08}
M.~Grant and S.~Boyd.
\newblock Graph implementations for nonsmooth convex programs.
\newblock In V.~Blondel, S.~Boyd, and H.~Kimura, editors, {\em Recent Advances
  in Learning and Control}, Lecture notes in control and information Sciences,
  pages 95--110. Springer-Verlag Limited, Berlin Heidelberg, 2008.
\newblock \url{http://stanford.edu/~boyd/graph_dcp.html}.

\bibitem[GB14]{GB14}
M.~Grant and S.~Boyd.
\newblock {CVX}: {MATLAB} software for disciplined convex programming, version
  2.1.
\newblock \url{http://cvxr.com/cvx}, March 2014.

\bibitem[Gij05]{Gij05}
D.~Gijswijt.
\newblock {\em Matrix algebras and semidefinite programming techniques for
  codes}.
\newblock PhD thesis, University of Amsterdam, 2005.

\bibitem[Gij09]{Gij09}
D.~Gijswijt.
\newblock Block diagonalization for algebra's associated with block codes,
  2009.
\newblock URL: \url{https://arxiv.org/abs/0910.4515}.

\bibitem[GMS12]{GMS12}
D.~Gijswijt, H.~Mittelmann, and A.~Schrijver.
\newblock Semidefinite code bounds based on quadruple distances.
\newblock {\em IEEE Trans. Inf. Theory}, 58(5):2697--2705, 2012.
\newblock \href {https://doi.org/10.1109/TIT.2012.2184845}
  {\path{doi:10.1109/TIT.2012.2184845}}.

\bibitem[GST06]{GST06}
D.~Gijswijt, A.~Schrijver, and H.~Tanaka.
\newblock New upper bounds for nonbinary codes based on the {Terwilliger}
  algebra and semidefinite programming.
\newblock {\em J. Comb. Theory Ser. A.}, 113(8):1719--1731, 2006.
\newblock \href {https://doi.org/10.1016/j.jcta.2006.03.010}
  {\path{doi:10.1016/j.jcta.2006.03.010}}.

\bibitem[HHGY20]{HHGY20}
L.~Hou, B.~Hou, S.~Gao, and W.-H. Yu.
\newblock New code upper bounds for the folded n-cube.
\newblock {\em J. Comb. Theory Ser. A.}, 172:105182, 2020.
\newblock \href {https://doi.org/10.1016/j.jcta.2019.105182}
  {\path{doi:10.1016/j.jcta.2019.105182}}.

\bibitem[KT13]{KT13}
H.-K. Kim and P.-T. Toan.
\newblock Improved semidefinite programming bound on sizes of codes.
\newblock {\em IEEE Trans. Inf. Theory}, 59(11):7337--7345, 2013.
\newblock \href {https://doi.org/10.1109/TIT.2013.2277714}
  {\path{doi:10.1109/TIT.2013.2277714}}.

\bibitem[LA18]{LA18}
C.-Y. Lai and A.~Ashikhmin.
\newblock Linear programming bounds for entanglement-assisted quantum
  error-correcting codes by split weight enumerators.
\newblock {\em IEEE Trans. Inf. Theory}, 64(1):622--639, 2018.
\newblock \href {https://doi.org/10.1109/TIT.2017.2711601}
  {\path{doi:10.1109/TIT.2017.2711601}}.

\bibitem[Lau07]{Lau07}
M.~Laurent.
\newblock Strengthened semidefinite programming bounds for codes.
\newblock {\em Math. Program.}, 109:239--261, 2007.
\newblock \href {https://doi.org/10.1007/s10107-006-0030-3}
  {\path{doi:10.1007/s10107-006-0030-3}}.

\bibitem[LHL16]{LHL16}
C.-Y. Lai, M.-H. Hsieh, and H.-F. Lu.
\newblock On the {MacWilliams} identity for classical and quantum convolutional
  codes.
\newblock {\em IEEE Trans Commun.}, 64(8):3148--3159, 2016.
\newblock \href {https://doi.org/10.1109/TCOMM.2016.2585641}
  {\path{doi:10.1109/TCOMM.2016.2585641}}.

\bibitem[LPS17]{LPS17}
B.~Litjens, S.~Polak, and A.~Schrijver.
\newblock Semidefinite bounds for nonbinary codes based on quadruples.
\newblock {\em Des. Codes Cryptogr.}, 84:87--100, 2017.
\newblock \href {https://doi.org/10.1007/s10623-016-0216-5}
  {\path{doi:10.1007/s10623-016-0216-5}}.

\bibitem[MdOF18]{caluza2018improving}
F.~C. Machado and F.~M. de~Oliveira~Filho.
\newblock Improving the semidefinite programming bound for the kissing number
  by exploiting polynomial symmetry.
\newblock {\em Exp. Math.}, 27(3):362--369, 2018.
\newblock \href {https://doi.org/10.1080/10586458.2017.1286273}
  {\path{doi:10.1080/10586458.2017.1286273}}.

\bibitem[MEL02]{MEL02}
B.~Mounits, T.~Etzion, and S.~Litsyn.
\newblock Improved upper bounds on sizes of codes.
\newblock {\em IEEE Trans. Inf. Theory}, 48(4):880--886, 2002.
\newblock \href {https://doi.org/10.1109/18.992776}
  {\path{doi:10.1109/18.992776}}.

\bibitem[MN11]{MN11}
O.~R. Musin and H.~Nozaki.
\newblock Bounds on three- and higher-distance sets.
\newblock {\em Eur. J. Comb.}, 32(8):1182--1190, 2011.
\newblock \href {https://doi.org/10.1016/j.ejc.2011.03.003}
  {\path{doi:10.1016/j.ejc.2011.03.003}}.

\bibitem[MS77]{MS77}
F.J. MacWilliams and N.J.A. Sloane.
\newblock {\em The theory of error correcting codes}.
\newblock North Holland Publishing Co., North Holland, 1977.

\bibitem[{\"O}st19]{Ost19}
P.~R.~J. {\"O}sterg{\aa}rd.
\newblock The sextuply shortened binary golay code is optimal.
\newblock {\em Des. Codes Cryptogr.}, 87(2):341--347, 2019.
\newblock \href {https://doi.org/10.1007/s10623-018-0532-z}
  {\path{doi:10.1007/s10623-018-0532-z}}.

\bibitem[Pol19a]{Pol19_1}
S.~C. Polak.
\newblock {\em New methods in coding theory: Error-correcting codes and the
  Shannon capacity}.
\newblock PhD thesis, University of Amsterdam, 2019.

\bibitem[Pol19b]{Pol19_2}
S.~C. Polak.
\newblock Semidefinite programming bounds for constant-weight codes.
\newblock {\em IEEE Trans. Inf. Theory}, 65(1):28--38, 2019.
\newblock \href {https://doi.org/10.1109/TIT.2018.2854800}
  {\path{doi:10.1109/TIT.2018.2854800}}.

\bibitem[Sch79]{Sch79}
A.~Schrijver.
\newblock A comparison of the {Delsarte} and {Lovász} bounds.
\newblock {\em IEEE Trans. Inf. Theory}, 25(4):425--429, 1979.
\newblock \href {https://doi.org/10.1109/TIT.1979.1056072}
  {\path{doi:10.1109/TIT.1979.1056072}}.

\bibitem[Sch05]{Sch05}
A.~Schrijver.
\newblock New code upper bounds from the {Terwilliger} algebra and semidefinite
  programming.
\newblock {\em IEEE Trans. Inf. Theory}, 51(8):2859--2866, 2005.
\newblock \href {https://doi.org/10.1109/TIT.2005.851748}
  {\path{doi:10.1109/TIT.2005.851748}}.

\bibitem[Sim95]{Sim95}
J.~Simonis.
\newblock {MacWilliams} identities and coordinate partitions.
\newblock {\em Linear Algebra Its Appl.}, 216:81--91, 1995.
\newblock \href {https://doi.org/10.1016/0024-3795(93)00106-A}
  {\path{doi:10.1016/0024-3795(93)00106-A}}.

\bibitem[Ter92]{Ter92}
P.~Terwilliger.
\newblock The subconstituent algebra of an association scheme, (part i).
\newblock {\em J. Algebr. Comb.}, 1:363--388, 1992.
\newblock \href {https://doi.org/10.1023/A:1022494701663}
  {\path{doi:10.1023/A:1022494701663}}.

\bibitem[Ter93]{Ter93}
P.~Terwilliger.
\newblock The subconstituent algebra of an association scheme, (part ii).
\newblock {\em J. Algebr. Comb.}, 2:73--103, 1993.
\newblock \href {https://doi.org/10.1023/A:1022480715311}
  {\path{doi:10.1023/A:1022480715311}}.

\bibitem[TLY22]{PLY22}
P.-C. Tseng, C.-Y. Lai, and W.-H. Yu.
\newblock Improved semidefinite programming bounds for binary codes by split
  distance enumerations.
\newblock In {\em 2022 IEEE International Symposium on Information Theory
  (ISIT)}, pages 3073--3078, 2022.
\newblock \href {https://doi.org/10.1109/ISIT50566.2022.9834515}
  {\path{doi:10.1109/ISIT50566.2022.9834515}}.

\end{thebibliography}
\end{document}